\documentclass[11pt,reqno]{amsart}

\newtheorem{theorem}{Theorem}[section]
\newtheorem{definition}[theorem]{Definition}
\newtheorem{proposition}[theorem]{Proposition}
\newtheorem{assumption}[theorem]{Assumption}
\newtheorem{corollary}[theorem]{Corollary}
\newtheorem{lemma}[theorem]{Lemma}

\allowdisplaybreaks
\makeatletter

\title[Dynamic Pareto Optima in Multi-Period Pure-Exchange Economies]{Dynamic Pareto Optima\vspace{0.25cm}\\in Multi-Period Pure-Exchange Economies\vspace{0.3cm}}

\author[Brandon Tam, Mario Ghossoub, and Silvana M.\ Pesenti]{Brandon Tam\vspace{0.15cm}\\ University of Toronto\vspace{0.65cm}\\
 {Mario Ghossoub\vspace{0.15cm}\\ University of Waterloo\vspace{0.65cm}\\ 
 {Silvana M.\ Pesenti\vspace{0.15cm}\\ University of Toronto\vspace{0.65cm}\\  
 \today}}}

\address{{\bf Brandon Tam}: University of Toronto -- Department of Statistical Sciences -- 700 University Ave, 9th Floor, Toronto, ON, M5G 1X6 -- Canada\vspace{0.1cm}}\email{\href{mailto:brandontam.tam@mail.utoronto.ca}{brandontam.tam@mail.utoronto.ca}\vspace{0.2cm}}

\address{{\bf Mario Ghossoub}: University of Waterloo -- Department of Statistics and Actuarial Science -- 200 University Ave.\ W.\ -- Waterloo, ON, N2L 3G1 -- Canada\vspace{0.1cm}}
\email{\href{mailto:mario.ghossoub@uwaterloo.ca}{mario.ghossoub@uwaterloo.ca}\vspace{0.2cm}}

\address{{\bf Silvana M. Pesenti}: University of Toronto -- Department of Statistical Sciences -- 700 University Ave, 9th Floor, Toronto, ON, M5G 1X6 -- Canada\vspace{0.1cm}}
\email{\href{mailto:silvana.pesenti@utoronto.ca}{silvana.pesenti@utoronto.ca}}

\thanks{\textit{JEL Classification:} C02, C61, D61, D86.\vspace{0.2cm}}
\thanks{\textit{MSC2020 Classification:} 91B15, 91B30, 91B70, 91G70. \vspace{0.2cm}}
\thanks{\textit{Key Words and Phrases:} Pareto Optima, Risk Sharing, Dynamic Risk Measures, Distortion Risk Measures, Comonotone Improvement.\vspace{0.2cm}}

\thanks{Brandon Tam acknowledges financial support from the Ontario Graduate Scholarship. Mario Ghossoub acknowledges financial support from the Natural Sciences and Engineering Research Council of Canada (RGPIN-2024-03744). Silvana Pesenti acknowledges financial support from the Natural Sciences and Engineering Research Council of Canada (RGPIN-2025-05847 and ALLRP 580632-22).}

\usepackage[T1]{fontenc}
\usepackage[round]{natbib}
\usepackage{amsmath}
\usepackage{amssymb}
\usepackage{amsthm}
\usepackage{fancyhdr}
\usepackage{enumitem}
\usepackage{comment}
\usepackage{relsize}
\usepackage{nicefrac}
\usepackage{bm}
\usepackage{mathrsfs}
\usepackage{graphicx}
\usepackage[utf8]{inputenc}
\usepackage{cancel}
\usepackage{mathtools}
\usepackage{tikz,xcolor}
\usepackage{dirtytalk}
\usetikzlibrary{cd,decorations.pathreplacing,matrix,arrows,positioning,automata,shapes,shadows,calc,fadings,decorations,through,intersections}
\usepackage{float}
\usepackage{subcaption}
\usepackage{multirow}
\usepackage[left=2.6cm,top=2.6cm,right=2.6cm,bottom=2.6cm,head=2.8cm,headsep=1cm, foot=3cm]{geometry}
\usepackage{hyperref}
\usepackage[skip=3mm,indent=12pt]{parskip}

\usepackage{booktabs}
\usepackage{siunitx}
\usepackage[colorinlistoftodos,textsize=small]{todonotes}
\usepackage{cleveref}
\usepackage{caption}
\usepackage{anyfontsize}
\usepackage{yhmath}
\usepackage{comment}
\usepackage{dsfont}

\definecolor{darkblue}{rgb}{0.1,0.1,0.9}
\definecolor{darkred}{rgb}{0.9,0.1,0.1}

\hypersetup{colorlinks, allcolors= darkblue}

\newcommand{\B}{\mathcal{B}}
\newcommand{\F}{\mathcal{F}}
\newcommand{\T}{\mathcal{T}}
\newcommand{\K}{\mathcal{K}}

\renewcommand{\L}{{\mathcal{L}}}
\newcommand{\bL}{{\boldsymbol{\mathcal{L}}}}

\newcommand{\A}{\mathcal{A}}
\newcommand{\IR}{\mathcal{I}\mathcal{R}}
\newcommand{\PO}{\mathcal{P}\mathcal{O}}
\newcommand{\CPO}{\mathcal{C}\mathcal{P}\mathcal{O}}

\newcommand{\MIR}{\mathcal{M}\mathcal{I}\mathcal{R}}
\newcommand{\MPO}{\mathcal{M}\mathcal{P}\mathcal{O}}
\newcommand{\CMPO}{\mathcal{C}\mathcal{M}\mathcal{P}\mathcal{O}}

\newcommand{\DIR}{\mathcal{D}\mathcal{I}\mathcal{R}}
\newcommand{\DPO}{\mathcal{D}\mathcal{P}\mathcal{O}}
\newcommand{\CDPO}{\mathcal{C}\mathcal{D}\mathcal{P}\mathcal{O}}

\newcommand{\V}{{\boldsymbol{V}}}
\newcommand{\X}{{\boldsymbol{X}}}
\newcommand{\W}{{\boldsymbol{W}}}

\newcommand{\Y}{{\boldsymbol{Y}}}

\newcommand{\Z}{{\boldsymbol{Z}}}
\newcommand{\0}{{\boldsymbol{0}}}

\newcommand{\mfg}{{\mathfrak{g}}}
\newcommand{\G}{\mathfrak{G}}
\newcommand{\mfT}{\mathfrak{T}}

\renewcommand{\epsilon}{{\varepsilon}}

\newcommand{\Id}{{\mathds{1}}}

\newcommand{\Rmnum}[1]{\expandafter\@slowromancap\romannumeral #1@}

\newcommand{\eqd}{\,{\overset{d}{=}}\,}

\newcommand{\brho}{\overline{\rho}}

\newcommand{\bR}{{\overline{R}}}

\newcommand{\mN}{{\mathcal{N}}}

\DeclareMathOperator*{\essinf}{ess\,inf}

\newcommand{\E}{{\mathbb{E}}}
\newcommand{\N}{{\mathbb{N}}}
\renewcommand{\P}{{\mathbb{P}}}

\newcommand{\R}{{\mathbb{R}}}

\renewcommand{\tilde}{\widetilde}
\renewcommand{\hat}{\widehat}

\newcommand{\cvx}{\preccurlyeq_{\hspace{- 0.4 mm} _{cx}}}

\newcommand{\scvx}{\prec_{\hspace{- 0.4 mm} _{cx}}}

\DeclareMathOperator*{\argmin}{\arg\min}

\newcommand{\ES}{{\mathrm{ES}}}

\begin{document}

\begin{abstract}
We study a problem of optimal allocation in a discrete-time multi-period pure-exchange economy, where agents have preferences over stochastic endowment processes that are represented by strongly time-consistent dynamic risk measures. We introduce the notion of dynamic Pareto-optimal allocation processes and show that such processes can be constructed recursively starting with the allocation at the terminal time. We further derive a comonotone improvement theorem for allocation processes, and we provide a recursive approach to constructing comonotone dynamic Pareto optima when the agents' preferences are coherent and satisfy a property that we call \emph{equidistribution-preserving}. In the special case where each agent's dynamic risk measure is of the distortion type, we provide a closed-form characterization of comonotone dynamic Pareto optima. We illustrate our results in a two-period setting.
\end{abstract} 

\maketitle

\sloppy


\bigskip

\section{Introduction} \label{sec:intro}

The study of efficiency in pure-exchange economies has served as the foundation for the characterization of Pareto-optimal (PO) allocations of an aggregate endowment among a population of economic agents. The historical roots of this field can be traced back to the seminal work of \citet{borch1960safety, borch1962equilibrium} and \citet{wilson1968theory}, within the framework of Expected Utility (EU) Theory with risk-averse agents. Under this framework, the celebrated Borch rule states that an allocation is PO if and only if the ratio of marginal utilities between the agents is constant. Furthermore, optimal allocations are non-decreasing deterministic functions of the aggregate endowment. In other words, PO allocations are comonotone when agents are risk averse. This has motivated a series of developments related to comonotone improvement results in the literature, including \citet{landsberger1994co}, \cite{chateauneuf1997review}, \citet{carlier2012pareto}, and \citet{denuit2025comonotonicity}, for instance.

Characterization results for Pareto optima have also been obtained for models beyond the EU framework. For instance, economies with ambiguity-sensitive preferences of the Choquet-Expected Utility (CEU) type \citep{schmeidler1989subjective} were examined by \cite{chateauneuf2000optimal}, \cite{dana2004ambiguity}, \cite{de2011ambiguity}, and \cite{beissner2023optimal}. \cite{dana2002equilibria, dana2004ambiguity} and \cite{de2011ambiguity} examine economies with Maxmin Expected Utility (MEU) preferences axiomatized by \cite{gilboa1989maxmin}. Exchange economies with variational preferences (e.g., \cite{maccheroni2006ambiguity}) were studied by \cite{dana2010overlapping} and \cite{ravanelli2014comonotone}, for instance. \cite{strzalecki2011efficient} consider general ambiguity-sensitive convex preferences. \cite{embrechts2018quantile, embrechts2020quantile} study PO risk exchange for agents with quantile preferences. Some recent work has focused on obtaining crisp characterizations that can be numerically implemented \citep{liu2020weighted, liu2022inf, ghossoub2026efficiency}. \citet{embrechts2018quantile} examine preferences given by quantile-based risk measures (specifically, the Range-Value-at-Risk) and the work is generalized to tail risk measures in \citet{liu2022inf}. In \citet{liu2020weighted}, the author studies a weighted risk-sharing problem and gives an explicit characterization when agent preferences are given by distortion risk measures. This characterization is further generalized to positively homogeneous monetary utilities (or equivalently, law-invariant coherent risk measures) in \citet{ghossoub2026efficiency}. \cite{jouini2008optimal} and \cite{filipovic2008optimal} also studied the case of economies with law-invariant monetary utilities. Recently, \cite{hara2026sharing} study Pareto optimality in pure-pure exchange economies with heterogeneous smooth ambiguity preferences (e.g., \cite{klibanoff2005smooth,denti2022model}).

While the aforementioned literature considered a single-period framework, which we henceforth refer to as the static setting, the nature of investment and consumption is dynamic in reality. As a result, the literature has also devoted some effort to the study of the multi-period optimal-allocation problem. For instance, in a discrete-time multi-period setting with agents having recursive utility preferences, the optimal consumption problem is studied in \citet{benhabib1988dynamics}, and a generalized market with multiple consumption goods is considered in \citet{dana1990structure}. For more sophisticated preferences that account for uncertainty in future consumption, see \citet{epstein1989substitution}, \citet{chew1990nonexpected}, \citet{chew1991recursive}, and \citet{kan1995structure}, for instance. Further work in the discrete-time setting include \citet{epstein2004intertemporal}, who study equilibrium asset prices for consumption and investment problems; and \citet{anderson2005dynamics}, who define and characterize steady states in infinite horizon problems. For a broad overview and summary of results in the discrete-time setting, we refer to \citet{backus2004exotic}.

In the continuous-time setting, recursive preferences are given by stochastic differential utilities (SDU), which have been introduced in \citet{duffie1992stochastic}, and their applications to asset pricing are first discussed in \citet{duffie1992asset}. In \citet{duffie1994efficient}, the existence of Pareto optima under SDU is established, and a characterization is derived. This characterization is improved in \citet{dumas1998efficient}, who show that Pareto optima can be computed by optimizing one value function, instead of one for each agent. For problems with consumption and investment, optimal strategies in various market settings are studied in a series of papers by Schroder and Skiadas. The case of unconstrained trading is considered in \citet{schroder1999optimal}, and its results are extended to agents with trading constraints in \citet{schroder2003optimal} and \citet{skiadas2007dynamic}. Non-tradable income is introduced in \citet{schroder2005lifetime}, and discontinuous market information is studied in \citet{schroder2008optimality}.

In the above literature on optimal consumption, or optimal allocations, in a dynamic framework, Pareto efficiency is defined from the perspective of time-$0$ preferences. Future consumption streams, future endowment streams, or future risk profiles are first discounted to time $0$, and then evaluated. Our contention is that this leads to an inherently myopic view that fails to capture the dynamic structure of endowment profiles. We propose a novel approach that allows to incorporate the dynamic structure of risk profiles into the decision-making process, through dynamic risk measures; and we introduce the associated notions of dynamic individual rationality and dynamic Pareto optimality.

Specifically, in this paper, we assume that each agent's initial endowment is given by a discrete-time stochastic process, and the initial endowments are aggregated and reallocated at each point in time. Moreover, agent preferences over discrete-time stochastic processes are represented by strongly time-consistent dynamic risk measures, which are extensively studied in the risk management literature \citep{cheridito2006dynamic, ruszczynski2010risk,bielecki2018unified, moresco2024uncertainty,coache2024reinforcement, dentcheva2024risk, pesenti2025risk,bielecki2025time}. Unlike the existing literature on recursive utilities, we assume that agent preferences are (conditionally) translation invariant, which is a classical assumption for risk measures. Therefore, the risk of a deterministic endowment is equal to the size of the endowment and agent preferences admit a recursive representation, as in \cite{cheridito2006dynamic} and \cite{ ruszczynski2010risk}. This recursive representation allows us to define a notion of optimality for all time points, contrasting with classical notions of efficiency which are defined from the perspective of time $0$ only, and hence are inherently myopic. Our new notion of optimality, which we call \emph{dynamic Pareto optimality} (or just Pareto optimality for simplicity), not only considers the risk of an allocation process at time $0$, but also incorporates the dynamic structure of risk profiles into the decision-making process.

In our first main result (Theorem \ref{theorem:char_dpo}), we show that an allocation is PO if and only if it is the solution to a series of recursive (backward in time) optimization problems. When the problem is static, we recover the classical result that PO allocations optimize a given social welfare function. In Section \ref{sec:cdpo}, we extend the classical notion of comonotonicity to the dynamic setting and derive a comonotone improvement theorem for dynamic risk measures that preserve the convex order. We further study Pareto optima in the comonotone market, and we show that each PO allocation in the comonotone market is also PO in the unconstrained market, therefore providing justification for studying the comonotone market. Our second main result (Theorem \ref{theorem:char_cdpo2}) provides a crisp characterization of comonotone Pareto optima when agent preferences are coherent and satisfy a property that we call \emph{equidistribution-preserving}. This allows for a useful algorithmic implementation, which we demonstrate for a two-period setting in Section \ref{sec:examples}. Our numerical example highlights the differences between dynamic and static models, and it demonstrates how future risks affect allocations in earlier periods.

The rest of this paper is structured as follows. Section \ref{sec:prelim} introduces the necessary notation and recalls some relevant properties of dynamic risk measures. Section \ref{sec:po_dynamic} presents the dynamic optimal-allocation problem and defines a notion of Pareto optimality for this framework. Section \ref{sec:char_dpo} characterizes Pareto optima, and our main characterization of comonotone Pareto optima is provided in Section \ref{sec:cdpo}. A numerical example is provided in Section \ref{sec:num_ex}, and Section \ref{sec:conclusion} concludes. Some background material is presented in Appendix \ref{appendix:single_period}.

\bigskip

\section{Notation and Preliminaries} \label{sec:prelim}

\medskip

\subsection{Notation} \label{sec:notation}

We consider a finite (known) time horizon $T\in\N^+$ and a filtered non-atomic probability space $(\Omega, \F, \{\F_t\}_{t\in\{0, 1, \ldots, T\}}, \P)$. For each $t\in\{0, 1, \ldots, T\}$, define $\L_t^{\infty}:=L^{\infty}(\Omega, \F_t, \P)$ to be the set of essentially bounded and $\F_t$-measurable random variables, and denote by $\L_{t+}^{\infty}\subseteq L^{\infty}(\Omega, \F_t, \P)$ the subset of non-negative random variables. Furthermore, let $\L_{0,T}^{\infty}:=\L_0^{\infty}\times \L_1^{\infty}\times \cdots \times \L_T^{\infty}$ be the space of essentially bounded discrete-time stochastic processes, and denote by $X_{0:T}:=\{X_0, X_1, \ldots, X_T\}\in \L_{0,T}^{\infty}$ a real-valued stochastic process. For $0\leq t<s\leq T$, we use the notation $\L_{t,s}^{\infty}:=0\times \cdots \times 0\times \L_t^{\infty}\times \cdots \times \L_s^{\infty}\times 0 \times \cdots \times 0$ and denote by $X_{t:s}:=\{0, \ldots, 0, X_t, \ldots, X_s, 0, \ldots, 0\}\in \L_{t,s}^{\infty}$ the projection of $X_{0:T}$ onto the space $\L_{t,s}^{\infty}$. Similarly, we denote the space of $n$-dimensional stochastic processes by $\bL_{0,T}^{\infty}$. We further denote random vectors with capital boldface letters, real-valued vectors with lowercase boldface letters, and use $\X_{0:T}$ for vector valued $n$-dimensional stochastic processes, where for each $t\in\{0, 1, \ldots, T\}$, $\X_t:=(X_t^{(1)}, \ldots, X_t^{(n)})$ with $X_t^{(i)} \in \L_t^\infty$ for all $i \in\mN:=\{1, \ldots, n\}$. Throughout, (in)equalities between random vectors and stochastic processes are to be understood component-wise and in a $\P$-almost sure sense. Finally, we denote by $\eqd$ equality in distribution.

\medskip

\subsection{Risk Measures} \label{sec:rm}

We assume that each agent has a preference given by a dynamic risk measure. Before defining a dynamic risk measure, we first recall the definition of a conditional risk measure. 

\medskip

\begin{definition} [Conditional Risk Measure] \label{definition:cond_rm} 
A conditional risk measure is a functional $\rho_{t,s} \colon  \L_{t+1,s}^{\infty}\to\L_{t}^{\infty}$, where $t\in\mfT:=\{0, 1, \ldots, T-1\}$ and $t< s\leq T$. 
\end{definition}

\smallskip

A conditional risk measure may satisfy some of the following properties. These properties were first introduced in \citet{artzner1999coherent, kusuoka2001law, follmer2002convex} in the context of static risk measures.

\medskip

\begin{definition} [Properties of Conditional Risk Measures] \label{definition:properties}

For $t\in\mfT$ and $t<s\leq T$, let $Y_{t+1:s}$, $\Z_{t+1:s}\in\L^{\infty}_{t+1,s}$. A conditional risk measure $\rho_{t:s}$ is:

\begin{enumerate}[label = $(\roman*)$]
\item Monotone if $\rho_{t,s}(Y_{t+1:s})\leq \rho_{t,s}(Z_{t+1:s})$ whenever $Y_{t+1:s}\leq Z_{t+1:s}$. If, in addition, $Y_{t+1:s}\leq Z_{t+1:s}$ and $\P(Y_{t+1:s}<Z_{t+1:s})>0$ implies $\P\big(\rho_{t,s}(Y_{t+1:s})<\rho_{t,s}(Z_{t+1:s})\big)>0$, then $\rho_{t,s}$ is strictly monotone. 
\item Positive homogeneous if for all $c\in\L_{t+}^{\infty}$ and $Y_{t+1:s}\in\L^{\infty}_{t+1,s}$, $\rho_{t,s}(c\, Y_{t+1:s})=c\, \rho_{t,s}(Y_{t+1:s})$. 
\item Subadditive if for all $Y_{t+1:s}, \, Z_{t+1:s}\in\L^{\infty}_{t+1,s}$, $\rho_{t,s}(Y_{t+1:s}+Z_{t+1:s})\leq \rho_{t,s}(Y_{t+1:s})+\rho_{t,s}(Z_{t+1:s})$. 
\item Translation invariant if for all $Y_{t:s}\in\L_{t,s}^{\infty}$, $\rho_{t,s}\big((Y_{t} + Y_{t+1}, Y_{t+2:s})\big)=Y_t+\rho_{t,s}(Y_{t+1:s})$.
\item Convex if for all $\lambda\in\L_t^{\infty}$ with $0\leq\lambda\leq 1$ and $Y_{t+1:s}, \, Z_{t+1:s}\in\L^{\infty}_{t+1,s}$, $\rho_{t,s}\big(\lambda\,  Y_{t+1:s}+(1-\lambda) \, Z_{t+1:s}\big)\leq\lambda \, \rho_{t,s}(Y_{t+1:s})+(1-\lambda) \rho_{t,s}(Z_{t+1:s})$. 
\item Normalized if $\rho_{t,s}(\0)=0$, where $\0\in\R^{s-t}$ is the zero vector. 
\item Equidistribution-preserving $\rho_{t,s}(Y_{t+1:s})\, \eqd \rho_{t,s}(Z_{t+1:s})$ whenever $Y_{t+1:s}\eqd Z_{t+1:s}$. 
\end{enumerate}
\end{definition}

When $s=t+1$, we omit the second subscript of the conditional risk measure, i.e. $\rho_{t, t+1}:= \rho_t$, and call $\rho_t$ a one-step conditional risk measure. For $t=0$ and $s=1$, the risk measure $\rho_0$ is a static risk measure, and the property of equidistribution-preserving reduces to the classical notion of law invariance. Moreover, a conditional risk measure is coherent if it is monotone, positive homogeneous, subadditive, and translation invariant. Next, we recall the definition of a dynamic risk measure. 

\medskip

\begin{definition} [Dynamic Risk Measure] \label{definition:dynamic_rm} A dynamic risk measure on $\mfT$ is a sequence of conditional risk measures $\{\rho_{t,T}\}_{t \in \mfT}$. 
\end{definition}

A dynamic risk measure satisfies one of the properties in Definition \ref{definition:properties} if $\rho_{t,T}$ satisfies the property for all $t\in\mfT$. Next, we recall the notion of strong time consistency (which we henceforth refer to as just time consistency) and the recursive relation for time-consistent dynamic risk measures \citep{cheridito2006dynamic, ruszczynski2010risk}. Time consistency means that if one stochastic process has a larger risk than another at a future time $\theta$ and the two processes are almost surely equal at all times $t$ satisfying $\tau+1\leq u\leq \theta$, then the former process should also have a larger risk at time $\tau$. This ensures that in this case the timing of risk evaluation does not change the ordering of the risk assessment.

\medskip

\begin{definition} [Time Consistency] \label{definition:strong_tc} A dynamic risk measure $\{\rho_{t,T}\}_{t\in\mfT}$ is time consistent if for all $0\leq\tau<\theta<T$ and $X_{\tau+1:T}, \, Y_{\tau+1:T}\in\L_{\tau+1,T}^{\infty}$, the conditions
\smallskip
\begin{equation*} 
 X_t=Y_t, \, \text{ for all } \ \tau+1\leq t\leq \theta,
 \ \ \text{ and } \ \ 
 \rho_{\theta, T}(X_{\theta+1,T})\leq \rho_{\theta, T}(Y_{\theta+1,T})
\end{equation*} 
\noindent imply that 
\begin{equation*}
\rho_{\tau, T}(X_{\tau+1,T})\leq \rho_{\tau, T}(Y_{\tau+1,T}). 
\end{equation*}
\end{definition}

It is well known that time-consistent dynamic risk measures admit the following recursive representation. A proof of this result can be found in \citet{cheridito2006dynamic} and \citet{ruszczynski2010risk}.

\medskip

\begin{theorem} [Recursive Relation] \label{theorem:recursion} Let $\{\rho_{t,T}\}_{t\in\mfT}$ be a normalized, monotone, and translation-invariant dynamic risk measure. Then, $\{\rho_{t,T}\}_{t\in\mfT}$ is time consistent if and only if there exists a family of normalized, monotone, and translation-invariant one-step conditional risk measures $\{\rho_t\}_{t\in\mfT}$ such that for all $t\in\mfT$ and $X_{t+1:T}\in\L_{t+1,T}^{\infty}$,
\begin{equation}
\label{eq:drm-recursion}
\rho_{t,T}(X_{t+1:T}) =\rho_t\Big(X_{t+1}+\rho_{t+1}\big(X_{t+2}+\cdots+
\rho_{T-1}(X_T)\big)\Big). 
\end{equation}
\end{theorem}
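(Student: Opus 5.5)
For the \emph{if} direction, assume the recursion \eqref{eq:drm-recursion} holds with normalized, monotone, translation-invariant one-step maps $\{\rho_t\}_{t\in\mfT}$. First I will check, by backward induction on $\tau$, the nested identity
\[
\rho_{\tau,T}(X_{\tau+1:T})=\rho_{\tau}\Big(X_{\tau+1}+\rho_{\tau+1}\big(\cdots+\rho_{\theta-1}\big(X_\theta+\rho_{\theta,T}(X_{\theta+1:T})\big)\big)\Big)
\]
for all $\tau<\theta<T$. This follows directly from \eqref{eq:drm-recursion} applied at $\theta$: the inner tail of the nest is exactly $\rho_{\theta,T}(X_{\theta+1:T})$. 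Now take $X,Y$ with $X_t=Y_t$ for $\tau+1\le t\le\theta$ and $\rho_{\theta,T}(X_{\theta+1:T})\le\rho_{\theta,T}(Y_{\theta+1:T})$. The two nested expressions differ only in the innermost argument. Applying monotonicity of $\rho_{\theta-1},\dots,\rho_\tau$ successively, and using that adding the common term $X_t=Y_t$ preserves the order, gives $\rho_{\tau,T}(X)\le\rho_{\tau,T}(Y)$. So time consistency holds.

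For the \emph{only if} direction, assume time consistency. I define the one-step maps by restriction, $\rho_t(Z):=\rho_{t,T}((Z,0,\dots,0))$ for $Z\in\L_{t+1}^\infty$. These inherit normalization, monotonicity and translation invariance from $\rho_{t,T}$: the translation property for $\F_t$-measurable shifts is (iv) with only the $(t+1)$-coordinate nonzero. It remains to prove the one-step recursion
\[
\rho_{t,T}(X_{t+1:T})=\rho_t\big(X_{t+1}+\rho_{t+1,T}(X_{t+2:T})\big),
\]
since iterating it yields \eqref{eq:drm-recursion}. To do this, set $Y:=(X_{t+1}+\rho_{t+1,T}(X_{t+2:T}),0,\dots,0)\in\L_{t+1,T}^\infty$. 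By translation invariance at time $t+1$ and normalization,
\[
\rho_{t+1,T}(Y_{t+2:T})=\rho_{t+1,T}(\0)=0.
\]
We also need to compare $\rho_{t+1,T}(X_{t+2:T})$, shifted appropriately, with the value for $Y$. The natural trick is to move the $\F_{t+1}$-measurable amount $\rho_{t+1,T}(X_{t+2:T})$ into the time-$(t+1)$ coordinate. Consider $X':=(X_{t+1}+\rho_{t+1,T}(X_{t+2:T}),X_{t+2:T})$ and compare it with $(X_{t+1},X_{t+2:T})$. The cleanest route is to apply time consistency in both directions with $\theta=t+1$ to the pair $X$ and $\tilde Y:=(X_{t+1},\,\rho_{t+1,T}(X_{t+2:T}),0,\dots)$. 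These agree at time $t+1$, and their time-$(t+1)$ risks coincide because translation invariance gives $\rho_{t+1,T}$ of a process concentrated at $t+2$ with $\F_{t+1}$-measurable value $c$ as $c+\rho_{t+1,T}(\0)=c$. Hence $\rho_{t,T}(X)=\rho_{t,T}(\tilde Y)$. Finally, translation invariance at time $t$ (iv) lets me collapse $\tilde Y$ into $(X_{t+1}+\rho_{t+1,T}(X_{t+2:T}),0,\dots)$, because an $\F_{t+1}$-measurable payment at $t+2$ should be equivalent to a payment at $t+1$. This gives $\rho_t(X_{t+1}+\rho_{t+1,T}(X_{t+2:T}))$.

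I expect the main obstacle to be this last collapsing step, i.e.\ justifying that an $\F_{t+1}$-measurable amount placed at time $t+2$ has the same time-$t$ risk as the same amount placed at $t+1$. I would handle it by a second application of time consistency with $\theta=t+1$. The processes $(X_{t+1},c,0,\dots)$ and $(X_{t+1}+c,0,\dots)$ do not agree at $t+1$, so I cannot compare them directly. Instead I compare both with respect to the shifted coordinate: write $(X_{t+1}+c,0,\dots)=(X_{t+1}+c+0,\dots)$ and use property (iv) at level $t+1$, which moves $\F_{t+1}$-measurable cash between coordinates $t+1$ and $t+2$, to show both have time-$(t+1)$ value with the same continuation. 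If the convention in (iv) makes this equivalence immediate, which it does as the paper writes it with $Y_t+Y_{t+1}$ merging, the argument closes. Otherwise I would cite \citet{cheridito2006dynamic} for the cash-additivity convention.
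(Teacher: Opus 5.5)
Your \emph{if} direction is correct. So is the first half of your \emph{only if} argument: comparing $X_{t+1:T}$ with $\tilde{Y}=(X_{t+1},\rho_{t+1,T}(X_{t+2:T}),0,\dots,0)$ at $\theta=t+1$ does give $\rho_{t,T}(X_{t+1:T})=\rho_{t,T}(\tilde{Y})$.

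The gap is the collapsing step $\rho_{t,T}\big((X_{t+1},c,0,\dots,0)\big)=\rho_{t,T}\big((X_{t+1}+c,0,\dots,0)\big)$ for $c\in\L_{t+1}^\infty$, and your justification misreads property (iv). For $\rho_{t,T}$, (iv) only lets you pull out $\F_t$-measurable amounts added to coordinate $t+1$. For $\rho_{t+1,T}$, (iv) concerns a functional on $\L_{t+2,T}^\infty$, which has no coordinate $t+1$, so it cannot ``move cash between coordinates $t+1$ and $t+2$''. It only yields $\rho_{t+1,T}((c,0,\dots,0))=c$. Transporting that identity to time $t$ via the paper's time consistency requires the two processes to coincide at coordinate $t+1$, and these do not.

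Nothing in the stated axioms links the two processes, and the step is in fact false. Take $T=2$, $A\in\F_1$ with $0<\P(A)<1$, $\rho_{1,2}(X_2):=\E[X_2\mid\F_1]$ and $\rho_{0,2}(X_1,X_2):=\E[X_1]+\ES_{0.9}\big(\E[X_2\mid\F_1]\big)$. Both maps are normalized, monotone and translation invariant in the sense of (iv). The pair is time consistent as the paper defines it: the only case is $\tau=0$, $\theta=1$, where $X_1=Y_1$ and $\E[X_2\mid\F_1]\le\E[Y_2\mid\F_1]$ give the required inequality. Yet $\rho_{0,2}(0,\Id_A)=\min\{10\,\P(A),1\}>\P(A)=\rho_{0,2}(\Id_A,0)$.

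Worse, no family of one-step maps satisfies the recursion here. At $t=1$ it forces $\rho_1=\E[\cdot\mid\F_1]$. At $t=0$ with $X_2=0$, normalization forces $\rho_0=\rho_{0,2}(\cdot,0)=\E$. Then $\rho_{0,2}(0,\Id_A)$ would have to equal $\P(A)$, which it does not. So the \emph{only if} direction cannot be proved from the definitions as stated, by your route or any other.

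The paper gives no proof of its own; it defers to Cheridito et al.\ and Ruszczy\'nski, whose conventions differ exactly at this point. In Ruszczy\'nski's setting, the time-$\theta$ functional $\rho_{\theta,T}(Z_\theta,\dots,Z_T)$ also evaluates the time-$\theta$ cash flow. Translation invariance there reads $\rho_{\theta,T}(Z_\theta,\dots,Z_T)=Z_\theta+\rho_{\theta,T}(0,Z_{\theta+1},\dots,Z_T)$, and time consistency asks for agreement only on $\tau,\dots,\theta-1$. The tail can then be replaced by its risk placed at coordinate $\theta$ itself, so no collapse is ever needed. Citing those papers for a ``cash-additivity convention'' therefore imports a different hypothesis rather than closing your step.

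To make your argument work in the paper's indexing, strengthen time consistency to the following: $X_s=Y_s$ for $\tau+1\le s\le\theta-1$ and $X_\theta+\rho_{\theta,T}(X_{\theta+1:T})\le Y_\theta+\rho_{\theta,T}(Y_{\theta+1:T})$ imply $\rho_{\tau,T}(X_{\tau+1:T})\le\rho_{\tau,T}(Y_{\tau+1:T})$. This implies the paper's notion. With it, compare $X_{t+1:T}$ directly with $(X_{t+1}+\rho_{t+1,T}(X_{t+2:T}),0,\dots,0)$ at $\theta=t+1$. Translation invariance of $\rho_{t+1,T}$ and normalization then give $\rho_{t,T}(X_{t+1:T})=\rho_t\big(X_{t+1}+\rho_{t+1,T}(X_{t+2:T})\big)$, and your iteration finishes the proof. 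Your monotonicity argument for the \emph{if} direction goes through unchanged for this stronger notion.
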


Theorem \ref{theorem:recursion} allows us to write any conditional risk measure as a sequence of one-step conditional risk measures. Moreover, any sequence of one-step conditional risk measures gives rise to a time-consistent dynamic risk measure through \eqref{eq:drm-recursion}. This recursive structure is essential to construct optimal allocation processes recursively backwards in time. 

\bigskip

\section{Pareto Optimality in a Dynamic Setting} \label{sec:po_dynamic}

In this section, we extend the static notions of individual rationality and Pareto optimality to the dynamic setting. We first describe the problem in Section \ref{sec:problem_dynamic} and define two different notions of Pareto optimality in Section \ref{sec:myopic_po} and Section \ref{sec:dpo}. The first notion, which we call myopic Pareto optimality, is motivated by the existing literature on recursive preferences, and only considers the risk of the process at time $0$. We discuss the shortfalls of this definition when agent preferences are translation invariant, and propose a second notion of optimality that is compatible with translation-invariant and time-consistent dynamic risk measures in Section \ref{sec:dpo}.  

\medskip

\subsection{Problem Setup} \label{sec:problem_dynamic}

We consider a problem with $n\in\N^+$ agents, and for each $i\in\mN$, we assume that the $i$-th agent has initial endowment $X_{1:T}^{(i)}:=\{X_1^{(i)}, \ldots, X_T^{(i)}\}\in\L_{1,T}^{\infty}$. Furthermore, for each $t\in\T:=\{1, \ldots T\}$, we denote by $\X_t:=(X_t^{(1)}, \ldots, X_t^{(n)})$ the vector of all agents' endowments at time $t$ and assume that $T\geq 2$ unless otherwise stated. The case of $T=1$ (i.e., the static problem), is well studied in the literature and we refer to Appendix \ref{appendix:single_period} for a collection of key results from the static case. Additionally, the information available to the agents is encapsulated in the filtration $\{\F_t\}_{t\in\{0, 1, \ldots, T\}}$, and we assume that $\F_0=\{\emptyset, \Omega\}$ is the trivial sigma algebra. At each time $t\in\T$, the $n$ agents pool together their endowments to create the so-called aggregate endowment $S_t:=\sum_{i\in\mN}X_t^{(i)}\in\L_t^{\infty}$, and we denote by $S_{1:T}:=\{S_1, \ldots, S_T\}$ the aggregate endowment process (or aggregate risk process in a risk-sharing problem). Under this setting, we have the following definition of an allocation. 

\medskip

\begin{definition} [Allocation - Dynamic] \label{definition:allocation_dyn}
A $n$-dimensional process $\Y_{1:T}\in\bL_{1,T}^{\infty}$ is an allocation of $S_{1:T}$ if 
\begin{equation} \label{eq:allocation}
    \sum_{i \in \mN} Y_t^{(i)} = S_t, \ \text{for all} \ t\in\T\,.
\end{equation}

\medskip

\noindent We denote by $\A_{1:T}$ the set of allocation processes. Moreover, if the (sub)-process $\Y_{r:s}$ satisfies \eqref{eq:allocation} for all $t\in\{r, \ldots, s\}$, then we say that $\Y_{r:s}$ is an allocation of $S_{r:s}$ and denote by $\A_{r:s}$ the set of allocations of $S_{r:s}$.  
\end{definition}

Note that for each $i\in\mN$ and $t\in\T$, the random variable $Y_t^{(i)}$ represents the portion of the time-$t$ aggregate endowment that is allocated to the $i$-th agent. The agents wish to choose $\Y_{1:T}$ based on individual preferences characterized by time-consistent dynamic risk measures with representation \eqref{eq:drm-recursion}, that is, for $i\in\mN$, agent-$i$ uses a dynamic risk measure $\{\rho_{t,T}^{(i)}\}_{t\in\mfT}$. Then, by Theorem \ref{theorem:recursion}, the $i$-th agent's preference at time $t$ is given by a normalized, monotone, and translation-invariant one-step conditional risk measure $\rho_t^{(i)}$ for all $t\in\mfT$ and $i\in\mN$. Consequently, we make the following assumption all throughout.

\medskip

\begin{assumption} \label{assumption:tc} For $i\in\mN$, the $i$-th agent's risk preference is given by a normalized, monotone, translation-invariant, and time-consistent dynamic risk measure $\{\rho_{t,T}^{(i)}\}_{t\in\mfT}$. 
\end{assumption} 

\medskip

\subsection{Myopic Pareto Optimality} \label{sec:myopic_po}

In this section, we define a notion of optimality that only considers the time $0$ conditional risk measure of each agent. We refer to this notion of optimality as ``myopic'' since the agents do not care about risk valuation at any future time point. 

\medskip

\begin{definition} [Myopic Individual Rationality] \label{definition:myopic_ir} An allocation $\Y_{1:T}\in\A_{1:T}$ is myopic individually rational (IR) if 
\begin{equation} \label{eq:myopic_ir}
    \rho_{0,T}^{(i)}(Y_{1:T}^{(i)})
        \leq
    \rho_{0,T}^{(i)}(X_{1:T}^{(i)})\,,
    \ \text{for all} \ i\in\mN\,.
\end{equation}

\smallskip

\noindent We denote by $\MIR$ the set of all myopic IR allocations.
\end{definition}

Condition \eqref{eq:myopic_ir} states that agents will not be worse off with respect to their conditional risk measure $\rho_{0,T}^{(i)}$ after participating in the market. Thus, agents have an incentive to participate in reallocation if and only if the optimal allocation is myopic IR. Furthermore, note that $\MIR\neq\emptyset$ since $\X_{0:T}\in\MIR$. Next, we define a notion of Pareto optimality for this myopic setting, which we call myopic Pareto optimality. Myopic Pareto optimality means that an agent cannot deviate from the optimal allocation and be strictly better off without making another agent strictly worse off.

\medskip

\begin{definition} [Myopic Pareto Optimality] \label{definition:myopic_po} An allocation $\Y_{1:T}^*\in\A_{1:T}$ is myopic Pareto optimal (PO) if 
\begin{enumerate}[label = $(\roman*)$]
    \item $\Y_{1:T}^*\in\MIR$
    \item there does not exist an allocation process $\Y_{1:T}\in\MIR$ that satisfies $\rho_{0,T}^{(i)}(Y_{1:T}^{(i)})\leq \rho_{0,T}^{(i)}(Y_{1:T}^{*(i)})$ for all $i\in\mN$, and at least one inequality is strict. 
\end{enumerate}

\smallskip

\noindent We denote by $\MPO$ the set of all myopic PO allocations. 
\end{definition}

Note that myopic individual rationality and myopic Pareto optimality reduce to the classical notions of individual rationality and Pareto optimality when $T=1$. Furthermore, Assumption \ref{assumption:tc} allows us to find myopic Pareto optima by finding Pareto optima in a related static problem instead. For each $i\in\mN$, 
\begin{align*}
    \rho_{0,T}^{(i)}(Y_{1:T}^{(i)})
    &= \rho_0^{(i)}\Big(Y_{1}^{(i)}+\rho_{1}^{(i)}\left(Y_{2}^{(i)}+\cdots+\rho_{T-1}^{(i)}(Y_{T}^{(i)})\right)\Big) 
    \\
    &= \rho_{0}^{(i)}\left(\rho_{1}^{(i)}\Big(\cdots\rho_{T-1}^{(i)}\big(\sum_{t=1}^{T}Y_t^{(i)}\big)\Big)\right)
    =: \brho^{(i)}\left(\sum_{t=1}^{T}Y_t^{(i)}\right)\,.
\end{align*}

\noindent We observe from the above equation that by translation invariance, the $i$-th agent's risk evaluation of the process $Y_{1:T}^{(i)}$ is equivalent to evaluating the risk of $\sum_{t=1}^{T}Y_t^{(i)}$ with a real-valued risk measure $\brho^{(i)}\colon \L_T^\infty\to \R$. We denote by $\bar{Y}^{(i)}:=\sum_{t=1}^{T}Y_t^{(i)}$ the total allocation of the $i$-th agent and $\bar{X}^{(i)}:=\sum_{t=1}^{T}X_t^{(i)}$ the initial total endowment of the $i$-th agent. Then, \eqref{eq:myopic_ir} holds if and only if
\begin{equation} \label{eq:myopic_ir_3}
    \brho^{(i)}(\bar{Y}^{(i)})
        \leq
    \brho^{(i)}(\bar{X}^{(i)})\,,
    \ \text{for all} \ i\in\mN\,.
\end{equation}

\noindent In other words, under Assumption \ref{assumption:tc}, a process $\Y_{1:T}$ is myopic IR if and only if the associated total allocation of the agents $\bar{\Y}$ is IR (in the static sense) when the $i$-th agent uses risk measure $\brho^{(i)}$ and has initial endowment $\bar{X}^{(i)}$. Furthermore, myopic IR (and hence myopic PO) is a condition on the total allocation across all time periods $\bar{\Y}$ and not the individual allocations $Y_t^{(i)}$. We summarize the above discussion with the following proposition.

\medskip

\begin{proposition} \label{prop:mpo} Let $\Y_{1:T}\in\A_{1:T}$ and $\bar{Y}^{(i)} := \sum_{t=1}^{T} Y_t^{(i)}$ for $i\in\mN$. Then, the following are equivalent:
\begin{enumerate}[label = $(\roman*)$]
    \item $\bar{\Y}$ is PO (in the static sense) when for all $i\in\mN$ the $i$-th agent uses risk measure $\brho^{(i)}$ and has initial endowment $\sum_{t=1}^{T} X_t^{(i)}$.
    \item $\Y_{1:T}\in\MPO$.  
\end{enumerate}
\end{proposition}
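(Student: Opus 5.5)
The plan is to reduce both notions to statements about the total allocations $\bar{\Y}=(\bar Y^{(1)},\ldots,\bar Y^{(n)})$, using the identity $\rho_{0,T}^{(i)}(Y_{1:T}^{(i)})=\brho^{(i)}(\bar Y^{(i)})$ derived just before the proposition. Two elementary facts are needed first.

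First, the map $\Y_{1:T}\mapsto\bar{\Y}$ sends $\A_{1:T}$ onto the set of static allocations of $\bar S:=\sum_{t=1}^T S_t$ in $(\L_T^\infty)^n$. Summing \eqref{eq:allocation} over $t$ gives $\sum_i \bar Y^{(i)}=\bar S$. For surjectivity, take any static allocation $\Z$ of $\bar S$ and define $Y_t^{(i)}:=X_t^{(i)}$ for $t<T$ and $Y_T^{(i)}:=Z^{(i)}-\sum_{t<T}X_t^{(i)}$. Each $Y_T^{(i)}$ is $\F_T$-measurable and bounded, the allocation constraint holds at every $t$, and $\bar Y^{(i)}=Z^{(i)}$.

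Second, membership in $\MIR$ depends only on $\bar{\Y}$. Indeed, by \eqref{eq:myopic_ir_3}, $\Y_{1:T}\in\MIR$ if and only if $\bar{\Y}$ is IR in the static problem with risk measures $\brho^{(i)}$ and endowments $\bar X^{(i)}$. This uses only the recursion in Theorem \ref{theorem:recursion}, i.e.\ translation invariance of each $\rho_t^{(i)}$, which Assumption \ref{assumption:tc} provides.

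For (i)$\Rightarrow$(ii), suppose $\bar{\Y}$ is statically PO. Then $\Y_{1:T}\in\MIR$ by the IR equivalence. If some $\Y'_{1:T}\in\MIR$ dominated $\Y_{1:T}$ in the sense of Definition \ref{definition:myopic_po}, then $\bar{\Y}'$ would be a static IR allocation of $\bar S$. Moreover, $\brho^{(i)}(\bar Y'^{(i)})\le\brho^{(i)}(\bar Y^{(i)})$ for all $i$, with at least one strict inequality. This contradicts static Pareto optimality.

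For (ii)$\Rightarrow$(i), suppose $\Y_{1:T}\in\MPO$. Then $\bar{\Y}$ is IR. If a static IR allocation $\Z$ dominated $\bar{\Y}$, the surjectivity construction would lift $\Z$ to a process $\Y'_{1:T}\in\A_{1:T}$ with $\bar{\Y}'=\Z$. This $\Y'_{1:T}$ is in $\MIR$ and dominates $\Y_{1:T}$ myopically, again a contradiction.

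The only step needing care is the lifting. The lifted process must lie in $\bL_{1,T}^\infty$ and be adapted, since the static allocation lives in $\L_T^\infty$. Placing the entire adjustment at time $T$ handles adaptedness, and boundedness is immediate.
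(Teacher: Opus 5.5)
Your proposal is correct and takes the same route as the paper. The paper obtains the proposition directly from the identity $\rho_{0,T}^{(i)}(Y_{1:T}^{(i)})=\brho^{(i)}(\bar Y^{(i)})$, which follows from translation invariance and the recursion of Theorem \ref{theorem:recursion}, together with the observation that myopic IR and myopic PO therefore depend only on $\bar{\Y}$. Your explicit lifting of a static allocation of $\sum_{t=1}^T S_t$ to an adapted allocation process, placing the whole adjustment at time $T$, is left implicit in the paper, and it is exactly what the direction $(ii)\Rightarrow(i)$ requires.
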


Proposition \ref{prop:mpo} implies that we can find myopic Pareto optima using tools from the static case. In particular, by combining Proposition \ref{prop:mpo} with the well-known (static) inf-convolution result (see Proposition 3.3 from \citet{ghossoub2026efficiency} or Proposition \ref{prop:static} $(i)$ in Appendix \ref{appendix:single_period}), we obtain that an allocation $\Y_{1:T}$ is myopic PO if and only if the associated total allocation vector of the agents $\bar{\Y}$ attains the infimum
\begin{equation} \label{eq:myopic_po}
\underset{\Z\in\tilde{\IR}} \inf \
    \sum_{i\in\mN}\brho^{(i)}(Z^{(i)}),
\end{equation}
\noindent where $\tilde{\IR}$ is the set of total allocation vectors such that \eqref{eq:myopic_ir_3} holds.

Furthermore, observe that the representation \eqref{eq:myopic_po} is equivalent to the classical notion of $\alpha$-efficiency for recursive utilities with $\alpha=(1, \ldots, 1)$, where here the recursive utility is replaced by $\brho$. The assumption of translation invariance, which is typically not used in the recursive utility literature, has a significant impact on the set of optimal allocations. In particular, given a myopic PO allocation $\Y_{1:T}^*$, translation invariance implies that any allocation $\Y_{1:T}'$ satisfying $\sum_{t=1}^{T}Y_t'^{(i)}=\sum_{t=1}^{T}Y_t^{*(i)}$ for all $i\in\mN$ is also myopic PO. In other words, myopic Pareto optimality is a condition on the total allocation for each agent over time and never defines a unique stochastic process $\Y_{1:T}$. Thus, it is necessary to consider a different notion of optimality for translation-invariant dynamic preferences. 

\medskip

\subsection{Dynamic Pareto Optimality} \label{sec:dpo}

In this section, we remove the assumption that agents only care about evaluating the risk of an allocation process at time $0$ and assume that each agent is concerned with their future risk $\rho_{t,T}^{(i)}(Y_{t+1:T}^{(i)})$ at all time points $t\in\mfT$. This leads to a different notion of individual rationality and Pareto optimality, which we call dynamic individual rationality and dynamic Pareto optimality. For simplicity, we drop the word dynamic whenever there is no confusion and simply use the terms IR and PO. 

\medskip

\begin{definition} [Dynamic Individual Rationality] \label{definition:dir} An allocation $\Y_{1:T}\in\A_{1:T}$ is dynamic IR if for all $t\in\mfT$
\begin{equation*}
    \rho_{t,T}^{(i)}(Y_{t+1:T}^{(i)})
        \leq
    \rho_{t,T}^{(i)}(X_{t+1:T}^{(i)})\,,
    \ \text{for all} \ i\in\mN\,.
\end{equation*}

\noindent We denote by $\DIR$ the set of all dynamic IR allocations. 
\end{definition}

\medskip

\begin{definition} [Dynamic Pareto Optimality] \label{definition:dpo} An allocation $\Y_{1:T}^*\in\A_{1:T}$ is dynamic PO if 
\begin{enumerate}[label = $(\roman*)$]
    \item $\Y_{1:T}^*\in\DIR$
    \item there does not exist an allocation process $\Y_{1:T}\in\DIR$ such that for all $i\in\mN$ and $t\in\mfT$,   
    \[
    \rho_{t,T}^{(i)}(Y_{t+1:T}^{(i)})\leq \rho_{t,T}^{(i)}(Y_{t+1:T}^{*(i)}),
    \]   
    and $\P\left(\rho_{t,T}^{(i)}(Y_{t+1:T}^{(i)})<\rho_{t,T}^{(i)}(Y_{t+1:T}^{*(i)})\right)>0$ for some $i\in\mN$ and $t\in\mfT$. 
\end{enumerate}

\smallskip

\noindent We denote by $\DPO$ the set of all dynamic PO allocations. 
\end{definition}

Unlike myopic Pareto optimality, we assume here that agents care about risk evaluation at intermediate time steps. Hence, even when agent preferences are translation invariant, it does not suffice to study the sum of each agent's allocation over time. To characterize PO allocations, we first define the notion of Pareto optimality at a single time point $t\in\mfT$. Then, we show that an allocation is PO if and only if it is PO at time $t$ for all $t\in\mfT$. This allows us to characterize PO allocations via backwards recursion. To simplify the notation in our recursive arguments, we introduce the risk-to-go process. 

\medskip

\begin{definition} [Risk-to-Go] \label{definition:risk_to_go} The risk-to-go process associated with $\{\rho_{t,T}\}_{t\in\mfT}$, denoted $R_{0:T}\in \L_{0,T}^\infty$, is defined recursively as follows:

\begin{enumerate}[label = $(\roman*)$]
    \item $R_T=0$.
    \item For $t\in\mfT$, we define $R_t(Y_{t+1:T}):=
    \rho_t\big(Y_{t+1}+R_{t+1}(Y_{t+2:T})\big)$.
\end{enumerate}
\end{definition}

\medskip

\begin{definition} [Individual Rationality at Time $t$] \label{definition:ir_t} For $t\in\mfT$, an allocation $\Y_{t+1:T}\in\A_{t+1:T}$ is IR at time $t$ if for all $\in\mN$,
\[
R_{t}^{(i)}(Y_{t+1:T}^{(i)})\leq R_{t}^{(i)}(X_{t+1:T}^{(i)}).
\] 

\smallskip

\noindent Furthermore, we denote by $\IR_t$ the set of all processes $\Y_{t+1:T}$ that are IR at time $t$, i.e., 

\[
\IR_t:=\left\{\Y_{t+1:T}~|~R_{t}^{(i)}(Y_{t+1:T}^{(i)})\leq R_{t}^{(i)}(X_{t+1:T}^{(i)})\ \text{for all}\ i\in\mN\right\}.
\]
\end{definition}

It follows from the definition that $\Y_{1:T}\in\DIR$ if and only if $\Y_{t+1:T}\in\IR_t$ for all $t\in\mfT$. Moreover, $\IR_0=\MIR$ and all notions of individual rationality in this paper are equivalent to the static notion of individual rationality when $T=1$. Next, we define Pareto optimality at time $t$ in a similar manner.

\medskip

\begin{definition} [Pareto Optimality at Time $t$] \label{definition:po_t} For $t\in\mfT$, an allocation $\Y_{t+1:T}^*\in\A_{t+1:T}$ is PO at time $t$ if
\begin{enumerate}[label = $(\roman*)$]
    \item $\Y^*_{t+1:T}\in\IR_t$
    \item there does not exist a random vector $\Y_{t+1}$ with $\{\Y_{t+1}, \Y_{t+2:T}^*\}\in\IR_t$ such that for all $i\in\mN$,   
    \[
    \rho_{t}^{(i)}\big(Y_{t+1}^{(i)}+R_{t+1}^{(i)}(Y_{t+2:T}^{*(i)})\big)
        \leq 
    \rho_{t}^{(i)}\big(Y_{t+1}^{*(i)}+R_{t+1}^{(i)}(Y_{t+2:T}^{*(i)})\big),
    \]   
    
    and $\P\left(\rho_{t}^{(i)}\big(Y_{t+1}^{(i)}+R_{t+1}^{(i)}(Y_{t+2:T}^{*(i)})\big)<\rho_{t}^{(i)}\big(Y_{t+1}^{*(i)}+R_{t+1}^{(i)}(Y_{t+2:T}^{*(i)})\right)>0$ for some $i\in\mN$.
\end{enumerate}

\smallskip

\noindent We denote by $\PO_t$ the set of all processes $\Y_{t+1:T}$ that are PO at time $t$. 
\end{definition}

We use the convention that $\{\Y_T, \Y_{T+1:T}^*\}=\Y_T$. In other words, an allocation $\Y_T^*\in\IR_{T-1}$ is PO at time $t$ if there does not exist a random vector $\Y_T\in\IR_{T-1}$ such that for all $i\in\mN$, $R_{T-1}^{(i)}(Y_T^{(i)}) \leq R_{T-1}^{(i)}(Y_{T}^{*(i)}) 
\ \text{and} \ 
\P\left(R_{T-1}^{(i)}(Y_T^{(i)})<R_{T-1}^{(i)}(Y_{T}^{*(i)})\right)>0 
\ \text{for some} \ i\in\mN.$ 
Our next result links $\PO_t$ with $\DPO$. In particular, an allocation $\Y_{1:T}$ is PO if and only if each sub-allocation $\Y_{t+1:T}$ is PO at time $t$ for all $t\in\mfT$. Hence, we can characterize PO allocations by characterizing allocations that are PO at time $t$ instead. 

\medskip

\begin{lemma}\label{lemma:dpo} $\Y_{1:T}^*\in\DPO$ if and only if $\Y_{t+1:T}^*\in\PO_t$ for all $t\in\mfT$. 
\end{lemma}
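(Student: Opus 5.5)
The plan is to prove both implications directly from the definitions. Two facts do the work. First, by Theorem \ref{theorem:recursion} and Definition \ref{definition:risk_to_go}, $R_t^{(i)}(Y_{t+1:T}^{(i)})=\rho_{t,T}^{(i)}(Y_{t+1:T}^{(i)})$ for all $t\in\mfT$. Second, each one-step $\rho_s^{(i)}$ is monotone and defined on a.s.-equivalence classes. Throughout I use the observation after Definition \ref{definition:ir_t}: $\Y_{1:T}\in\DIR$ if and only if $\Y_{t+1:T}\in\IR_t$ for all $t\in\mfT$. I also read the random vector $\Y_{t+1}$ in Definition \ref{definition:po_t} as an allocation of $S_{t+1}$.

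($\Rightarrow$) Let $\Y^*_{1:T}\in\DPO$ and suppose $\Y^*_{t+1:T}\notin\PO_t$ for some $t$. Since $\Y^*\in\DIR$, we have $\Y^*_{t+1:T}\in\IR_t$, so condition (ii) of Definition \ref{definition:po_t} fails. Hence there is $\Y_{t+1}$ with $\{\Y_{t+1},\Y^*_{t+2:T}\}\in\IR_t$ that weakly improves every agent at time $t$, with a strict improvement on a set of positive probability for some agent. Splice it into $\Y^*$ by defining $\Y'_{1:T}:=\{\Y^*_{1:t},\Y_{t+1},\Y^*_{t+2:T}\}$, which is an allocation. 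Compare risks time by time:
\begin{itemize}
\item For $s>t$, $\Y'$ and $\Y^*$ have the same tail, so $R_s^{(i)}$ coincide.
\item At $s=t$, the comparison is exactly the assumed one.
\item For $s<t$, induct backwards from $t$. Using $R_s^{(i)}(Y'^{(i)}_{s+1:T})=\rho_s^{(i)}\big(Y^{*(i)}_{s+1}+R_{s+1}^{(i)}(Y'^{(i)}_{s+2:T})\big)$ and monotonicity of $\rho_s^{(i)}$, we get $R_s^{(i)}(Y'^{(i)})\le R_s^{(i)}(Y^{*(i)})$.
\end{itemize}
Dynamic IR of $\Y'$ follows the same way. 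For $s\ge t$ it comes from $\IR_t$ and the common tail. For $s<t$ it comes from $R_s(\Y')\le R_s(\Y^*)\le R_s(\X)$. Thus $\Y'\in\DIR$ dominates $\Y^*$ in the sense of Definition \ref{definition:dpo}, a contradiction.

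($\Leftarrow$) Suppose $\Y^*_{t+1:T}\in\PO_t$ for all $t$. Then $\Y^*\in\DIR$. Assume some $\Y\in\DIR$ dominates $\Y^*$. Let $t^*$ be the largest $t\in\mfT$ at which some agent's inequality is strict with positive probability. For all $s>t^*$ and all $i$, the weak inequality together with the absence of strictness gives $R_s^{(i)}(Y^{(i)}_{s+1:T})=R_s^{(i)}(Y^{*(i)}_{s+1:T})$ a.s. Consequently
\[
R_{t^*}^{(i)}(Y^{(i)}_{t^*+1:T})=\rho_{t^*}^{(i)}\big(Y^{(i)}_{t^*+1}+R^{(i)}_{t^*+1}(Y^{(i)}_{t^*+2:T})\big)=\rho_{t^*}^{(i)}\big(Y^{(i)}_{t^*+1}+R^{(i)}_{t^*+1}(Y^{*(i)}_{t^*+2:T})\big).
\]
When $t^*=T-1$ this is trivial, with an empty tail and $R_T=0$. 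So the candidate $\{\Y_{t^*+1},\Y^*_{t^*+2:T}\}$ has the same time-$t^*$ risks as $\Y_{t^*+1:T}$. It therefore lies in $\IR_{t^*}$, because $\Y\in\DIR$, and it weakly improves every agent with a strict improvement on positive probability. This contradicts $\Y^*_{t^*+1:T}\in\PO_{t^*}$.

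The main obstacle is the ($\Leftarrow$) direction. Choosing the \emph{last} time of strict improvement is what lets us replace the tail of $\Y$ by that of $\Y^*$ inside $\rho_{t^*}$. That replacement is justified by a.s. equality of the risk-to-go, together with $\rho_{t^*}$ depending only on a.s. classes. This step is what makes the single-period deviation in Definition \ref{definition:po_t} sufficient.
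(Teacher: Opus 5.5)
Your proposal is correct and follows essentially the same route as the paper. In the ($\Rightarrow$) direction you splice the improving $\Y_{t+1}$ into $\Y^*_{1:T}$ and use monotonicity of the one-step measures, exactly as the paper does. In the ($\Leftarrow$) direction, your choice of the last time $t^*$ of strict improvement is the contrapositive form of the paper's backward induction showing $R_s^{(i)}(Y_{s+1:T}^{(i)})=R_s^{(i)}(Y_{s+1:T}^{*(i)})$ for all $s$; in both, a.s.\ equality of the later risks-to-go lets you replace $\Y_{t+2:T}$ by $\Y^*_{t+2:T}$ inside $\rho_t^{(i)}$, so that the single-period deviation of Definition \ref{definition:po_t} applies.
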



\begin{proof}
We first prove that $\Y_{t+1:T}^*\in\PO_t$ for all $t\in\mfT$ implies $\Y_{1:T}^*\in\DPO$. Suppose $\Y_{t+1:T}^*\in\PO_t$ for all $t\in\mfT$. Then, $\Y_{t+1:T}^*\in\IR_t$ for all $t\in\mfT$, which implies that $\Y_{1:T}^*\in\DIR$. Hence, $\Y_{1:T}^*$ satisfies the first condition of Pareto optimality and it remains to check the second condition. Let $\Y_{1:T}\in\DIR$ such that for all $t\in\mfT$ and $i\in\mN$,
\begin{equation} \label{eq:lemma_dpo_1}
R_{t}^{(i)}(Y_{t+1:T}^{(i)})\leq R_{t}^{(i)}(Y_{t+1:T}^{*(i)}).
\end{equation}

\noindent To verify the second condition of Pareto optimality, it suffices to show that for all $i\in\mN$ and $s\in\mfT$,
\begin{equation} \label{eq:lemma_dpo_2}
R_{s}^{(i)}(Y_{s+1:T}^{(i)})=R_{s}^{(i)}(Y_{s+1:T}^{*(i)}). 
\end{equation}

\noindent We prove \eqref{eq:lemma_dpo_2} by induction. For $s=T-1$, we have $\Y_T\in\IR_{T-1}$ since $\Y_{1:T}\in\DIR$. Hence, the assumption that $\Y_T^*\in\PO_{T-1}$ implies that at least one of the following holds:
\begin{align}
&\P\left(R_{T-1}^{(i)}(Y_{T}^{(i)})\leq R_{T-1}^{(i)}(Y_{T}^{*(i)})\right)
<1 \ \text{for some} \ i\in\mN,
\label{eq:lemma_dpo_3} \\
&\P\left(R_{T-1}^{(i)}(Y_{T}^{(i)})<R_{T-1}^{(i)}(Y_{T}^{*(i)})\right)=0 \ \text{for all} \ i\in\mN.
\label{eq:lemma_dpo_4}
\end{align}

\noindent Since \eqref{eq:lemma_dpo_3} contradicts \eqref{eq:lemma_dpo_1}, it must be the case that \eqref{eq:lemma_dpo_4} holds. Combining \eqref{eq:lemma_dpo_1} and \eqref{eq:lemma_dpo_4} yields \eqref{eq:lemma_dpo_2} for time $T-1$. Next, we proceed by induction backwards in time. For this, assume that \eqref{eq:lemma_dpo_2} holds for all $s\geq t$. It suffices to show that \eqref{eq:lemma_dpo_2} also holds for $t-1$. Note that for all $i\in\mN$, we have 
\begin{align*}
\rho_{t-1}^{(i)}\big(Y_{t}^{(i)}+R_{t}^{(i)}(Y_{t+1:T}^{*(i)})\big) &= \rho_{t-1}^{(i)}\big(Y_{t}^{(i)}+R_{t}^{(i)}(Y_{t+1:T}^{(i)})\big) \\
&\leq \rho_{t-1}^{(i)}\big(Y_{t}^{*(i)}+R_{t}^{(i)}(Y_{t+1:T}^{*(i)})\big) \\
&\leq \rho_{t-1}^{(i)}\big(X_{t}^{(i)}+R_{t}^{(i)}(X_{t+1:T}^{(i)})\big),
\end{align*}

\noindent where the first equality holds by the induction hypothesis, the first inequality follows from \eqref{eq:lemma_dpo_1}, and the last inequality follows from the fact that $\Y_{t:T}^*\in\IR_{t-1}$. In other words, $\{\Y_{t}, \Y_{t+1:T}^*\}\in\IR_{t-1}$. Hence, the assumption that $\Y_{t:T}^*\in\PO_{t-1}$ implies that at least one of the following holds:
\begin{align}
&\P\left(\rho_{t-1}^{(i)}
    \big(Y_{t}^{(i)}+R_{t}^{(i)}(Y_{t+1:T}^{*(i)})\big)
        \leq\rho_{t-1}^{(i)}
    \big(Y_{t}^{*(i)}+R_{t}^{(i)}(Y_{t+1:T}^{*(i)})\big)\right)
<1 \ \text{for some} \ i\in\mN,
\label{eq:lemma_dpo_5} \\
&\P\left(\rho_{t-1}^{(i)}
    \big(Y_{t}^{(i)}+R_{t}^{(i)}(Y_{t-1:T}^{*(i)})\big)
        <\rho_{t-1}^{(i)}
    \big(Y_{t}^{*(i)}+R_{t}^{(i)}(Y_{t+1:T}^{*(i)})\big)\right)
=0 \ \text{for all} \ i\in\mN.
\label{eq:lemma_dpo_6}
\end{align}

\noindent Since \eqref{eq:lemma_dpo_2} holds for $s=t$, \eqref{eq:lemma_dpo_5} is equivalent to 
\[
\P\left(\rho_{t-1}^{(i)}
    \big(Y_{t}^{(i)}+R_{t}^{(i)}(Y_{t+1:T}^{(i)})\big)
        \leq \rho_{t-1}^{(i)}
    \big(Y_{t}^{*(i)}+R_{t}^{(i)}(Y_{t+1:T}^{*(i)})\big)
\right)<1 \ \text{for some} \ i\in\mN,
\] 

\noindent which contradicts \eqref{eq:lemma_dpo_1}. Therefore, it must be the case that \eqref{eq:lemma_dpo_6} holds. Since \eqref{eq:lemma_dpo_2} holds for $s=t$, \eqref{eq:lemma_dpo_6} is equivalent to 
\begin{equation}\label{eq:lemma_dpo}
\P\left(\rho_{t-1}^{(i)}
    \big(Y_{t}^{(i)}+R_{t}^{(i)}(Y_{t+1:T}^{(i)})\big)
        <\rho_{t-1}^{(i)}
    \big(Y_{t}^{*(i)}+R_{t}^{(i)}(Y_{t+1:T}^{*(i)})\big)
\right)=0 \ \text{for all} \ i\in\mN.
\end{equation}

\noindent Combining \eqref{eq:lemma_dpo_1} and \eqref{eq:lemma_dpo} yields \eqref{eq:lemma_dpo_2} for $s=t$.

\smallskip

To prove the converse, let $\Y_{1:T}^*\in\DPO$. Assume, by way of contradiction, that $\Y_{t+1:T}^*\notin\PO_{t}$ for some $t\in\mfT$. Since $\Y_{1:T}^*\in\DPO$, this implies that $\Y_{t+1:T}^*\in\IR_t$ for all $t\in\mfT$. Therefore, $\Y_{t+1:T}^*\notin\PO_t$ implies that there exists $\Y_{t+1}$ with $\{\Y_{t+1}, \Y_{t+2:T}^*\}\in\IR_t$ such that
\begin{align}
&\P\left(\rho_{t}^{(i)}
    \big(Y_{t+1}^{(i)}+R_{t+1}^{(i)}(Y_{t+2:T}^{*(i)})\big)
        \leq \rho_{t}^{(i)}
    \big(Y_{t+1}^{*(i)}+R_{t+1}^{(i)}(Y_{t+2:T}^{*(i)})\big)
\right)=1 \ \text{for all} \ i\in\mN \ \text{and},
\label{eq:lemma_dpo_7} \\
&\P\left(\rho_{t}^{(i)}
    \big(Y_{t+1}^{(i)}+R_{t+1}^{(i)}(Y_{t+2:T}^{*(i)})\big) 
         < \rho_{t}^{(i)}
    \big(Y_{t+1}^{*(i)}+R_{t+1}^{(i)}(Y_{t+2:T}^{*(i)})\big)
\right)>0 \ \text{for some} \ i\in\mN.
\label{eq:lemma_dpo_8} 
\end{align}

\noindent Let $\tilde{\Y}_{1:T}$ be the process $\Y_{1:T}^*$ with the $(t+1)$-st vector replaced by $\Y_{t+1}$. Then, for all $s\in\mfT$ and $i\in\mN$, 
\begin{equation} \label{eq:lemma_dpo_9}
R_{s}^{(i)}(\tilde{Y}_{s+1:T}^{(i)}) \leq 
R_{s}^{(i)}(Y_{s+1:T}^{*(i)}) \leq 
R_{s}^{(i)}(X_{s+1:T}^{(i)}),
\end{equation}

\noindent where the first inequality follows from the monotonicity of $\{\rho_s^{(i)}\}_{s\in\mfT}$ and \eqref{eq:lemma_dpo_7}. Finally, for $s=t$ and $i\in\mN$ satisfying \eqref{eq:lemma_dpo_8}, the first inequality in \eqref{eq:lemma_dpo_9} is strict with positive probability, contradicting the assumption that $\Y_{1:T}^*\in\DPO$. Therefore, $\Y_{t+1:T}^*\in\PO_t$. 
\end{proof}

\bigskip

\section{Characterization of Dynamic Pareto Optimal Allocations} \label{sec:char_dpo} 

In this section, we generalize the well-known static inf-convolution result (\citet{embrechts2018quantile} Proposition 1, \citet{ghossoub2026efficiency} Proposition 3.3) to the dynamic setting. Then, we combine this result with Lemma \ref{lemma:dpo} to derive a recursive approach for finding PO allocations. We start with a lemma that gives an alternative representation for $\PO_t$. 

\medskip

\begin{lemma}\label{lemma:pot_rep} For all $t\in\mfT$,
\begin{align*}
\PO_t=\biggl\{                   
    \Z_{t+1:T}\biggr. &\in\IR_t~\big|~  \P\left(
        \sum_{i\in\mN}\rho_{t}^{(i)}
            \big(Y_{t+1}^{(i)}+R_{t+1}^{(i)}(Z_{t+2:T}^{(i)})\big)
                \leq
        \sum_{i\in\mN}R_{t}^{(i)}(Z_{t+1:T}^{(i)})
    \right)<1 \ \text{or} \\
    & \P\left(
        \sum_{i\in\mN}\rho_{t}^{(i)}
            \big(Y_{t+1}^{(i)}+R_{t+1}^{(i)}(Z_{t+2:T}^{(i)})\big)
                <
        \sum_{i\in\mN}R_{t}^{(i)}(Z_{t+1:T}^{(i)})
    \right)=0 \\
     & \biggl. \qquad\qquad\qquad \text{for all} \ \Y_{t+1}\neq \Z_{t+1} \ 
        \text{such that} \ \{\Y_{t+1}, \Z_{t+2:T}\}\in\IR_t
\biggr\}.
\end{align*}
\end{lemma}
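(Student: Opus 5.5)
The plan is to show both inclusions directly, working with the abbreviations
\[
A^{(i)}:=\rho_{t}^{(i)}\big(Y_{t+1}^{(i)}+R_{t+1}^{(i)}(Z_{t+2:T}^{(i)})\big), \qquad B^{(i)}:=R_{t}^{(i)}(Z_{t+1:T}^{(i)}) = \rho_{t}^{(i)}\big(Z_{t+1}^{(i)}+R_{t+1}^{(i)}(Z_{t+2:T}^{(i)})\big),
\]
both of which lie in $\L_t^{\infty}$. The right-hand set says that no admissible deviation $\Y_{t+1}$ weakly lowers the aggregate $\sum_i A^{(i)}$ below $\sum_i B^{(i)}$ almost surely while lowering it strictly with positive probability. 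Definition \ref{definition:po_t} says the same thing agent by agent. So the lemma is a conditional version of the usual equivalence between Pareto optimality and minimising the sum, and the tool for moving between the two is conditional translation invariance.

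\emph{Inclusion $\supseteq$.} Take $\Z_{t+1:T}$ in the right-hand set and suppose, for contradiction, that $\Z_{t+1:T}\notin\PO_t$. Condition $(i)$ of Definition \ref{definition:po_t} holds by assumption, so condition $(ii)$ must fail. Hence there is $\Y_{t+1}$ with $\{\Y_{t+1},\Z_{t+2:T}\}\in\IR_t$, with $A^{(i)}\le B^{(i)}$ a.s.\ for every $i$, and with $\P(A^{(j)}<B^{(j)})>0$ for some $j$.

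The strict inequality forces $\Y_{t+1}\neq\Z_{t+1}$, since otherwise $A^{(j)}=B^{(j)}$. Summing over $i$ gives $\sum_i A^{(i)}\le\sum_i B^{(i)}$ a.s. On the event $\{A^{(j)}<B^{(j)}\}$ the sum inequality is strict, so it is strict with positive probability. This violates both alternatives in the definition of the right-hand set, a contradiction.

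\emph{Inclusion $\subseteq$.} Let $\Z_{t+1:T}\in\PO_t$, and suppose some admissible $\Y_{t+1}\neq\Z_{t+1}$ satisfies $\sum_i A^{(i)}\le \sum_i B^{(i)}$ a.s.\ with strict inequality with positive probability. Set
\[
D:=\sum_{i\in\mN}\big(B^{(i)}-A^{(i)}\big),
\]
which lies in $\L_t^{\infty}$, is non-negative, and satisfies $\P(D>0)>0$. Redistribute the surplus through $\F_t$-measurable side payments
\[
c^{(i)}:=B^{(i)}-A^{(i)}-D/n\in\L_t^{\infty}.
\]
These satisfy $\sum_i c^{(i)}=0$, so $Y_{t+1}'^{(i)}:=Y_{t+1}^{(i)}+c^{(i)}$ still sums to $S_{t+1}$. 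Because $c^{(i)}\in\L_t^{\infty}$, translation invariance (Definition \ref{definition:properties}$(iv)$) lets us pull $c^{(i)}$ out of $\rho_t^{(i)}$, giving
\[
\rho_t^{(i)}\big(Y_{t+1}'^{(i)}+R_{t+1}^{(i)}(Z_{t+2:T}^{(i)})\big)=A^{(i)}+c^{(i)}=B^{(i)}-D/n\le B^{(i)}.
\]
This inequality is strict on $\{D>0\}$, which has positive probability. Moreover $B^{(i)}\le R_t^{(i)}(X_{t+1:T}^{(i)})$ because $\Z_{t+1:T}\in\IR_t$, so $\{\Y'_{t+1},\Z_{t+2:T}\}\in\IR_t$. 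Hence $\Y'_{t+1}$ is an admissible deviation violating condition $(ii)$ of Definition \ref{definition:po_t}, contradicting $\Z_{t+1:T}\in\PO_t$.

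The main point needing care is this redistribution step. One must check that the $c^{(i)}$ are bounded and $\F_t$-measurable, which holds because all $A^{(i)}$ and $B^{(i)}$ lie in $\L_t^{\infty}$. One must also check that the translation-invariance axiom, stated for processes $(Y_t+Y_{t+1},\dots)$, indeed permits adding an $\F_t$-measurable amount inside the one-step measure $\rho_t^{(i)}$.
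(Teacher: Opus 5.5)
Your proof is correct and takes the same route as the paper. The inclusion of the right-hand set in $\PO_t$ comes from summing the agent-wise inequalities, and the reverse inclusion from redistributing the aggregate gain through $\F_t$-measurable side payments that translation invariance pulls out of $\rho_t^{(i)}$. The only real difference is the transfer rule: the paper compensates agents with $v^{(i)}\geq 0$ exactly and has gaining agents give up part $w^{(i)}$ of their gain, asserting without construction that such $w^{(i)}$ exist, whereas your equal split $c^{(i)}=B^{(i)}-A^{(i)}-D/n$ is fully explicit and makes every agent strictly better off on $\{D>0\}$; you also spell out the check $\Y_{t+1}\neq\Z_{t+1}$ needed in the first inclusion, which the paper leaves implicit.
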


\begin{proof}
Fix $t\in\mfT$ and let $\Gamma$ denote the set on the right-hand side of the equation in the lemma. 

\smallskip

We first show that $\Gamma\subseteq\PO_t$. Let $\Y_{t+1:T}\in\IR_t$ and assume that $\Y_{t+1:T}\notin\PO_t$. Then, by definition of $\PO_t$, there exists a random vector $\tilde{\Y}_{t+1}$ with $\{\tilde{\Y}_{t+1}, \Y_{t+2:T}\}\in\IR_t$ such that
\begin{align*}
&\P\left(\rho_{t}^{(i)}
    \big(\tilde{Y}_{t+1}^{(i)}+R_{t+1}^{(i)}(Y_{t+2:T}^{(i)})\big)
        \leq \rho_{t}^{(i)}
    \big(Y_{t+1}^{(i)}+R_{t+1}^{(i)}(Y_{t+2:T}^{(i)})\big)
\right)=1 \ \text{for all} \ i\in\mN \ \text{and}, \\
&\P\left(\rho_{t}^{(i)}
    \big(\tilde{Y}_{t+1}^{(i)}+R_{t+1}^{(i)}(Y_{t+2:T}^{(i)})\big)
        < \rho_{t}^{(i)}
    \big(Y_{t+1}^{(i)}+R_{t+1}^{(i)}(Y_{t+2:T}^{(i)})\big)
\right)>0 \ \text{for some} \ i\in\mN.
\end{align*}

\noindent Hence, neither condition of the set $\Gamma$ is satisfied, so $\Gamma\subseteq\PO_t$.

\smallskip

Next, we show that $\PO_t\subseteq \Gamma$. Let $\Y_{t+1:T}^*\in\PO_t$. Assume, by way of contradiction, that $\Y_{t+1:T}^*\notin \Gamma$. Then, by definition of $\Gamma$, there exists a random vector $\tilde{\Y}_{t+1}$ with $\{\tilde{\Y}_{t+1}, \Y_{t+2:T}^*\}\in\IR_t$ such that
\begin{align}
&\P\left(\sum_{i\in\mN}\rho_{t}^{(i)}
    \big(\tilde{Y}_{t+1}^{(i)}+R_{t+1}^{(i)}(Y_{t+2:T}^{*(i)})\big)
        \leq \sum_{i\in\mN}\rho_{t}^{(i)}
    \big(Y_{t+1}^{*(i)}+R_{t+1}^{(i)}(Y_{t+2:T}^{*(i)})\big)
\right)=1 \ \text{and} 
\label{eq:sum} \\
&\P\left(\sum_{i\in\mN}\rho_{t}^{(i)}
    \big(\tilde{Y}_{t+1}^{(i)}+R_{t+1}^{(i)}(Y_{t+2:T}^{*(i)})\big) 
        <\sum_{i\in\mN}\rho_{t}^{(i)}
    \big(Y_{t+1}^{*(i)}+R_{t+1}^{(i)}(Y_{t+2:T}^{*(i)})\big)
\right)>0.
\label{eq:sum_2}
\end{align}

\noindent For $i\in\mN$, let $v^{(i)}:=\rho_{t}^{(i)}\big(\tilde{Y}_{t+1}^{(i)}+R_{t+1}^{(i)}(Y_{t+2:T}^{*(i)})\big)-R_{t}^{(i)}(Y_{t+1:T}^{*(i)})\in\L_
{t}^{\infty}$ and $w^{(i)}\in\L_{t}^{\infty}$ such that

\begin{enumerate}[label = $(\roman*)$]
\item For all $i\in\mN$, $0\leq w^{(i)}\leq\max\{0, \, -v^{(i)}\}$.
\item There exists $i\in\mN$ such that $\P\big(w^{(i)}<\max\{0, \, -v^{(i)}\}\big)>0$. 
\item \begin{equation} \label{eq:v_i}
    \sum_{i\in\mN}v^{(i)}\Id_{v^{(i)}\geq 0}
        =
    \sum_{i\in\mN}w^{(i)}\Id_{v^{(i)}<0}. 
\end{equation}
\end{enumerate}

\noindent Note that random variables satisfying $(i)$ and $(ii)$ exist since \eqref{eq:sum_2} implies that $\P\big(-v^{(i)}>0\big)>0$ for some $i\in\mN$. Furthermore, \eqref{eq:sum} and \eqref{eq:sum_2} imply that there exists $\{w^{(i)}\}_{i\in\mN}$ that also satisfies $(iii)$. For $i\in\mN$, let 
\[
\epsilon^{(i)}:=w^{(i)}\Id_{v^{(i)}<0}-v^{(i)}\Id_{v^{(i)}\geq 0}\in\L_{t}^{\infty}, 
    \ 
\hat{Y}_{t+1}^{(i)}:=\tilde{Y}_{t+1}^{(i)}+\epsilon^{(i)}, 
    \ \text{and} \
\hat{\Y}_{t+2:T}=\Y^*_{t+2:T}.
\]

\noindent We first show that $\hat{\Y}_{t+1:T}\in\A_{t+1:T}$. Since $\Y_{t+1:T}^*\in\PO_t$, it follows that $\hat{\Y}_{t+2:T}=\Y^*_{t+2:T}\in\A_{t+2:T}$. Furthermore, by \eqref{eq:v_i} and the fact that $\{\tilde{\Y}_{t+1}, \Y_{t+2:T}^*\}\in\IR_t\subseteq\A_{t+1:T}$,
\[
\sum_{i\in\mN}\hat{Y}_{t+1}^{(i)}
    =
\sum_{i\in\mN}\left(\tilde{Y}_{t+1}^{(i)}+\epsilon^{(i)}\right)
    =
\sum_{i\in\mN}\tilde{Y}_{t+1}^{(i)}=S_{t+1},
\]

\noindent and therefore $\hat{\Y}_{t+1:T}\in\A_{t+1:T}$. Next, we show that $\hat{\Y}_{t+1:T}\in\IR_t$. Since $\Y_{t+1:T}^*\in\IR_t$, we have $R_{t}^{(i)}(Y_{t+1:T}^{*(i)})\leq R_{t}^{(i)}(X_{t+1:T}^{(i)})$ for all $i\in\mN$. Hence, to show that $\hat{\Y}_{t+1:T}\in\IR_t$, it suffices to show that
\[
\P\left(
    \rho_{t}^{(i)}\big(\hat{Y}_{t+1}^{(i)}+R_{t+1}^{(i)}(Y_{t+2:T}^{*(i)})\big)\leq R_{t}^{(i)}(Y_{t+1:T}^{*(i)})
\right)=1.
\] 

\noindent Indeed, for all $i\in\mN$, 
\begin{align*}
& \P\left(\rho_{t}^{(i)}
    \big(\hat{Y}_{t+1}^{(i)}+R_{t+1}^{(i)}(Y_{t+2:T}^{*(i)})\big)
        \leq R_{t}^{(i)}(Y_{t+1:T}^{*(i)})
    \right) \\
= &\ \P\left(\rho_{t}^{(i)}
    \big(\tilde{Y}_{t+1}^{(i)}+
        R_{t+1}^{(i)}(Y_{t+2:T}^{*(i)})\big)+\epsilon^{(i)}
            \leq 
        R_{t}^{(i)}(Y_{t+1:T}^{*(i)})
    \right) \\
= &\ \P\left(
        \left\{\rho_{t}^{(i)}
            \big(\tilde{Y}_{t+1}^{(i)}+R_{t+1}^{(i)}(Y_{t+2:T}^{*   
                (i)})\big)-v^{(i)}
            \leq R_{t}^{(i)}(Y_{t+1:T}^{*(i)})
        \right\} \cap 
        \left\{
            v^{(i)}\geq 0
        \right\}
    \right)\\
&\qquad\ + \P\left(
    \left\{\rho_{t}^{(i)}
        \big(\tilde{Y}_{t+1}^{(i)}+R_{t+1}^{(i)}(Y_{t+2:T}^{*
            (i)})\big)+w^{(i)}
        \leq R_{t}^{(i)}(Y_{t+1:T}^{*(i)})
    \right\}\cap 
    \left\{
        v^{(i)}<0
    \right\}
    \right) \\
= &\ \P\left(v^{(i)}\geq 0\right) + 
        \P\left(\{w^{(i)}\leq -v^{(i)}\} \cap \{v^{(i)}<0\}\right) \\ 
= &\ \P\left(v^{(i)}\geq 0\right) + \P\left(v^{(i)}<0\right) 
=  1,
\end{align*}

\noindent where the first equality follows from translation invariance, the third equality follows from the fact that $v^{(i)}=\rho_{t}^{(i)}\big(\tilde{Y}_{t+1}^{(i)}+R_{t+1}^{(i)}(Y_{t+2:T}^{*(i)})\big)-R_{t}^{(i)}(Y_{t+1:T}^{*(i)})$, and the fourth equality follows from $(i)$. Finally, for $i\in\mN$ satisfying $(ii)$, we have
\begin{align*}
\P\Big(\rho_{t}^{(i)}
        \big(\hat{Y}_{t+1}^{(i)}&+R_{t+1}^{(i)}(Y_{t+2:T}^{*(i)})\big)
    < R_{t}^{(i)}(Y_{t+1:T}^{*(i)})
\Big) \\
&=\ \P\left(\rho_{t}^{(i)}
    \big(\tilde{Y}_{t+1}^{(i)}+R_{t+1}^{(i)}(Y_{t+2:T}^{*(i)})\big)
        +\epsilon^{(i)}
    < R_{t}^{(i)}(Y_{t+1:T}^{*(i)})
\right) \\
&\geq\ \P\left(
    \left\{
        \rho_{t}^{(i)}\big(\tilde{Y}_{t+1}^{(i)} 
            + R_{t+1}^{(i)}(Y_{t+2:T}^{*(i)})\big)+w^{(i)}< 
        R_{t}^{(i)}(Y_{t+1:T}^{*(i)})
    \right\} \cap 
    \left\{
        v^{(i)}<0
    \right\}
\right)\\
&=\P\left(
    \left\{
        w^{(i)}< -v^{(i)}
    \right\} \cap 
    \left\{
        v^{(i)}<0
    \right\}
\right) 
> \ 0,
\end{align*}

\noindent where the first equality follows from translation invariance and the last inequality follows from $(ii)$. Hence, $\Y_{t+1:T}^*\notin\PO_t$, which is a contradiction.
\end{proof}

\medskip

The quantity $\sum_{i\in\mN}R_{t}^{(i)}(Z_{t+1:T}^{(i)})$ is the total risk-to-go for all agents at time $t$ for the allocation process $\Z_{t+1:T}$. Thus, Lemma \ref{lemma:pot_rep} states that $\Z_{t+1:T}\in\PO_t$ if and only if we cannot obtain a strict improvement in the total risk-to-go at time $t$ by changing only the allocation of $S_{t+1}$. Furthermore, Lemma \ref{lemma:pot_rep} allows us to derive an inf-convolution result for allocations that are PO at time $t$. 

\medskip

\begin{proposition} \label{prop:pot_inf} Let $t\in\mfT$ and $\Y_{t+2:T}^*$ be an allocation of $S_{t+2:T}$. Then, $\{\Z_{t+1}, \Y_{t+2:T}^*\}\in\PO_{t}$ if and only if $\Z_{t+1}$ attains the infimum
\begin{equation} \label{eq:inf_t}
\underset{\left\{
    \W_{t+1}~|~\{\W_{t+1}, \Y_{t+2:T}^{*}\}\in\IR_{t}
\right\}} \inf \,                           
\psi\left(\sum_{i\in\mN}\rho_{t}^{(i)}
        \left(W_{t+1}^{(i)}+R_{t+1}^{(i)}
            \left(Y_{t+2:T}^{*(i)}\right)
        \right)
    \right),
\end{equation}

\medskip

\noindent for some monotone $\psi:\L_{t}^{\infty}\to\R$ that is strictly monotone on the set
\[
\aleph:=
    \left\{
        V\in\L_{t}^{\infty} ~|~ V = \sum_{i\in\mN}\rho_{t}^{(i)}
            \big(V_{t+1}^{(i)}+R_{t+1}^{(i)}(Y_{t+2:T}^{*(i)})\big)
        \ \text{ for some } \ \{\V_{t+1}, \Y_{t+2:T}^*\}\in\IR_{t}
    \right\}.
\]
\end{proposition}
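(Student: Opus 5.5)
The plan is to mirror the static inf-convolution argument, using Lemma \ref{lemma:pot_rep} as the bridge between Pareto optimality at time $t$ and the aggregate risk-to-go. Throughout, fix $t$ and $\Y_{t+2:T}^*$, and abbreviate
\[
\Sigma(\W_{t+1}):=\sum_{i\in\mN}\rho_{t}^{(i)}\big(W_{t+1}^{(i)}+R_{t+1}^{(i)}(Y_{t+2:T}^{*(i)})\big)\in\L_t^\infty,
\]
so that $\aleph=\{\Sigma(\V_{t+1}) : \{\V_{t+1},\Y_{t+2:T}^*\}\in\IR_t\}$. Here I read ``strictly monotone on $\aleph$'' in the sense of Definition \ref{definition:properties}: for $U,V\in\aleph$ with $U\le V$ and $\P(U<V)>0$ we have $\psi(U)<\psi(V)$.

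\emph{Sufficiency.} Suppose $\Z_{t+1}$ attains the infimum \eqref{eq:inf_t} for such a $\psi$; in particular $\{\Z_{t+1},\Y^*_{t+2:T}\}\in\IR_t$. If this process were not in $\PO_t$, Lemma \ref{lemma:pot_rep} would give some admissible $\Y_{t+1}$ with $\Sigma(\Y_{t+1})\le\Sigma(\Z_{t+1})$ a.s. and $\P(\Sigma(\Y_{t+1})<\Sigma(\Z_{t+1}))>0$. Note that $\Sigma(\Z_{t+1})=\sum_i R_t^{(i)}(Z_{t+1:T}^{(i)})$ by Definition \ref{definition:risk_to_go}. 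Both values lie in $\aleph$, so strict monotonicity of $\psi$ on $\aleph$ gives $\psi(\Sigma(\Y_{t+1}))<\psi(\Sigma(\Z_{t+1}))$. This contradicts minimality, so $\{\Z_{t+1},\Y^*_{t+2:T}\}\in\PO_t$.

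\emph{Necessity.} Suppose $\{\Z_{t+1},\Y^*_{t+2:T}\}\in\PO_t$. I need to construct a $\psi$ for which $\Z_{t+1}$ is a minimiser. The natural choice, mirroring the static case, is the expectation $\psi(V)=\E[V]$. It is monotone on all of $\L_t^\infty$ and strictly monotone there, hence on $\aleph$. It then remains to show $\E[\Sigma(\Z_{t+1})]\le\E[\Sigma(\W_{t+1})]$ for every admissible $\W_{t+1}$.

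This is the main obstacle. Lemma \ref{lemma:pot_rep} only says that no admissible $\W_{t+1}$ dominates $\Z_{t+1}$ almost surely in aggregate, and $\Sigma(\W_{t+1})$ may be smaller on one event and larger on another. To handle this I would use $\F_t$-measurable pasting:
\begin{enumerate}[label=$(\alph*)$]
\item Given $\W_{t+1}$, set $A:=\{\Sigma(\W_{t+1})<\Sigma(\Z_{t+1})\}\in\F_t$ and $\W'_{t+1}:=\Id_A\W_{t+1}+\Id_{A^c}\Z_{t+1}$. This is still an allocation of $S_{t+1}$.
\item Show that the one-step measures are local, i.e.\ $\rho_t(\Id_A X+\Id_{A^c}Y)=\Id_A\rho_t(X)+\Id_{A^c}\rho_t(Y)$. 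I would derive this from monotonicity and translation invariance via the standard argument (each $\rho_t^{(i)}$ is $\L^\infty_t$-Lipschitz, and one can add $\pm c\Id_A$ with $c$ a bound). Locality preserves $\IR_t$ for $\W'_{t+1}$ and gives $\Sigma(\W'_{t+1})=\min\{\Sigma(\W_{t+1}),\Sigma(\Z_{t+1})\}$.
\item Lemma \ref{lemma:pot_rep} then forces $\P(A)=0$, so $\Sigma(\Z_{t+1})\le\Sigma(\W_{t+1})$ almost surely.
\end{enumerate}
This pointwise inequality gives the expectation inequality at once, so $\Z_{t+1}$ attains \eqref{eq:inf_t} with $\psi=\E$. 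I expect the locality step in (b) to require the most care.
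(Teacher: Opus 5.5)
Your proof is correct. The sufficiency half matches the paper's. The necessity half takes a genuinely different route.

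\textbf{Where the work goes.} The paper picks the $\Z$-dependent functional $\psi_Z(W):=\E[\max\{W,Z\}]$ with $Z:=\Sigma(\Z_{t+1})$. Minimality of $\Z_{t+1}$ is then immediate, since $\E[\max\{V,Z\}]\ge\E[Z]$, and all the work goes into showing that $\psi_Z$ is strictly monotone on $\aleph$, using Lemma \ref{lemma:pot_rep}. You pick the universal $\psi=\E$. Strict monotonicity is then immediate, and the work goes into showing that $Z$ is the almost-sure minimum of $\aleph$, by pasting on $A=\{\Sigma(\W_{t+1})<Z\}\in\F_t$.

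\textbf{What your route buys.} Your conclusion is stronger: every element of $\PO_t$ satisfies $Z\le V$ a.s. for all $V\in\aleph$. So a single $\psi$ serves every allocation, and Corollary \ref{corollary:char_dpo} is actually an equivalence. The locality you need does follow as you sketch, from monotonicity and translation invariance with $\F_t$-measurable shifts. If $c\ge\|X\|_\infty+\|Y\|_\infty$, then $X-c\Id_{A^c}\le\Id_AX+\Id_{A^c}Y\le X+c\Id_{A^c}$. This pins down $\rho_t$ on $A$, and the same argument works on $A^c$.

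\textbf{The paper's argument needs your pasting step.} Suppose $\E[\max\{\hat Y,Z\}]=\E[\max\{\tilde Y,Z\}]$ with $\tilde Y\le\hat Y$. This only yields $\hat Y\le Z$ on $\{\tilde Y<\hat Y\}$; it does not give $\P(\hat Y\le Z)=1$. On $\{\tilde Y=\hat Y\}$ both may exceed $Z$. In that case neither lies below $Z$ almost surely, so Lemma \ref{lemma:pot_rep} cannot be applied directly. Pasting $\tilde Y$ with $Z$ across an $\F_t$-event, as in your steps (a)--(c), closes this gap. Once it is available, $\psi_Z$ agrees with $\E$ on $\aleph$, and the two constructions coincide.
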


\begin{proof}
Let $\{\Z_{t+1}, \Y_{t+2:T}^*\}\in\PO_{t}$ and  $Z:=\sum_{i\in\mN}\rho_{t}^{(i)}\big(Z_{t+1}^{(i)}+R_{t+1}^{(i)}(Y_{t+2:T}^{*(i)})\big)$. Define the functional $\psi_Z: \L_{t}^{\infty} \to \R$ by $\psi_Z(W) := \E[\max\{W, Z\}]$. We first show that $\Z_{t+1}$ attains the infimum \eqref{eq:inf_t} for the mapping $\psi_Z$. For $\{\Y_{t+1}, \Y_{t+2:T}^*\}\in\IR_{t}$, let $Y:=\sum_{i\in\mN}\rho_{t}^{(i)}\big(Y_{t+1}^{(i)}+R_{t+1}^{(i)}(Y_{t+2:T}^{*(i)})\big)$. Then,
\[
\psi_Z(Y) = \E[\max\{Y, Z\}] \geq \E[Z] = \psi_Z(Z).
\]

\noindent It remains to show that $\psi_Z$ is strictly monotone on $\aleph$. Let $\tilde{Y}, \, \hat{Y}\in \aleph$. Then, by definition of $\aleph$, there exist some $\{\tilde{\Y}_{t+1}, \Y_{t+2:T}^*\}, \, \{\hat{\Y}_{t+1}, \Y_{t+2:T}^*\}\in\IR_{t}$, such that 
\[
\tilde{Y}=\sum_{i\in\mN}\rho_{t}^{(i)}
    \big(\tilde{Y}_{t+1}^{(i)}+R_{t+1}^{(i)}(Y_{t+2:T}^{*(i)})\big)
\ \ \hbox{ and } \ \ 
\hat{Y}=\sum_{i\in\mN}\rho_{t}^{(i)}
    \big(\hat{Y}_{t+1}^{(i)}+R_{t+1}^{(i)}(Y_{t+2:T}^{*(i)})\big).
\] 

\noindent Assume, without loss of generality, that $\tilde{Y}\leq \hat{Y}$. Then,
\[
\psi_Z(\hat{Y})-\psi_Z(\tilde{Y}) = 
    \E[\max\{\hat{Y}, Z\}]-\E[\max\{\tilde{Y}, Z\}] 
\geq 0.
\]

\noindent To show strict monotonicity on $\aleph$, we need to show that $\psi_Z(\tilde{Y})<\psi(\hat{Y})$ whenever $\tilde{Y}\leq \hat{Y}$ and $\P(\tilde{Y}<\hat{Y})>0$. Assume, by way of contradiction, that $\P(\tilde{Y}<\hat{Y})>0$ and $\psi_Z(\tilde{Y})=\psi_Z(\hat{Y})$. By definition of $\psi_Z$, we have 
\[
\E[\max\{\hat{Y}, Z\}-\max\{\tilde{Y}, Z\}] = 0.
\]

\noindent Since the non-negative random variable $\max\{\hat{Y}, Z\}-\max\{\tilde{Y}, Z\}$ is strictly positive whenever $\P(\hat{Y} > Z)$ or $\P(\tilde{Y} > Z)$, we have $\P(\hat{Y}\leq Z)=\P(\tilde{Y}\leq Z)=1$. Furthermore, as $\{\Z_{t+1}, \Y_{t+2:T}^*\}\in\PO_{t}$, Lemma \ref{lemma:pot_rep} implies that $\P(\hat{Y}<Z)=\P(\tilde{Y}<Z)=0$. Consequently,
\[
\P(\hat{Y}=Z)=\P(\tilde{Y}=Z)=\P(\hat{Y}=\tilde{Y})=1,
\] 

\noindent thereby contradicting the assumption that $\P(\tilde{Y}<\hat{Y})>0$. Therefore, $\psi$ is strictly monotone on $\aleph$.

\smallskip

Conversely, suppose that $\Z_{t+1}'$ attains the infimum \eqref{eq:inf_t}, for some $\psi$ that is strictly monotone on $\aleph$. Assume, by way of contradiction, that $\{\Z'_{t+1}, \Y^*_{t+2:T}\}\notin\PO_{t}$. By Lemma \ref{lemma:pot_rep}, there exists $\Y_{t+1}$ such that $\{\Y_{t+1}, \Y^*_{t+2:T}\}\in\IR_{t}$,
\begin{align} 
&\P\left(\sum_{i\in\mN}\rho_{t}^{(i)}
    \big(Y_{t+1}^{(i)}+R_{t+1}^{(i)}(Y_{t+2:T}^{*(i)})\big)
        \leq \sum_{i\in\mN}\rho_{t}^{(i)}
    \big(Z_{t+1}'^{(i)}+R_{t+1}^{(i)}(Y_{t+2:T}^{*(i)})\big)
\right)=1, \ \text{and} \label{eq:proof_prop4} \\
&\P\left(\sum_{i\in\mN}\rho_{t}^{(i)}
    \big(Y_{t+1}^{(i)}+R_{t+1}^{(i)}(Y_{t+2:T}^{*(i)})\big)
        < \sum_{i\in\mN}\rho_{t}^{(i)}
    \big(Z_{t+1}'^{(i)}+R_{t+1}^{(i)}(Y_{t+2:T}^{*(i)})\big)
\right)>0. \label{eq:proof_prop5}
\end{align}

\noindent Since $\psi$ is strictly monotone on $\aleph$, \eqref{eq:proof_prop4} and \eqref{eq:proof_prop5} imply that 
\[
\psi\left(\sum_{i\in\mN}\rho_{t}^{(i)}
    \big(Y_{t+1}^{(i)}+R_{t+1}^{(i)}(Y_{t+2:T}^{*(i)})\big)
\right) <
\psi\left(\sum_{i\in\mN}\rho_{t}^{(i)}
    \big(Z_{t+1}'^{(i)}+R_{t+1}^{(i)}(Y_{t+2:T}^{*(i)})\big)
\right),
\]

\noindent which contradicts the optimality of $\Z_{t+1}'$ for \eqref{eq:inf_t}. Thus, $\{\Z'_{t+1}, \Y_{t+2:T}^*\}\in\PO_{t}$.
\end{proof}

\medskip

The set $\aleph$ is the set of attainable risks at time $t$ given an allocation $\Y_{t+2:T}^{*}$. Moreover, when $T=1$, we recover Proposition 3.3 from \citet{ghossoub2026efficiency} by taking $\psi$ to be the identity function. For $T\geq 2$, the total risk-to-go at all future time points is a random variable and thus we need to introduce a real-valued functional $\psi$ into our optimization problem. When the functional $\psi$ is strictly monotone on the set of attainable risks at time $t$, then the solution to the optimization problem \eqref{eq:inf_t} corresponds to a PO allocation at time $t$. Conversely, for any allocation $\{\Z_{t+1}, \Y_{t+2:T}^*\}$ that is PO at time $t$, there exists a functional $\psi$ that is strictly monotone on the set of attainable risks at time $t$ such that $\Z_{t+1}$ solves the optimization problem \eqref{eq:inf_t}. Combining Proposition \ref{prop:pot_inf} and Lemma \ref{lemma:dpo} yields the main characterization result for this section.

\medskip

\begin{theorem} \label{theorem:char_dpo} $\Y_{1:T}\in\DPO$ if and only if for each $t\in\{T, \ldots, 1\}$, $\Y_{t}^*$ is defined recursively as a solution to
\begin{equation}\label{eq:inf_dpo}
\underset{\left\{
    \Y_{t}~|~\{\Y_{t}, \Y_{t+1:T}^{*}\}\in\IR_{t-1}
\right\}} \argmin \, 
\psi_{t-1}\left(\sum_{i\in\mN}\rho_{t-1}^{(i)}
    \big(Y_{t}^{(i)}+R_{t}^{(i)}(Y_{t+1:T}^{*(i)})\big)
\right),
\end{equation}

\medskip

\noindent for some $\psi_{t-1}:\L_{t-1}^{\infty}\to\R$ satisfying the conditions of Proposition \ref{prop:pot_inf}. 
\end{theorem}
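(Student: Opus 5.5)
The plan is to combine Lemma \ref{lemma:dpo} with Proposition \ref{prop:pot_inf}, applied once at each time index. Reindex with $s=t-1\in\mfT$, so that $t$ ranges over $\{T,\ldots,1\}$. By Lemma \ref{lemma:dpo}, $\Y^*_{1:T}\in\DPO$ if and only if $\Y^*_{t:T}\in\PO_{t-1}$ for every $t\in\{1,\ldots,T\}$. Proposition \ref{prop:pot_inf}, applied at time $t-1$ with the tail $\Y^*_{t+1:T}$ held fixed, says that $\{\Y^*_t,\Y^*_{t+1:T}\}\in\PO_{t-1}$ if and only if $\Y^*_t$ attains the infimum in \eqref{eq:inf_dpo} for some monotone $\psi_{t-1}$ that is strictly monotone on the corresponding attainable set $\aleph_{t-1}$. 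Chaining these equivalences over all $t$ gives the statement.

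For the forward direction, take $\Y^*_{1:T}\in\DPO$. Lemma \ref{lemma:dpo} gives $\Y^*_{t:T}\in\PO_{t-1}$ for each $t$. The tail $\Y^*_{t+1:T}$ is an allocation of $S_{t+1:T}$, which is the hypothesis Proposition \ref{prop:pot_inf} requires. So for each $t$ there is a $\psi_{t-1}$ (for instance $\psi_{t-1}(W)=\E[\max\{W,Z_{t-1}\}]$ as in that proof) such that $\Y^*_t$ lies in the argmin \eqref{eq:inf_dpo}. Reading these for $t=T,T-1,\ldots,1$ shows that $\Y^*_t$ is a solution of the recursive problem, where each problem uses the already determined $\Y^*_{t+1:T}$.

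For the converse, suppose $\Y^*_T,\Y^*_{T-1},\ldots,\Y^*_1$ are built backward, with each $\Y^*_t$ a minimizer of \eqref{eq:inf_dpo} given the previously chosen $\Y^*_{t+1:T}$ and an admissible $\psi_{t-1}$. The constraint set forces $\{\Y^*_t,\Y^*_{t+1:T}\}\in\IR_{t-1}\subseteq\A_{t:T}$, so each component sums to $S_t$ and the assembled process lies in $\A_{1:T}$. The reverse implication of Proposition \ref{prop:pot_inf} then gives $\Y^*_{t:T}\in\PO_{t-1}$ for every $t$, and Lemma \ref{lemma:dpo} yields $\Y^*_{1:T}\in\DPO$.

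The main obstacle is bookkeeping rather than analysis. The feasible set and the attainable set $\aleph_{t-1}$ at step $t$ depend on the tail $\Y^*_{t+1:T}$ chosen in earlier steps, so $\psi_{t-1}$ has to be taken strictly monotone on the $\aleph$ that corresponds to that specific tail. I also need the step $t=T$ to fit the proposition under the convention $\{\Y_T,\Y^*_{T+1:T}\}=\Y_T$ with $R_T=0$. Finally, the recursion can only run if each step's feasible set is nonempty. In the forward direction this is automatic because $\Y^*_t$ itself is feasible, and in the converse the existence of the minimizers is part of the hypothesis.
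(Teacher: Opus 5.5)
Your proposal is correct and is essentially the paper's own proof. Lemma~\ref{lemma:dpo} reduces membership in $\DPO$ to $\Y^*_{t:T}\in\PO_{t-1}$ for every $t$, and Proposition~\ref{prop:pot_inf}, applied at time $t-1$ with the already-fixed tail $\Y^*_{t+1:T}$, converts each such membership into the argmin condition \eqref{eq:inf_dpo} in both directions. Your explicit reindexing $s=t-1\in\mfT$, and your remarks on the tail being an allocation and on the $t=T$ convention, are careful bookkeeping that the paper leaves implicit.
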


\smallskip

\begin{proof}
Suppose that $\Y_{1:T}^*$ is constructed recursively according to the theorem. Then, $\Y_{t+1:T}^*\in\PO_t$ for all $t\in\T$ by Proposition \ref{prop:pot_inf} and therefore $\Y_{1:T}^*\in\DPO$ by Lemma \ref{lemma:dpo}. 

\smallskip

Conversely, suppose that $\Y_{1:T}^*\in\DPO$. Then, $\Y_{t+1:T}^*\in\PO_t$ for all $t\in\T$ by Lemma \ref{lemma:dpo} and hence $\Y_{1:T}^*$ has the desired recursive construction by Proposition \ref{prop:pot_inf}. 
\end{proof}

\medskip

Theorem \ref{theorem:char_dpo} states that an allocation is PO if and only if it is the solution to a series of recursive (backward in time) $\psi$ inf-convolution problems of the form \eqref{eq:inf_dpo}. Moreover, the only requirement on $\psi$ is strict monotonicity on the set of attainable risks at time $t$. Since the expected value is strictly monotone on $\L_t^{\infty}$ for all $t\in\T$, any solution to the expected value inf-convolution problem corresponds to an allocation that is PO at time $t$. We state this formally in the following corollary.

\medskip

\begin{corollary} \label{corollary:char_dpo} For $t\in\{T, \ldots, 1\}$, define $\Y_{t}^*$ recursively as a solution to
\begin{equation}\label{eq:inf_dpo_ev}
\underset{\left\{
    \Y_{t}~|~\{\Y_{t}, \Y_{t+1:T}^{*}\}\in\IR_{t-1}
\right\}} \argmin \, 
\E\left[\sum_{i\in\mN}\rho_{t-1}^{(i)}
    \big(Y_{t}^{(i)}+R_{t}^{(i)}(Y_{t+1:T}^{*(i)})\big)
\right].
\end{equation}

\medskip

\noindent Then, $\Y_{1:T}\in\DPO$. 
\end{corollary}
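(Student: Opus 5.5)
The plan is to deduce the corollary directly from Theorem \ref{theorem:char_dpo}: at every step $t\in\{T,\ldots,1\}$, choose $\psi_{t-1}:=\E[\cdot]$ restricted to $\L_{t-1}^{\infty}$, and check that this choice satisfies the hypotheses of Proposition \ref{prop:pot_inf}. Once that is done, the recursively defined $\Y_t^*$ in \eqref{eq:inf_dpo_ev} is exactly a recursive solution of \eqref{eq:inf_dpo}. The "if" direction of Theorem \ref{theorem:char_dpo} (equivalently, Proposition \ref{prop:pot_inf} at each $t$ combined with Lemma \ref{lemma:dpo}) then gives $\Y_{1:T}^*\in\DPO$.

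The key step is to verify that $\psi(V)=\E[V]$ is monotone on $\L_{t-1}^{\infty}$ and strictly monotone on the relevant set $\aleph$. Monotonicity is immediate from linearity and positivity of expectation. For strict monotonicity I will show it on all of $\L_{t-1}^\infty$, which contains $\aleph$. Take $V\le W$ with $\P(V<W)>0$. Then $W-V\ge0$ and $W-V$ is not a.s. zero, so $\E[W-V]>0$; this is well defined because both variables are essentially bounded.

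The remaining checks are bookkeeping. First, the recursion must be well posed: at step $t$ the set $\{\Y_t:\{\Y_t,\Y^*_{t+1:T}\}\in\IR_{t-1}\}$ is nonempty. I expect to argue this inductively by showing it contains $\X_t$, which would follow because $\Y^*_{t+1:T}\in\IR_t$ and monotonicity of $\rho_{t-1}^{(i)}$ give $\{\X_t,\Y^*_{t+1:T}\}\in\IR_{t-1}$. This requires $R_t^{(i)}(Y^{*(i)}_{t+1:T})\le R_t^{(i)}(X^{(i)}_{t+1:T})$, which is precisely IR at time $t$. Second, I must confirm that $\psi_{t-1}$ is real valued: the argument of $\E$ lies in $\L_{t-1}^\infty$ because the conditional risk measures map bounded processes to bounded variables.

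The corollary tacitly assumes that the argmin exists, so existence is not something I need to prove. The only genuine content is therefore the strict-monotonicity check for the expectation, and I expect no real obstacle there.
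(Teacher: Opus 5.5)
Your proposal is correct and follows the paper's own justification: the corollary is obtained from Theorem \ref{theorem:char_dpo} by taking $\psi_{t-1}=\E[\cdot]$, which is monotone on $\L_{t-1}^{\infty}$ and strictly monotone on all of it, hence on $\aleph$. Your extra check that $\X_t$ lies in the feasible set at each step is valid, since it uses $\Y^*_{t+1:T}\in\IR_t$ and monotonicity of $\rho_{t-1}^{(i)}$. It is not needed, however, because the statement presupposes that the argmin is attained.
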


\bigskip

\section{Comonotone Allocations} \label{sec:cdpo}

In this section, we define a new notion of comonotonicity for allocation processes and study processes that are comonotone dynamic PO (which we just call $c$-PO). After introducing a dynamic comonotone improvement theorem in Section \ref{sec:com}, we define the notions of comonotone myopic Pareto optimality and comonotone Pareto optimality in Sections \ref{sec:cmPO} and \ref{sec:cDPO}. Then, we prove an analogue to Theorem \ref{theorem:char_dpo} for comonotone Pareto optima in Section \ref{sec:cdpo_char} and discuss the connection between the comonotone and unconstrained markets in Section \ref{sec:dpo_cdpo}. Finally, we derive an explicit characterization of $c$-PO allocations in Section \ref{sec:cdpo_ex_char}.

\medskip

\subsection{Comonotone Improvement} 
\label{sec:com}

Before defining a new notion of comonotonicity that is compatible with our dynamic setting, we first recall the classical definition of comonotonicity for random vectors. We say that a random vector $\Z$ is comonotone if there exists a random variable $W$ and non-decreasing functions $f^{(i)}:\R\to\R$ such that $Z^{(i)}=f^{(i)}(W)$ for all $i\in\mN$. Moreover, by \citet{denneberg1994non} Proposition 4.5, comonotonicity of $\Z$ is equivalent to the  existence of non-decreasing functions $g^{(i)}:\R\to\R$ such that for all $i \in \mN$, it holds that $Z^{(i)}=g^{(i)}(\sum_{j\in\mN}Z^{(j)})$. When PO allocations are comonotone in the static setting, then they are also comonotone with the aggregate, and thus agents have no incentive to under-report their initial risks in a risk-sharing problem. In an insurance setting, policyholders have no incentive to over-report their losses in a comonotone market, and hence comonotonicity is often referred to as the no-sabotage condition \citep{carlier2003pareto}.

In the dynamic setting, agents are not only concerned with the risk of the aggregate endowment at a fixed time, but also the risk-to-go. Hence, we propose the following notion of a comonotone process with respect to the dynamic risk measures $\{\rho^{(i)}_{t:T}\}_{t \in \mfT, i \in \mN}$. Unlike the static setting, our notion of comonotonicity depends on the preferences of the agents. However, as all agent preferences are fixed throughout, we simply say that $\Y_{1:T}$ is comonotone and omit the dependence on $\{\rho^{(i)}_{t:T}\}_{t \in \mfT, i \in \mN}$ whenever there is no confusion. 

\medskip

\begin{definition} [Comonotone Process] \label{definition:com_process} A process $\Y_{1:T}$ is comonotone with respect to $\{\rho^{(i)}_{t:T}\}_{t \in \mfT, i \in \mN}$ if for any $t\in\T$, the random vector $\big(Y_t^{(1)}+R_t^{(1)}(Y_{t+1:T}^{(1)}), \ldots, Y_t^{(n)}+R_t^{(n)}(Y_{t+1:T}^{(n)})\big)$ is comonotone (in the classical sense). We denote by $\A_{1:T}^C$ the set of comonotone processes.
\end{definition}

For $\tau\in\T$, if the random vector $\big(Y_s^{(1)}+R_s^{(1)}(Y_{s+1:T}^{(1)}), \ldots, Y_s^{(n)}+R_s^{(n)}(Y_{s+1:T}^{(n)})\big)$ is comonotone for all $s\geq\tau$, then we say that $\Y_{\tau:T}$ is a comonotone allocation of $S_{\tau:T}$ and denote by $\A_{\tau:T}^C$ the set of comonotone allocations of $S_{\tau:T}$. Moreover, as the risk-to-go term vanishes at the terminal time, $\Y_T$ is a comonotone allocation of $S_T$ in the classical sense (with respect to any dynamic risk measure). Hence, comonotonicity ensures that agents have no incentive to misreport their endowment or loss at the terminal time. Before the terminal time, agents do not have the incentive to misreport the sum of their endowment and the risk-to-go. 

Before we state the dynamic comonotone improvement theorem, we first recall the definition of the convex order. We refer to \citet{shaked2007stochastic} for a detailed discussion of stochastic orders.

\medskip

\begin{definition} [Convex Order] \label{definition:convex_order}
Let $Y, \, Z\in\L_T^{\infty}$. We say that $Z$ dominates $Y$ in the convex order, denoted $Y\cvx Z$, if for every convex function $\phi:\R\to\R$, 
\[
\E[\phi(Y)]\leq\E[\phi(Z)],
\] 

\noindent whenever the expectations are well defined. Moreover, if the above inequality is strict for some strictly convex function $\phi$, then we say that $Z$ strictly dominates $Y$ in the convex order, and use the notation $Y\scvx Z$. 
\end{definition}

We are now ready to state the dynamic comonotone improvement theorem. The proof of this result relies heavily on the comonotone improvement from the static case (see Theorem 3.1 of \citet{carlier2012pareto} or Proposition \ref{prop:static} $(ii)$ in Appendix \ref{appendix:single_period}). 

\medskip

\begin{lemma} [Dynamic Comonotone Improvement Theorem]\label{lemma:com_improvement}
For each $\Y_{1:T}\in\A_{1:T}$, there exists $\tilde{\Y}_{1:T}\in\A_{1:T}^C$ such that for all $i\in\mN$ and $s\in\T$,
\begin{equation} \label{eq:com_improvement}
\tilde{Y}_s^{(i)}+R_s^{(i)}(\tilde{Y}_{s+1:T}^{(i)})
    \cvx 
Y_s^{(i)}+R_s^{(i)}(\tilde{Y}_{s+1:T}^{(i)}).
\end{equation}

\smallskip

\noindent Moreover, if $\Y_{1:T}\notin\A_{1:T}^C$, then $\tilde{\Y}_{1:T}$ can be chosen such that for some $s\in\T$, there exists $j\in\mN$ such that
\begin{equation} \label{eq:com_improvement_2}
\tilde{Y}_s^{(j)}+R_s^{(j)}(\tilde{Y}_{s+1:T}^{(j)})
    \scvx 
Y_s^{(j)}+R_s^{(j)}(\tilde{Y}_{s+1:T}^{(j)}).
\end{equation}
\end{lemma}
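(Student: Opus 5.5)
We construct $\tilde{\Y}_{1:T}$ by backward recursion in time, applying the static comonotone improvement theorem (Proposition \ref{prop:static} $(ii)$, i.e.\ Theorem 3.1 of \citet{carlier2012pareto}) once per period. The right-hand side of \eqref{eq:com_improvement} uses the original $Y_s^{(i)}$ but the \emph{improved} future $\tilde{Y}_{s+1:T}$. This makes the recursion natural: at each step we only need to improve the current-period allocation while holding the already-constructed future fixed.

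At $s=T$ the risk-to-go vanishes. We apply the static theorem to the vector $\Y_T$, whose components sum to $S_T$, and obtain a comonotone $\tilde{\Y}_T$ with $\sum_i\tilde{Y}_T^{(i)}=S_T$ and $\tilde{Y}_T^{(i)}\cvx Y_T^{(i)}$ for all $i$. We also record that, if $\Y_T$ is not comonotone, the improvement can be taken strict for some $j$. This is the strict version of the static result.

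Now suppose $\tilde{\Y}_{s+1:T}$ has been constructed for some $s<T$. Set $V^{(i)}:=Y_s^{(i)}+R_s^{(i)}(\tilde{Y}_{s+1:T}^{(i)})$ and $\Gamma_s:=S_s+\sum_i R_s^{(i)}(\tilde{Y}_{s+1:T}^{(i)})$, so that $\sum_i V^{(i)}=\Gamma_s$. Apply the static theorem to $\V$ to get a comonotone $\tilde{\V}$ with $\sum_i\tilde V^{(i)}=\Gamma_s$ and $\tilde{V}^{(i)}\cvx V^{(i)}$. Then define
\[
\tilde{Y}_s^{(i)}:=\tilde V^{(i)}-R_s^{(i)}(\tilde{Y}_{s+1:T}^{(i)}).
\]
This gives $\sum_i\tilde{Y}_s^{(i)}=S_s$. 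The vector $\big(\tilde Y_s^{(i)}+R_s^{(i)}(\tilde Y_{s+1:T}^{(i)})\big)_i=\tilde{\V}$ is comonotone, and \eqref{eq:com_improvement} at time $s$ is exactly $\tilde V^{(i)}\cvx V^{(i)}$. Since later components are never modified, the comonotonicity conditions at times $>s$ established earlier are preserved. Hence $\tilde{\Y}_{1:T}\in\A_{1:T}^C$ after the recursion reaches $s=1$.

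\textbf{Measurability.} We need $\tilde Y_s^{(i)}\in\L_s^\infty$. Each $R_s^{(i)}(\tilde Y_{s+1:T}^{(i)})$ lies in $\L_s^\infty$ by construction of the one-step conditional risk measures, so $\V$ and $\Gamma_s$ are $\F_s$-measurable and bounded. The static improvement produces $\tilde V^{(i)}=f^{(i)}(\Gamma_s)$ for nondecreasing (Lipschitz) $f^{(i)}$, which is therefore also in $\L_s^\infty$. Non-atomicity of the probability space is needed for the static theorem. We should apply it on $(\Omega,\F_s,\P)$, or simply note that the result only uses the law of $\Gamma_s$ together with conditional expectations given $\Gamma_s$; I expect this causes no difficulty.

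\textbf{Strict part (main obstacle).} If $\Y_{1:T}\notin\A_{1:T}^C$, we want strict improvement \eqref{eq:com_improvement_2} at some $s$. The subtlety is that non-comonotonicity of $\Y$ is defined relative to the risk-to-go of $Y$'s own future, while the recursion uses $\tilde Y$'s future. To handle this, let $s^*$ be the largest time at which the recursion encounters a non-comonotone $\V$. At every later time the vector is already comonotone, so we can take the improvement to be the identity, i.e.\ $\tilde{\Y}_u=\Y_u$ for all $u>s^*$. Such a time must exist: if every encountered $\V$ were comonotone, this identity choice would give $\tilde{\Y}=\Y$ and hence $\Y\in\A^C_{1:T}$, a contradiction. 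With this choice, $\V$ at time $s^*$ is the genuine vector $\big(Y_{s^*}^{(i)}+R_{s^*}^{(i)}(Y_{s^*+1:T}^{(i)})\big)_i$, which is non-comonotone. The strict static improvement therefore yields some $j$ with $\tilde V^{(j)}\scvx V^{(j)}$, which is exactly \eqref{eq:com_improvement_2}. The steps I would check most carefully are that the strict version of the static improvement (the conditional-expectation/Landsberger–Meilijson construction) indeed gives strict dominance for a non-comonotone vector, and the identity choice at comonotone steps.
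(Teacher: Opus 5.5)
Your proposal is correct and follows the paper's proof. Both work by backward induction: apply the static comonotone improvement to $\big(Y_s^{(i)}+R_s^{(i)}(\tilde{Y}_{s+1:T}^{(i)})\big)_{i}$, then subtract the risk-to-go. Two of your additions make precise what the paper compresses into ``the same recursive construction'': choosing $\tilde{\Y}_u=\Y_u$ at every step where the vector is already comonotone (so the first non-comonotone vector encountered is the genuine one for $\Y$, which is what the strict part needs), and the remark on $\mathcal{F}_s$-measurability.
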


\medskip

\begin{proof}
For $s=T$, $\Y_{s:T}$ is a random vector, so the existence of $\tilde{\Y}_T\in\A_T^C$ such that \eqref{eq:com_improvement} holds follows from Theorem 3.1 of \citet{carlier2012pareto}. Next, we proceed by induction backwards in time. For this, assume that there exists $\tilde{\Y}_{t:T}$ such that \eqref{eq:com_improvement} holds for all $s\geq t$ and let 
\[
\Z_{t-1}:=\left(
    Y_{t-1}^{(1)}+R_{t-1}^{(1)}(\tilde{Y}_{t:T}^{(1)}), \ldots, 
        Y_{t-1}^{(n)}+R_{t-1}^{(n)}(\tilde{Y}_{t:T}^{(n)})
    \right).
\] 

\noindent Then, $\Z_{t-1}$ is an allocation of $S_{t-1}+\sum_{i\in\mN}R_{t-1}^{(i)}(\tilde{Y}_{t:T}^{(i)})$. By Theorem 3.1 of \citet{carlier2012pareto}, there exists a comonotone allocation vector $\tilde{\Z}_{t-1}$ of $S_{t-1}+\sum_{i\in\mN}R_{t-1}^{(i)}(\tilde{Y}_{t:T}^{(i)})$ such that for all $i\in\mN$, 
\[
\tilde{Z}_{t-1}^{(i)} 
    \cvx 
Y_{t-1}^{(i)}+R_{t-1}^{(i)}(\tilde{Y}_{t:T}^{(i)}).
\] 

\noindent Let $\tilde{Y}_{t-1}^{(i)} := \tilde{Z}_{t-1}^{(i)}-R_{t-1}^{(i)}(\tilde{Y}_{t:T}^{(i)})$. Then, $\tilde{\Y}_{t-1:T}\in\A_{t-1:T}^C$ and $\tilde{\Y}_{t-1:T}$ satisfies \eqref{eq:com_improvement} for all $s\geq t-1$. 

\smallskip

Finally, \eqref{eq:com_improvement_2} follows from the same recursive construction (by applying the second half of Theorem 3.1 in \citet{carlier2012pareto}). 
\end{proof}

\medskip

For $T=1$, the random vector $\tilde{\Y}_{1:T}$ in Lemma \ref{lemma:com_improvement} is referred to as a comonotone improvement in the static case literature. For $T\geq 2$, the comonotone improvement $\tilde{\Y}_{1:T}$ differs from the classical notion due to the additional risk-to-go term. Furthermore, if $\Y_{t+1:T}$ is a comonotone process as defined in Definition \ref{definition:com_process}, then we can take $\tilde{\Y}_{t+1:T}=\Y_{t+1:T}$, in which case \eqref{eq:com_improvement} automatically holds for $s\geq t+1$. We state this formally in the following corollary. 

\medskip

\begin{corollary} \label{cor:com_improvement_2}
Let $t\in\T$ and $\Y_{t:T}\in\A_{t:T}$ with $\Y_{t+1:T}\in\A_{t+1:T}^C$. Then, there exists $\tilde{\Y}_{t}\in\A_t$ such that $\{\tilde{\Y}_t, \Y_{t+1:T}\}\in\A_{t:T}^C$ and
\begin{equation*}
\tilde{Y}_t^{(i)}+R_t^{(i)}(Y_{t+1:T}^{(i)}) 
    \cvx 
Y_t^{(i)}+R_t^{(i)}(Y_{t+1:T}^{(i)}),
\end{equation*}

\noindent for all $i\in\mN$. Moreover, if $\Y_{t:T}\notin\A_{t:T}^C$, then $\tilde{\Y}_{t}$ can be chosen such that there exists $j\in\mN$ with
\[
\tilde{Y}_t^{(j)}+R_t^{(j)}(Y_{t+1:T}^{(j)}) 
    \scvx 
Y_t^{(j)}+R_t^{(j)}(Y_{t+1:T}^{(j)}).
\]
\end{corollary}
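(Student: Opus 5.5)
The plan is to repeat the single induction step from the proof of Lemma \ref{lemma:com_improvement}, but only at time $t$. The tail $\Y_{t+1:T}$ is already comonotone, so it stays fixed.

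\textbf{Step 1.} Because $\Y_{t+1:T}$ is fixed, the risk-to-go terms $R_t^{(i)}(Y_{t+1:T}^{(i)})$ are fixed bounded random variables. For $t\leq T-1$ they lie in $\L_t^{\infty}$; for $t=T$ they vanish by the convention $R_T=0$. Define
\[
Z_t^{(i)}:=Y_t^{(i)}+R_t^{(i)}(Y_{t+1:T}^{(i)}), \qquad i\in\mN .
\]
Then $\Z_t$ is an allocation of the bounded $\F_t$-measurable aggregate $S_t+\sum_{i\in\mN}R_t^{(i)}(Y_{t+1:T}^{(i)})$.

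\textbf{Step 2.} Apply the static comonotone improvement theorem (Theorem 3.1 of \citet{carlier2012pareto}, i.e.\ Proposition \ref{prop:static} $(ii)$). This yields a comonotone allocation $\tilde{\Z}_t$ of the same aggregate with $\tilde Z_t^{(i)}\cvx Z_t^{(i)}$ for all $i$. Moreover, if $\Z_t$ is not comonotone, $\tilde{\Z}_t$ can be chosen with $\tilde Z_t^{(j)}\scvx Z_t^{(j)}$ for some $j$.

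\textbf{Step 3.} Set $\tilde Y_t^{(i)}:=\tilde Z_t^{(i)}-R_t^{(i)}(Y_{t+1:T}^{(i)})$. This is in $\L_t^{\infty}$, and summing over $i$ gives $\sum_i\tilde Y_t^{(i)}=S_t$, so $\tilde{\Y}_t\in\A_t$. The time-$t$ vector $\big(\tilde Y_t^{(i)}+R_t^{(i)}(Y_{t+1:T}^{(i)})\big)_i=\tilde{\Z}_t$ is comonotone. For $s\geq t+1$ the relevant vectors do not involve $\tilde{\Y}_t$ and are comonotone by hypothesis. Hence $\{\tilde{\Y}_t,\Y_{t+1:T}\}\in\A_{t:T}^C$, and the convex-order inequalities are exactly those from Step 2.

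\textbf{Step 4.} For the strict part, note that if $\Y_{t:T}\notin\A_{t:T}^C$, the only comonotonicity requirement that can fail is the one at time $t$, since $\Y_{t+1:T}$ is comonotone. So $\Z_t$ is not comonotone, and the strict half of the static theorem gives the index $j$.

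\textbf{Main obstacle.} The only real care point is checking that the static theorem applies. This requires the random variables and the shifted aggregate to be bounded and $\F_t$-measurable, which holds because the one-step risk measures map into $\L_t^{\infty}$. It also requires the static theorem to be usable on $\L_t^{\infty}$ over the non-atomic space, as in Lemma \ref{lemma:com_improvement}. Once that is granted, the corollary is the $s=t$ step of that lemma's proof with $\tilde{\Y}_{t+1:T}=\Y_{t+1:T}$.
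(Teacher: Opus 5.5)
Your proposal is correct and is essentially the paper's proof. The paper also forms $\Z_t$ by adding the fixed risk-to-go terms $R_t^{(i)}(Y_{t+1:T}^{(i)})$, applies the static comonotone improvement exactly as in the induction step of Lemma \ref{lemma:com_improvement} (with $\tilde{\Y}_{t+1:T}=\Y_{t+1:T}$), and subtracts the risk-to-go back off; your Step 4, noting that non-comonotonicity of $\Y_{t:T}$ can only occur at time $t$, just makes explicit what the paper leaves implicit for the strict part.
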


\medskip

\begin{proof}
Let $\Z_{t} := \big(Y_{t}^{(1)}+R_{t}^{(1)}(Y_{t+1:T}^{(1)}), \ldots, Y_{t}^{(n)}+R_{t}^{(n)}(Y_{t+1:T}^{(n)})\big)$. Then, construct $\tilde{\Y}_t$ in the same manner as in the proof of Lemma \ref{lemma:com_improvement}. 
\end{proof}

\medskip

To make use of comonotone improvements, we require agent preferences to be monotone with respect to the convex order. Risk measures with this property are said to be convex order preserving, which we define next. 

\medskip

\begin{definition} [Convex Order Preserving] \label{definition:conv_order_pres} A one-step conditional risk measure $\rho_{t}$ is convex order preserving if for all $Y, \, Z\in\L_{t+1}^{\infty}$ such that $Y\cvx Z$, we have $\rho_t(Y)\leq\rho_t(Z)$. If, in addition, $\P\big(\rho_t(Y)<\rho_t(Z)\big)>0$ whenever $Y\scvx Z$, then we say that $\rho_t$ is strictly convex order preserving. 
\end{definition}

\medskip

\begin{assumption} \label{assumption:convex_order} For $i\in\mN$ and $t\in\mfT$, the $i$-th agent's one-step conditional risk measure $\rho_t^{(i)}$ preserves the convex order. 
\end{assumption}

Note that if we further assume that agent preferences are monotone (i.e., Assumption \ref{assumption:tc} holds), then the risk-to-go at any time $t$ of a comonotone improvement is upper bounded by the risk-to-go of the original allocation. We refer to Section \ref{sec:examples} for an example of when this assumption is satisfied.

\subsection{Comonotone Myopic Pareto Optima} 
\label{sec:cmPO} 

In this section, we revisit the notion of myopic Pareto optimality introduced in Section \ref{sec:myopic_po}. When agents are restricted to allocation processes that are comonotone in the sense of Definition \ref{definition:com_process}, we have the following definition of comonotone myopic Pareto optimality.

\medskip

\begin{definition} [Comonotone Myopic Pareto Optimality] \label{definition:myopic_cpo} An allocation $\Y_{1:T}^*\in\A_{1:T}$ is comonotone myopic PO if 
\begin{enumerate}[label = $(\roman*)$]
    \item $\Y_{1:T}^*\in\MIR\cap\A_{1:T}^C$
    \item there does not exist an allocation process $\Y_{1:T}\in\MIR\cap\A_{1:T}^C$ that satisfies $\rho_{0,T}^{(i)}(Y_{1:T}^{(i)})\leq \rho_{0,T}^{(i)}(Y_{1:T}^{*(i)})$ for all $i\in\mN$, and at least one inequality is strict. 
\end{enumerate}

\smallskip

\noindent We denote by $\CMPO$ the set of all comonotone myopic PO allocations. 
\end{definition}

For $T=1$, comonotone myopic Pareto optimality reduces to the classical notion of comonotone Pareto optimality. Moreover, in Section \ref{sec:myopic_po}, we showed that myopic Pareto optima solve an inf-convolution problem of the form \eqref{eq:myopic_po}. Similarly, consider the following problem: 
\begin{equation}\label{eq:inf_cmpo}
\underset{\Y_{1:T} \, \in \, \MIR \, \cap \, \A_{1:T}^C} \inf \
    \sum_{i\in\mN}\rho_{0,T}^{(i)}(Y_{1:T}^{(i)}) 
=
\underset{\Y_{1:T} \, \in \, \MIR \, \cap \, \A_{1:T}^C} \inf \
    \sum_{i\in\mN}\brho^{(i)}\left(\sum_{t=1}^{T}Y_t^{(i)}\right). 
\end{equation}

\noindent The next result shows that comonotone myopic Pareto optima solve the inf-convolution problem \eqref{eq:inf_cmpo}. The arguments in the proof are similar to the static case, which we include below for completeness. 

\medskip

\begin{proposition} \label{prop:cmpo}
An allocation $\Y_{1:T}^*$ is comonotone myopic PO if and only if $\Y_{1:T}^*$ solves \eqref{eq:inf_cmpo}. 
\end{proposition}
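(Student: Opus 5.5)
The plan is to run the classical static argument on the comonotone constrained set $\MIR\cap\A_{1:T}^C$. The key tool is the translation-invariance redistribution trick already used in the proof of Lemma \ref{lemma:pot_rep}. At time $0$ everything is real-valued, since $\F_0$ is trivial, so only deterministic side payments are needed. I will work with the functionals $\rho_{0,T}^{(i)}(Y^{(i)}_{1:T})=\brho^{(i)}(\bar{Y}^{(i)})$, which are real numbers.

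\emph{Solving \eqref{eq:inf_cmpo} implies comonotone myopic PO.} Let $\Y^*_{1:T}$ attain the infimum in \eqref{eq:inf_cmpo}. Then $\Y^*_{1:T}\in\MIR\cap\A_{1:T}^C$ by feasibility. Suppose some $\Y_{1:T}\in\MIR\cap\A^C_{1:T}$ weakly dominates $\Y^*_{1:T}$ in every $\rho^{(i)}_{0,T}$, with at least one inequality strict. Summing over $i$ gives a strictly smaller value of the objective, which contradicts minimality. This direction is immediate.

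\emph{Comonotone myopic PO implies solving \eqref{eq:inf_cmpo}.} Suppose $\Y^*_{1:T}\in\CMPO$ but some $\Y_{1:T}\in\MIR\cap\A_{1:T}^C$ has strictly smaller total $\sum_i\rho_{0,T}^{(i)}$. Put $v^{(i)}:=\rho^{(i)}_{0,T}(Y^{(i)}_{1:T})-\rho^{(i)}_{0,T}(Y^{*(i)}_{1:T})\in\R$, so that $\sum_i v^{(i)}<0$. Set $c^{(i)}:=-v^{(i)}+\frac{1}{n}\sum_j v^{(j)}$; these are constants with $\sum_i c^{(i)}=0$. Define $\hat{Y}^{(i)}_1:=Y^{(i)}_1+c^{(i)}$ and $\hat{Y}^{(i)}_t:=Y^{(i)}_t$ for $t\geq2$. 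Since $\sum_i c^{(i)}=0$, the process $\hat\Y_{1:T}$ is still an allocation. By translation invariance,
\[
\rho^{(i)}_{0,T}(\hat{Y}^{(i)}_{1:T})=\rho^{(i)}_{0,T}(Y^{(i)}_{1:T})+c^{(i)}=\rho^{(i)}_{0,T}(Y^{*(i)}_{1:T})+\tfrac1n\textstyle\sum_j v^{(j)}<\rho^{(i)}_{0,T}(Y^{*(i)}_{1:T}).
\]
This gives $\hat\Y_{1:T}\in\MIR$, because $\Y^*_{1:T}\in\MIR$, and it strictly improves every agent. That contradicts $\Y^*_{1:T}\in\CMPO$, provided $\hat\Y_{1:T}\in\A^C_{1:T}$.

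\emph{Main obstacle: the perturbed process stays comonotone.} The constants are shifted only at time $1$, so the risk-to-go terms $R^{(i)}_t(\hat Y^{(i)}_{t+1:T})=R^{(i)}_t(Y^{(i)}_{t+1:T})$ are unchanged for all $t\geq1$. Hence the vectors in Definition \ref{definition:com_process} coincide with those of $\Y_{1:T}$ for $t\geq2$. At $t=1$ they are shifted coordinatewise by the constants $c^{(i)}$. Adding a constant $c^{(i)}$ to $f^{(i)}(W)$ preserves comonotonicity. Therefore $\hat\Y_{1:T}\in\A^C_{1:T}$. This must be checked carefully because the comonotone set depends on the risk-to-go. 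The point of the construction is to place the deterministic transfer at the earliest period, where it does not feed into any later risk-to-go.
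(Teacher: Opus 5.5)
Your proof is correct and follows essentially the same route as the paper: the easy direction by summing, and the converse by contradiction using deterministic side payments added only at time $1$, which redistribute the aggregate gain via translation invariance of $\rho_{0,T}^{(i)}$ while leaving every later risk-to-go unchanged, so membership in $\A_{1:T}^C$ is preserved. Your equal-split transfer $c^{(i)}=-v^{(i)}+\frac{1}{n}\sum_j v^{(j)}$ is a slightly cleaner version of the paper's partition into $\mN_1,\mN_2,\mN_3$: it makes every agent strictly better off and avoids the case analysis.
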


\begin{proof}
We first show that any allocation attaining the infimum \eqref{eq:inf_cmpo} is comonotone myopic PO. Suppose that $\Y_{1:T}^*\notin\CMPO$. Then, there exists $\tilde{\Y}_{1:T}\in\MIR\cap\A_{1:T}^C$ such that $\rho_{0,T}^{(i)}(\tilde{Y}_{1:T}^{(i)})\leq \rho_{0,T}^{(i)}(Y_{1:T}^{*(i)})$ for all $i\in\mN$, and at least one inequality is strict. Hence, 
\[
\sum_{i\in\mN}\rho_{0,T}^{(i)}(\tilde{Y}_{1:T}^{(i)}) < 
  \sum_{i\in\mN}\rho_{0,T}^{(i)}(Y_{1:T}^{*(i)}). 
\]

\noindent Thus, $\Y_{1:T}^*$ does not attain the infimum \eqref{eq:inf_cmpo} and we conclude that any allocation attaining the infimum \eqref{eq:inf_cmpo} must be comonotone myopic PO.

\smallskip

Conversely, assume, by way of contradiction, that there exists $\Y_{1:T}^*\in\CMPO$ that does not attain the infimum \eqref{eq:inf_cmpo}. Then, there exists $\tilde{\Y}_{1:T}\in\MIR\cap\A_{1:T}^C$ such that 
\begin{equation} \label{eq:inf_cmpo_proof}
\sum_{i\in\mN}\rho_{0,T}^{(i)}(\tilde{Y}_{1:T}^{(i)}) < 
  \sum_{i\in\mN}\rho_{0,T}^{(i)}(Y_{1:T}^{*(i)}),  
\end{equation}

\noindent Define a partition $\mN_1, \mN_2, \mN_3$ of $\mN$ such that 
\begin{align*}
&\rho_{0,T}^{(i)}(\tilde{Y}_{1:T}^{(i)}) > 
    \rho_{0,T}^{(i)}(Y_{1:T}^{*(i)}), 
\ \text{for all} \ i\in\mN_1, \\
&\rho_{0,T}^{(i)}(\tilde{Y}_{1:T}^{(i)}) = 
    \rho_{0,T}^{(i)}(Y_{1:T}^{*(i)}), 
\ \text{for all} \ i\in\mN_2, \\
&\rho_{0,T}^{(i)}(\tilde{Y}_{1:T}^{(i)}) < 
    \rho_{0,T}^{(i)}(Y_{1:T}^{*(i)}), 
\ \text{for all} \ i\in\mN_3. \\
\end{align*}

\noindent Note that $\mN_3\neq\emptyset$ by \eqref{eq:inf_cmpo_proof}, and $\mN_1\neq\emptyset$ since $\Y_{1:T}^*\in\CMPO$. Moreover, for $i\in\mN_1$, let 
\[
\epsilon^{(i)}:= \rho_{0,T}^{(i)}(Y_{1:T}^{*(i)}) - 
    \rho_{0,T}^{(i)}(\tilde{Y}_{1:T}^{(i)}) <0.
\]

\noindent By \eqref{eq:inf_cmpo_proof}, there exists $\{\epsilon^{(i)}\}_{i\in\mN_3}$ such that 

\begin{enumerate}[label = $(\roman*)$]
    \item $\epsilon^{(i)}\leq 0$ for all $i\in\mN_3$;
    \item $\rho_{0,T}^{(i)}(\tilde{Y}_{1:T}^{(i)})-\epsilon^{(i)}\leq \rho_{0,T}^{(i)}(Y_{1:T}^{*(i)})$ for all $i\in\mN_3$, with at least one strict inequality; and 
    \item $\sum_{i\in\mN_1}\epsilon^{(i)}=\sum_{i\in\mN_3}\epsilon^{(i)}$. 
\end{enumerate}

\noindent Let 
\[
\hat{Y}_1^{(i)} := \begin{cases}
    \tilde{Y}_1^{(i)}+\epsilon^{(i)}, & i\in\mN_1 \\
    \tilde{Y}_1^{(i)}, & i\in\mN_2 \  \\
    \tilde{Y}_1^{(i)}-\epsilon^{(i)}, & i\in\mN_3   
\end{cases}
\quad \text{and} \quad 
\hat{\Y}_{2:T}=\tilde{\Y}_{2:T}.
\]

\noindent Note that $\hat{\Y}_{1:T}\in\A_{1:T}$ since $(iii)$ holds and $\tilde{\Y}_{1:T}\in\A_{1:T}^C\subseteq\A_{1:T}$. Moreover, $\hat{\Y}_{1:T}$ is comonotone since $\tilde{\Y}_{1:T}$ is comonotone and $\epsilon^{(i)}\in\R$ for all $i\in\mN_1\cup\mN_3$. Therefore, $\hat{\Y}_{1:T}\in\A_{1:T}^C$. Finally, by translation invariance, 
\begin{align}
 \rho_{0,T}^{(i)}(\hat{Y}_{1:T}^{(i)}) = 
    \rho_{0,T}^{(i)}(\tilde{Y}_{1:T}^{(i)}) + \epsilon^{(i)} = 
\rho_{0,T}^{(i)}(Y_{1:T}^{*(i)}) \leq 
    \rho_{0,T}^{(i)}(X_{1:T}^{(i)}),
\quad &\text{for all} \quad i\in\mN_1,
\nonumber \\
 \rho_{0,T}^{(i)}(\hat{Y}_{1:T}^{(i)}) = 
    \rho_{0,T}^{(i)}(Y_{1:T}^{*(i)}) \leq
\rho_{0,T}^{(i)}(X_{1:T}^{(i)}), \qquad\qquad
\quad &\text{for all} \quad i\in\mN_2, 
\nonumber \\
\rho_{0,T}^{(i)}(\hat{Y}_{1:T}^{(i)}) =  
    \rho_{0,T}^{(i)}(\tilde{Y}_{1:T}^{(i)}) - \epsilon^{(i)} \leq 
\rho_{0,T}^{(i)}(Y_{1:T}^{*(i)}) \leq
    \rho_{0,T}^{(i)}(X_{1:T}^{(i)}),
\quad &\text{for all} \quad i\in\mN_3. 
\label{eq:mir}
\end{align}

\noindent Thus, $\hat{\Y}_{1:T}\in\MIR$. Moreover, at least one inequality in \eqref{eq:mir} is strict, contradicting the assumption that $\Y_{1:T}^*\in\CMPO$.  
\end{proof}

\medskip

Similar to the case of myopic Pareto optimality in the unconstrained market, comonotone myopic Pareto optimality is a condition on the total allocation for each agent over time, and never defines a unique stochastic process. Hence, it is necessary to consider an alternative notion of optimality in the comonotone market. 

\medskip
\subsection{Comonotone Dynamic Pareto Optima} 
\label{sec:cDPO}

Next, we define the notions of comonotone Pareto optimality and comonotone Pareto optimality at time $t$. We further show that under Assumptions \ref{assumption:tc} and \ref{assumption:convex_order}, an allocation $\Y_{1:T}$ is $c$-PO if and only if the sub-allocation $\Y_{t+1:T}$ is $c$-PO at time $t$ for all $t\in\mfT$. This result allows us to construct comonotone Pareto optima recursively. 

\medskip

\begin{definition} [Comonotone Dynamic Pareto Optimality] \label{definition:cdpo} An allocation $\Y_{1:T}^*\in\A_{1:T}$ is comonotone dynamic PO if 
\begin{enumerate}[label = $(\roman*)$]
    \item $\Y_{1:T}^*\in\DIR\cap\A_{1:T}^C$
    \item there does not exist an allocation process $\Y_{1:T}\in\DIR\cap\A_{1:T}^C$ such that for all $i\in\mN$ and $t\in\mfT$,   
    \[
    \rho_{t,T}^{(i)}(Y_{t+1:T}^{(i)})\leq \rho_{t,T}^{(i)}(Y_{t+1:T}^{*(i)}),
    \] 
    and $\P\left(\rho_{t,T}^{(i)}(Y_{t+1:T}^{(i)})<\rho_{t,T}^{(i)}(Y_{t+1:T}^{*(i)})\right)>0$ for some $i\in\mN$ and $t\in\mfT$.
\end{enumerate}

\smallskip
 
\noindent We denote by $\CDPO$ the set of all comonotone dynamic PO allocations. 
\end{definition}

\medskip

\begin{definition} [Comonotone Pareto Optimality at Time $t$] \label{definition:cpo_t} For $t\in\mfT$, an allocation $\Y_{t+1:T}^*\in\A_{t+1:T}$ is comonotone PO at time $t$ if
\begin{enumerate}[label = $(\roman*)$]
    \item $\Y^*_{t+1:T}\in\IR_t\cap\A_{t+1:T}^C$
    \item there does not exist a random vector $\Y_{t+1}$ with $\{\Y_{t+1}, \Y_{t+2:T}^*\}\in\IR_t\cap\A_{t+1:T}^C$ such that for all $i\in\mN$,   
    \[
    \rho_{t}^{(i)}\big(Y_{t+1}^{(i)}+R_{t+1}^{(i)}(Y_{t+2:T}^{*(i)})\big)
        \leq 
    \rho_{t}^{(i)}\big(Y_{t+1}^{*(i)}+R_{t+1}^{(i)}(Y_{t+2:T}^{*(i)})\big),
    \]    
    and $\P\left(\rho_{t}^{(i)}\big(Y_{t+1}^{(i)}+R_{t+1}^{(i)}(Y_{t+2:T}^{*(i)})\big)<\rho_{t}^{(i)}\big(Y_{t+1}^{*(i)}+R_{t+1}^{(i)}(Y_{t+2:T}^{*(i)})\right)>0$ for some $i\in\mN$.
\end{enumerate}

\smallskip

\noindent We denote by $\CPO_t$ the set of all processes $\Y_{t+1:T}$ that are $c$-PO at time $t$. 
\end{definition}

The following result is the analogue of Lemma \ref{lemma:dpo} for comonotone processes. This result allows us to characterize $c$-PO allocations by characterizing allocations that are $c$-PO at time $t$ instead.  

\medskip

\begin{lemma}\label{lemma:cdpo} If $\Y_{t+1:T}^*\in\CPO_t$ for all $t\in\mfT$, then $\Y_{1:T}^*\in\CDPO$. Moreover, the converse also holds under Assumption \ref{assumption:convex_order}. 
\end{lemma}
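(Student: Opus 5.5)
The forward direction follows the proof of Lemma \ref{lemma:dpo} almost verbatim. The converse needs an extra step: after a one-period improvement, comonotonicity has to be restored at earlier times, and this is where Assumption \ref{assumption:convex_order} and Corollary \ref{cor:com_improvement_2} enter.

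\emph{Sufficiency.} Assume $\Y_{t+1:T}^*\in\CPO_t$ for all $t\in\mfT$. Then $\Y^*_{1:T}\in\DIR\cap\A^C_{1:T}$. Take any $\Y_{1:T}\in\DIR\cap\A_{1:T}^C$ with $R_t^{(i)}(Y_{t+1:T}^{(i)})\le R_t^{(i)}(Y_{t+1:T}^{*(i)})$ for all $t,i$. I will show by backward induction that all these inequalities are equalities.
\begin{itemize}
\item At $s=T-1$: we have $\Y_T\in\IR_{T-1}\cap\A_T^C$, so $\Y_T^*\in\CPO_{T-1}$ rules out any strict improvement. Hence equality holds.
\item Inductive step: assume equality for all $s\ge t$. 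The chain of (in)equalities from Lemma \ref{lemma:dpo} shows $\{\Y_t,\Y^*_{t+1:T}\}\in\IR_{t-1}$.
\item The additional point is $\{\Y_t,\Y^*_{t+1:T}\}\in\A^C_{t:T}$. By the induction hypothesis $R_t^{(i)}(Y^{*(i)}_{t+1:T})=R_t^{(i)}(Y^{(i)}_{t+1:T})$, so the time-$t$ vector $\big(Y_t^{(i)}+R_t^{(i)}(Y^{*(i)}_{t+1:T})\big)_i$ coincides with $\big(Y_t^{(i)}+R_t^{(i)}(Y^{(i)}_{t+1:T})\big)_i$. The latter is comonotone because $\Y_{1:T}\in\A^C_{1:T}$.
\item For times after $t$, comonotonicity is inherited from $\Y^*$.
\end{itemize}
Thus $\Y_t$ is an admissible competitor in the definition of $\CPO_{t-1}$, and the same dichotomy argument as in Lemma \ref{lemma:dpo} gives equality at $s=t-1$.

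\emph{Necessity.} Let $\Y^*_{1:T}\in\CDPO$ and suppose $\Y^*_{t+1:T}\notin\CPO_t$ for some $t$. Then there is $\Y_{t+1}$ with $\{\Y_{t+1},\Y^*_{t+2:T}\}\in\IR_t\cap\A^C_{t+1:T}$ that weakly improves every agent's time-$t$ risk, strictly for some agent with positive probability. Let $\tilde\Y$ be $\Y^*$ with its $(t+1)$-st vector replaced by $\Y_{t+1}$.

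The main obstacle is that $\tilde\Y$ need not be comonotone at times $s\le t$. The vectors $\big(Y_s^{*(i)}+R_s^{(i)}(\tilde Y^{(i)}_{s+1:T})\big)_i$ involve the modified risk-to-go, so comonotonicity can fail there. I repair this backward in time, for $s=t,t-1,\ldots,1$:
\begin{itemize}
\item Given that $\hat\Y_{s+1:T}$ (initially $\tilde\Y_{t+1:T}$) is comonotone, apply Corollary \ref{cor:com_improvement_2} to $\{\Y^*_s,\hat\Y_{s+1:T}\}$. This yields $\hat\Y_s$ with $\{\hat\Y_s,\hat\Y_{s+1:T}\}\in\A^C_{s:T}$ and
\[
\hat Y_s^{(i)}+R_s^{(i)}(\hat Y^{(i)}_{s+1:T})\cvx Y_s^{*(i)}+R_s^{(i)}(\hat Y^{(i)}_{s+1:T}).
\]
\item By convex order preservation, then monotonicity of $\rho^{(i)}_{s-1}$ together with the inductive bound $R_s^{(i)}(\hat Y^{(i)}_{s+1:T})\le R_s^{(i)}(Y^{*(i)}_{s+1:T})$,
\[
R^{(i)}_{s-1}(\hat Y^{(i)}_{s:T})\le \rho^{(i)}_{s-1}\big(Y_s^{*(i)}+R_s^{(i)}(\hat Y^{(i)}_{s+1:T})\big)\le R^{(i)}_{s-1}(Y^{*(i)}_{s:T}).
\]
\end{itemize}
The induction starts from the improvement at time $t$ together with the unchanged risk-to-go at times after $t$. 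At times $u\ge t$ the process is untouched, so the time-$t$ risk-to-go still carries the strict improvement with positive probability.

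The resulting $\hat\Y_{1:T}$ is comonotone and has risk-to-go no larger than that of $\Y^*$ at every time, for every agent. Since $\Y^*\in\DIR$, this places $\hat\Y_{1:T}$ in $\DIR$. Its strict improvement at time $t$ contradicts $\Y^*_{1:T}\in\CDPO$.
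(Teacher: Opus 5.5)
Your proof is correct and follows the paper's argument. Sufficiency is the Lemma \ref{lemma:dpo} induction plus the observation that, by the induction hypothesis, $\{\Y_t,\Y^*_{t+1:T}\}$ inherits comonotonicity from $\Y_{1:T}$; necessity replaces the $(t+1)$-st vector, restores comonotonicity, and propagates the risk-to-go bounds via convex order preservation and monotonicity. The only difference is that the paper applies the full dynamic comonotone improvement (Lemma \ref{lemma:com_improvement}) to the modified process, whereas you repair only times $s\le t$ with Corollary \ref{cor:com_improvement_2}. This leaves the risk-to-go $R_u$ for $u\ge t$ unchanged (it is these terms, not the time-$t$ vector, that are untouched) and makes the strict improvement at time $t$ immediate.
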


\medskip

\begin{proof}

Suppose that $\Y_{t+1:T}^*\in\CPO_t$ for all $t\in\mfT$. Since $\Y_{t+1:T}^*\in\IR_t\cap\A_{t+1:T}^C$ for all $t\in\mfT$, it follows that $\Y_{1:T}^*\in\DIR\cap\A_{1:T}^C$. Hence, $\Y_{1:T}^*$ satisfies the first condition of comonotone Pareto optimality. Next, let $\Y_{1:T}\in\DIR\cap\A_{1:T}^C$ such that for all $t\in\mfT$ and $i\in\mN$,
\begin{equation*}
R_{t}^{(i)}(Y_{t+1:T}^{(i)})\leq R_{t}^{(i)}(Y_{t+1:T}^{*(i)}).
\end{equation*}

\noindent To verify the second condition of comonotone Pareto optimality, it suffices to show that for all $i\in\mN$ and $s\in\mfT$,
\begin{equation} \label{eq:lemma_cdpo_2}
R_{s}^{(i)}(Y_{s+1:T}^{(i)})=R_{s}^{(i)}(Y_{s+1:T}^{*(i)}). 
\end{equation}

\noindent The case when $s=T-1$ holds by the same arguments as in the proof of Lemma \ref{lemma:dpo}. Next, we proceed by induction backwards in time. For this, assume \eqref{eq:lemma_cdpo_2} holds for all $s\geq t$. It suffices to show that \eqref{eq:lemma_cdpo_2} also holds for $t-1$. By the arguments in Lemma \ref{lemma:dpo}, we have
\[
\rho_{t-1}^{(i)}\big(Y_{t}^{(i)}+R_{t}^{(i)}(Y_{t+1:T}^{*(i)})\big)
    \leq 
\rho_{t-1}^{(i)}\big(X_{t}^{(i)}+R_{t}^{(i)}(X_{t+1:T}^{(i)})\big), 
    \ \text{for all} \ i\in\mN.
\]

\noindent Thus, $\{\Y_t, \Y_{t+1:T}^*\}\in\IR_{t-1}$. Next, we show that $\{\Y_t, \Y_{t+1:T}^*\}\in\A_{t:T}^C$. Since $\Y_{t+1:T}^*\in\CPO_{t}$, the random vector 
\[
\left(
    Y_s^{*(1)}+R_s^{(1)}(Y_{s+1:T}^{*(1)}), \ldots, 
    Y_s^{*(n)}+R_s^{(n)}(Y_{s+1:T}^{*(n)})
\right),
\] 

\noindent is comonotone for all $s\geq t+1$. Moreover, by the induction hypothesis,
\[
\left(Y_t^{(1)}+R_t^{(1)}(Y_{t+1:T}^{*(1)}), \ldots,        
    Y_t^{(n)}+R_t^{(n)}(Y_{t+1:T}^{*(n)})\right) = 
\left(Y_t^{(1)}+R_t^{(1)}(Y_{t+1:T}^{(1)}), \ldots, 
    Y_t^{(n)}+R_t^{(n)}(Y_{t+1:T}^{(n)})\right).
\]

\noindent Since $\Y_{1:T}\in\A_{1:T}^C$, it follows that $\{\Y_t, \Y_{t+1:T}^*\}\in\A_{t:T}^C$. The remainder of the induction argument follows from the arguments in the proof of Lemma \ref{lemma:dpo}. 

\smallskip

For the converse, let $\Y_{1:T}^*\in\CDPO$ and suppose that Assumption \ref{assumption:convex_order} holds. Assume, by way of contradiction, that $\Y_{t+1:T}^*\notin\CPO_t$ for some $t\in\mfT$. Since $\Y_{1:T}^*\in\CDPO$, we have $\Y_{t+1:T}^*\in\IR_t\cap\A_{t+1:T}^C$ for all $t\in\mfT$. Thus, $\Y_{t+1:T}^*\notin\CPO_t$ implies that there exists $\Y_{t+1}$ with $\{\Y_{t+1}, \Y_{t+2:T}^*\}\in\IR_t\cap\A_{t+1:T}^C$ such that
\begin{align}
&\P\left(\rho_{t}^{(i)}
    \big(Y_{t+1}^{(i)}+R_{t+1}^{(i)}(Y_{t+2:T}^{*(i)})\big)
        \leq \rho_{t}^{(i)}
    \big(Y_{t+1}^{*(i)}+R_{t+1}^{(i)}(Y_{t+2:T}^{*(i)})\big)
\right)=1 \ \text{for all} \ i\in\mN \ \text{and},
\nonumber \\
&\P\left(\rho_{t}^{(i)}
    \big(Y_{t+1}^{(i)}+R_{t+1}^{(i)}(Y_{t+2:T}^{*(i)})\big) 
         < \rho_{t}^{(i)}
    \big(Y_{t+1}^{*(i)}+R_{t+1}^{(i)}(Y_{t+2:T}^{*(i)})\big)
\right)>0 \ \text{for some} \ i\in\mN.
\label{eq:lemma_cdpo_4} 
\end{align}

\noindent Let $\hat{\Y}_{1:T}$ be the process $\Y_{1:T}^*$ with the $(t+1)$-st vector replaced by $\Y_{t+1}$. Let $\tilde{\Y}_{1:T}\in\A_{1:T}^C$ be a comonotone improvement of $\hat{\Y}_{1:T}$ (which exists by Lemma \ref{lemma:com_improvement}). Next, we want to show that for all $s\in\mfT$ and $i\in\mN$, 
\begin{equation}\label{eq:lemma_cdpo_5}
R_{s}^{(i)}(\tilde{Y}_{s+1:T}^{(i)})
    \leq 
R_{s}^{(i)}(\hat{Y}_{s+1:T}^{(i)})\leq R_{s}^{(i)}(Y_{s+1:T}^{*(i)}).
\end{equation}

\noindent The second inequality follows from Equation \eqref{eq:lemma_dpo_9} in the proof of Lemma \ref{lemma:dpo}, so it remains to prove the first inequality. We proceed by induction. For $s=T-1$, \eqref{eq:lemma_cdpo_5} holds since $\rho_{T-1}^{(i)}$ preserves the convex order for all $i\in\mN$. Next, assume that \eqref{eq:lemma_cdpo_5} holds for $s\geq t$. We want to show that \eqref{eq:lemma_cdpo_5} holds for $s=t-1$. Note that for all $i\in\mN$,
\begin{align*}
R_{t-1}^{(i)}(\tilde{Y}_{t:T}^{(i)}) &= \rho_{t-1}^{(i)}
    \big(\tilde{Y}_t^{(i)}+R_t^{(i)}(\tilde{Y}_{t+1:T}^{(i)})\big) \\
& \leq \rho_{t-1}^{(i)}
    \big(\hat{Y}_t^{(i)}+R_t^{(i)}(\tilde{Y}_{t+1:T}^{(i)})\big) \\
& \leq \rho_{t-1}^{(i)}
    \big(\hat{Y}_t^{(i)}+R_t^{(i)}(\hat{Y}_{t+1:T}^{(i)})\big) \\
& = R_{t-1}^{(i)}(\hat{Y}_{t:T}^{(i)}),
\end{align*}

\noindent where the first inequality holds since $\rho_{t-1}^{(i)}$ preserves the convex order for all $i\in\mN$ and the second inequality follows from monotonicity of $\rho_{t-1}^{(i)}$ and the induction hypothesis. Hence, \eqref{eq:lemma_cdpo_5} holds for all $s\in\mfT$ and $i\in\mN$. Finally, for $i\in\mN$ satisfying \eqref{eq:lemma_cdpo_4}, we have
\begin{align*}
\P\left(\rho_{t}^{(i)}\big(
    \tilde{Y}_{t+1}^{(i)}+R_{t+1}^{(i)}(\tilde{Y}_{t+2:T}^{(i)})\big) 
        < R_{t}^{(i)}(Y_{t+1:T}^{*(i)})
    \right) 
    \geq
\P\left(\rho_{t}^{(i)}\big(
    Y_{t+1}^{(i)}+R_{t+1}^{(i)}(Y_{t+2:T}^{*(i)})\big)
        < R_{t}^{(i)}(Y_{t+1:T}^{*(i)})
    \right) 
 > 0,
\end{align*}

\noindent where the first inequality follows from \eqref{eq:lemma_cdpo_5} and the last inequality follows from \eqref{eq:lemma_cdpo_4}. Finally, \eqref{eq:lemma_cdpo_5} implies that $\tilde{\Y}_{1:T}\in\DIR$ and hence $\tilde{\Y}_{1:T}\in\DIR\cap\A_{1:T}^C$. This contradicts the assumption that $\Y_{1:T}^*\in\CDPO$, so $\Y_{t+1:T}^*\in\CPO_t$. 
\end{proof}

\medskip

\subsection{Characterization of Comonotone Dynamic Pareto Optimal Allocations} \label{sec:cdpo_char} 

In this section, we derive characterization results for $c$-PO allocations. We start with an analogue of Lemma \ref{lemma:pot_rep}, which gives a useful representation of allocations that are $c$-PO at time $t$.  

\medskip

\begin{lemma} \label{lemma:cpot_rep} For $t\in\mfT$, let
\begin{align*}
\Gamma:=\biggl\{                   
    \Z_{t+1:T}\biggr. &\in\IR_t\cap\A_{t+1:T}^C ~\big|~ \P\left(
        \sum_{i\in\mN}\rho_{t}^{(i)}
            \big(Y_{t+1}^{(i)}+R_{t+1}^{(i)}(Z_{t+2:T}^{(i)})\big)
                \leq
        \sum_{i\in\mN}R_{t}^{(i)}(Z_{t+1:T}^{(i)})
    \right)<1 \ \text{or} \\
    & \P\left(
        \sum_{i\in\mN}\rho_{t}^{(i)}
            \big(Y_{t+1}^{(i)}+R_{t+1}^{(i)}(Z_{t+2:T}^{(i)})\big)
                <
        \sum_{i\in\mN}R_{t}^{(i)}(Z_{t+1:T}^{(i)})
    \right)=0 \\
     & \biggl. \qquad\qquad\qquad \text{for all} \ 
        \Y_{t+1}\neq \Z_{t+1} \ 
    \text{such that} \ \{\Y_{t+1}, \Z_{t+2:T}\}\in\IR_t \cap\A_{t+1:T}^C
\biggr\}.
\end{align*}
Then, $\Gamma\subseteq\CPO_t$. If, in addition, Assumption \ref{assumption:convex_order} holds (i.e., agent preferences preserve the convex order), then $\Gamma=\CPO_t$.  
\end{lemma}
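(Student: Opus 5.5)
The plan is to mirror the two inclusions in the proof of Lemma \ref{lemma:pot_rep}, restricting all competing allocations to $\A_{t+1:T}^C$. The new difficulty appears only in the second inclusion. There the redistribution step can destroy comonotonicity, and Corollary \ref{cor:com_improvement_2} together with Assumption \ref{assumption:convex_order} is used to restore it.

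\emph{Step 1: $\Gamma\subseteq\CPO_t$.} Let $\Z_{t+1:T}\in\IR_t\cap\A_{t+1:T}^C$ with $\Z_{t+1:T}\notin\CPO_t$. By Definition \ref{definition:cpo_t} there is a $\Y_{t+1}$ with $\{\Y_{t+1},\Z_{t+2:T}\}\in\IR_t\cap\A_{t+1:T}^C$ that weakly improves every agent's time-$t$ risk almost surely and strictly improves some agent's on a set of positive probability. Summing these inequalities over $i\in\mN$ shows that the sum inequality holds almost surely and the strict sum inequality holds with positive probability. Hence neither alternative in the definition of $\Gamma$ holds, so $\Z_{t+1:T}\notin\Gamma$. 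This uses no assumption beyond the definitions.

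\emph{Step 2: $\CPO_t\subseteq\Gamma$ under Assumption \ref{assumption:convex_order}.} Take $\Y^*_{t+1:T}\in\CPO_t$ and suppose $\Y^*_{t+1:T}\notin\Gamma$. Then there is a $\tilde\Y_{t+1}$ with $\{\tilde\Y_{t+1},\Y^*_{t+2:T}\}\in\IR_t\cap\A_{t+1:T}^C$ satisfying the analogues of \eqref{eq:sum} and \eqref{eq:sum_2}. Exactly as in Lemma \ref{lemma:pot_rep}:
\begin{itemize}
\item define $v^{(i)}\in\L_t^\infty$ and choose $w^{(i)}\in\L_t^\infty$ satisfying $(i)$--$(iii)$ there;
\item set $\epsilon^{(i)}:=w^{(i)}\Id_{v^{(i)}<0}-v^{(i)}\Id_{v^{(i)}\geq0}$ and $\hat Y^{(i)}_{t+1}:=\tilde Y^{(i)}_{t+1}+\epsilon^{(i)}$.
\end{itemize}
The same computation gives three facts. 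First, $\{\hat\Y_{t+1},\Y^*_{t+2:T}\}$ is an allocation, since $\sum_i\epsilon^{(i)}=0$. Second, by translation invariance each agent satisfies $\rho_t^{(i)}(\hat Y^{(i)}_{t+1}+R^{(i)}_{t+1}(Y^{*(i)}_{t+2:T}))\le R^{(i)}_t(Y^{*(i)}_{t+1:T})$ almost surely. Third, this inequality is strict with positive probability for some agent $j$.

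\emph{Step 3 (the main obstacle): restoring comonotonicity.} The shifts $\epsilon^{(i)}$ are $\F_t$-measurable random variables, not constants. Unlike in Proposition \ref{prop:cmpo}, the vector $\big(\hat Y^{(i)}_{t+1}+R^{(i)}_{t+1}(Y^{*(i)}_{t+2:T})\big)_{i\in\mN}$ therefore need not be comonotone. To repair this, apply Corollary \ref{cor:com_improvement_2} at time $t+1$ to $\{\hat\Y_{t+1},\Y^*_{t+2:T}\}$, which is admissible because $\Y^*_{t+2:T}\in\A^C_{t+2:T}$. This yields $\bar\Y_{t+1}$ with the following properties:
\begin{itemize}
\item it allocates $S_{t+1}$;
\item $\{\bar\Y_{t+1},\Y^*_{t+2:T}\}\in\A^C_{t+1:T}$;
\item $\bar Y^{(i)}_{t+1}+R^{(i)}_{t+1}(Y^{*(i)}_{t+2:T})\cvx \hat Y^{(i)}_{t+1}+R^{(i)}_{t+1}(Y^{*(i)}_{t+2:T})$ for every $i\in\mN$.
\end{itemize}
Since each $\rho^{(i)}_t$ preserves the convex order, every agent's time-$t$ risk under $\bar\Y_{t+1}$ is almost surely no larger than under $\hat\Y_{t+1}$. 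Consequently the almost-sure bounds from Step 2 persist. Because $\Y^*_{t+1:T}\in\IR_t$, this gives $\{\bar\Y_{t+1},\Y^*_{t+2:T}\}\in\IR_t\cap\A^C_{t+1:T}$. The strict improvement for agent $j$ also persists with positive probability, since the weak inequality only shrinks the left-hand side. This contradicts $\Y^*_{t+1:T}\in\CPO_t$.

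I expect Step 3 to be the only genuinely new point. One must check that Corollary \ref{cor:com_improvement_2} is applied to an allocation of the correct aggregate. One must also check that the convex-order step is used only as a weak inequality, so the strictness inherited from Step 2 is not lost.
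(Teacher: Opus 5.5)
Your proposal is correct and follows the paper's proof essentially step for step. The inclusion $\Gamma\subseteq\CPO_t$ comes from summing the agentwise inequalities as in Lemma \ref{lemma:pot_rep}, and the reverse inclusion uses the same $\epsilon^{(i)}$-redistribution, followed by Corollary \ref{cor:com_improvement_2} and convex-order preservation to restore comonotonicity while keeping both the weak and the strict improvements, which is exactly why the paper inserts that comonotone-improvement step for the $\F_t$-measurable shifts you flag.
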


\medskip

\begin{proof}
The statement $\Gamma\subseteq\CPO_t$ follows from similar arguments as in the proof of Lemma \ref{lemma:pot_rep} (with $\IR_t\cap\A_{t+1:T}^C$ instead of $\IR_t$ and $\CPO_t$ instead of $\PO_t$). 

\smallskip

To show that $\CPO_t\subseteq\Gamma$, suppose that Assumption \ref{assumption:convex_order} holds and let $\Y_{t+1:T}^*\in\CPO_t$. Assume, by way of contradiction, that $\Y_{t+1:T}^*\notin \Gamma$. Then, by definition of $\Gamma$, there exists a random vector $\tilde{\Y}_{t+1}$ with $\{\tilde{\Y}_{t+1}, \Y_{t+2:T}^*\}\in\IR_t\cap\A_{t+1:T}^C$ such that
\begin{align*}
&\P\left(\sum_{i\in\mN}\rho_{t}^{(i)}
    \big(\tilde{Y}_{t+1}^{(i)}+R_{t+1}^{(i)}(Y_{t+2:T}^{*(i)})\big)
        \leq \sum_{i\in\mN}\rho_{t}^{(i)}
    \big(Y_{t+1}^{*(i)}+R_{t+1}^{(i)}(Y_{t+2:T}^{*(i)})\big)
\right)=1, \ \text{and} \\
&\P\left(\sum_{i\in\mN}\rho_{t}^{(i)}
    \big(\tilde{Y}_{t+1}^{(i)}+R_{t+1}^{(i)}(Y_{t+2:T}^{*(i)})\big) 
        <\sum_{i\in\mN}\rho_{t}^{(i)}
    \big(Y_{t+1}^{*(i)}+R_{t+1}^{(i)}(Y_{t+2:T}^{*(i)})\big)
\right)>0.
\end{align*}

\noindent Let $\hat{Y}_{t+1}^{(i)}:=\tilde{Y}_{t+1}^{(i)}+\epsilon^{(i)}$, where $\epsilon^{(i)}$ is defined as in the proof of Lemma \ref{lemma:pot_rep}. As $\{\hat{\Y}_{t+1}, \Y_{t+2:T}^*\}\in\A_{t+1:T}$ and $\Y_{t+2:T}^*\in\A_{t+2:T}^C$, by Corollary \ref{cor:com_improvement_2}, there exists $\Y'_{t+1}\in\A_{t+1}$ such that $\{\Y'_{t+1}, \Y_{t+2:T}^*\}\in\A_{t+1:T}^C$ and
\begin{equation*}
Y_{t+1}'^{(i)}+R_{t+1}^{(i)}(Y_{t+2:T}^{*(i)})\cvx \hat{Y}_{t+1}^{(i)}+R_{t+1}^{(i)}(Y_{t+2:T}^{*(i)}),
\end{equation*}

\noindent for all $i\in\mN$. Furthermore, for all $i\in\mN$, 
\begin{equation*}
\rho_{t}^{(i)}
    \big(Y_{t+1}'^{(i)}+R_{t+1}^{(i)}(Y_{t+2:T}^{*(i)})\big) \leq 
\rho_{t}^{(i)}
    \big(\hat{Y}_{t+1}^{(i)}+R_{t+1}^{(i)}(Y_{t+2:T}^{*(i)})\big) \leq
R_{t}^{(i)}(Y_{t+1:T}^{*(i)}) \leq 
R_{t}^{(i)}(X_{t+1:T}^{(i)}), 
\end{equation*}

\noindent where the first inequality holds since $\rho_{t}^{(i)}$ is convex order preserving, the second inequality follows from similar arguments as in the proof of Lemma \ref{lemma:pot_rep}, and the last inequality holds since $\Y_{t+1:T}^*\in\CPO_t\subseteq\IR_t$. Moreover, for some $i\in\mN$,
\begin{align*}
\P\left(\rho_{t}^{(i)}
    \big(Y_{t+1}'^{(i)}+R_{t+1}^{(i)}(Y_{t+2:T}^{*(i)})\big)
        < R_{t}^{(i)}(Y_{t+1:T}^{*(i)})
    \right) &\geq 
\P\left(\rho_{t}^{(i)}
    \big(\hat{Y}_{t+1}^{(i)}+R_{t+1}^{(i)}(Y_{t+2:T}^{*(i)})\big) 
        < R_{t}^{(i)}(Y_{t+1:T}^{*(i)})
    \right) \\
& > 0,
\end{align*}

\noindent where the last inequality follows from similar arguments as in the proof of Lemma \ref{lemma:pot_rep}. Finally, since $\{\Y'_{t+1}, \Y_{t+2:T}^*\}\in\IR_t\cap\A_{t+1:T}^C$, this contradicts the assumption that $\Y_{t+1:T}^*\in\CPO_t$. 
\end{proof}

\medskip

Lemma \ref{lemma:cpot_rep} states that if $\Z_{t+1:T}\in\CPO_t$, then we cannot obtain a comonotone strict improvement in the total risk-to-go at time $t$ by changing only the allocation of $S_{t+1}$. Furthermore, Lemma \ref{lemma:cpot_rep} allows us to derive an inf-convolution result for allocations that are $c$-PO at time $t$. The result is similar to Proposition \ref{prop:pot_inf}, except the infimum is now taken over all comonotone IR processes instead of all IR processes. 

\medskip

\begin{proposition} \label{prop:cpot_inf} Suppose that Assumption \ref{assumption:convex_order} holds. Let $t\in\mfT$ and $\Y_{t+2:T}^*\in\A_{t+2:T}^C$. Then, $\{\Z_{t+1}, \Y_{t+2:T}^*\}\in\CPO_{t}$ if and only if $\Z_{t+1}$ attains the infimum
\begin{equation} \label{eq:inf_t_cdpo}
\underset{\left\{
    \W_{t+1}~|~\{\W_{t+1}, \Y_{t+2:T}^{*}\}\in\IR_{t}\cap\A_{t+1:T}^C
\right\}} \inf \,                           
\psi\left(\sum_{i\in\mN}\rho_{t}^{(i)}
        \left(W_{t+1}^{(i)}+R_{t+1}^{(i)}
            \left(Y_{t+2:T}^{*(i)}\right)
        \right)
    \right).
\end{equation}

\medskip

\noindent for some monotone $\psi:\L_{t}^{\infty}\to\R$ that is strictly monotone on the set
\begin{equation*}
\aleph^C:=
    \left\{
        V\in\L_{t}^{\infty} ~|~ V = \sum_{i\in\mN}\rho_{t}^{(i)}
            \big(V_{t+1}^{(i)}+R_{t+1}^{(i)}(Y_{t+2:T}^{*(i)})\big) 
        \ \text{for some} \ 
            \{\V_{t+1}, \Y_{t+2:T}^*\} \in
        \IR_{t}\cap\A_{t+1:T}^C
    \right\}.
\end{equation*}
\end{proposition}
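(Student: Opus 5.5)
The plan is to follow the proof of Proposition \ref{prop:pot_inf} almost verbatim. Lemma \ref{lemma:cpot_rep} plays the role of Lemma \ref{lemma:pot_rep}, and the feasible set becomes $\IR_t\cap\A_{t+1:T}^C$ instead of $\IR_t$. Assumption \ref{assumption:convex_order} is needed only to invoke the equality $\Gamma=\CPO_t$ from Lemma \ref{lemma:cpot_rep} in the ``only if'' direction. The ``if'' direction uses only the inclusion $\Gamma\subseteq\CPO_t$.

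For necessity, I take $\{\Z_{t+1},\Y^*_{t+2:T}\}\in\CPO_t$ and set $Z:=\sum_{i\in\mN}\rho_t^{(i)}\big(Z_{t+1}^{(i)}+R_{t+1}^{(i)}(Y_{t+2:T}^{*(i)})\big)$. I then define $\psi_Z(W):=\E[\max\{W,Z\}]$, which is monotone on $\L_t^\infty$. For any feasible $\W_{t+1}$ with aggregate risk $W\in\aleph^C$ we have $\psi_Z(W)\ge\E[Z]=\psi_Z(Z)$. Since $Z\in\aleph^C$ (because $\{\Z_{t+1},\Y^*_{t+2:T}\}\in\IR_t\cap\A^C_{t+1:T}$), $\Z_{t+1}$ attains the infimum \eqref{eq:inf_t_cdpo}.

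The step needing care is strict monotonicity of $\psi_Z$ on $\aleph^C$. Take $\tilde Y\le\hat Y$ in $\aleph^C$ with $\psi_Z(\tilde Y)=\psi_Z(\hat Y)$. Nonnegativity of $\max\{\hat Y,Z\}-\max\{\tilde Y,Z\}$ forces this difference to vanish a.s. On $\{\hat Y>Z\}$ it equals $\hat Y-\max\{\tilde Y,Z\}>0$ unless $\tilde Y=\hat Y$ there. So either $\tilde Y=\hat Y$ a.s. and we are done, or we argue as the paper does. In the latter case $\hat Y\le Z$ and $\tilde Y\le Z$ a.s. Since both are aggregate risks of comonotone IR allocations with the same tail $\Y^*_{t+2:T}$, membership of $\{\Z_{t+1},\Y^*_{t+2:T}\}$ in $\Gamma$ (Lemma \ref{lemma:cpot_rep} under Assumption \ref{assumption:convex_order}) excludes $\P(\hat Y<Z)>0$ and $\P(\tilde Y<Z)>0$. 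Hence $\tilde Y=Z=\hat Y$ a.s., which contradicts $\P(\tilde Y<\hat Y)>0$. I expect the main obstacle to be this step, namely making sure the $\Gamma$-characterization applies to $\aleph^C$ and not just $\aleph$. The comonotone constraint is exactly why $\Gamma$ is defined over $\IR_t\cap\A^C_{t+1:T}$, so the two are compatible.

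For sufficiency, I suppose $\Z'_{t+1}$ attains \eqref{eq:inf_t_cdpo} for some $\psi$ that is strictly monotone on $\aleph^C$, and that $\{\Z'_{t+1},\Y^*_{t+2:T}\}\notin\CPO_t$. Since $\Gamma\subseteq\CPO_t$ by Lemma \ref{lemma:cpot_rep}, the allocation is not in $\Gamma$. So there is $\Y_{t+1}$ with $\{\Y_{t+1},\Y^*_{t+2:T}\}\in\IR_t\cap\A^C_{t+1:T}$ whose aggregate risk is a.s. at most that of $\Z'_{t+1}$, and strictly smaller with positive probability. Both aggregates lie in $\aleph^C$. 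Strict monotonicity of $\psi$ on $\aleph^C$ then gives a strictly smaller objective, contradicting optimality of $\Z'_{t+1}$.
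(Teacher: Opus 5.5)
Your route is the paper's route: its proof just says ``as in Proposition \ref{prop:pot_inf}, with Lemma \ref{lemma:cpot_rep}.'' Your sufficiency half and the attainment part of necessity are fine. The gap is in the strict-monotonicity step, and it is not the one you flag, since applying $\Gamma$ to $\aleph^C$ is automatic. From $\E[\max\{\hat Y,Z\}-\max\{\tilde Y,Z\}]=0$ you correctly get $\tilde Y=\hat Y$ a.s.\ on $\{\hat Y>Z\}$. But the next sentence (``so either $\tilde Y=\hat Y$ a.s., or $\hat Y\le Z$ and $\tilde Y\le Z$ a.s.'') does not follow. The mixed case, $\P(\hat Y>Z)>0$ together with $\P(\tilde Y<\hat Y\le Z)>0$, is not covered. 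There $\tilde Y<Z$ on a non-null event while $\tilde Y=\hat Y>Z$ on another. So neither $\tilde Y$ nor $\hat Y$ can witness a violation of $\Gamma$, whose defining condition only excludes competitors with aggregate risk a.s.\ $\le Z$ and strictly smaller with positive probability. Membership in $\Gamma$ by itself only says $Z$ is a minimal element of $\aleph^C$, which is not enough for $\psi_Z$. The paper's argument for Proposition \ref{prop:pot_inf} skips the same case: it claims the difference is strictly positive on $\{\hat Y>Z\}$. So you have inherited this gap rather than repaired it.

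The missing idea is to show that $Z$ is the least element of $\aleph^C$, i.e.\ $Z\le V$ a.s.\ for every $V\in\aleph^C$, by pasting on $\F_t$-events.

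First, each $\rho_t^{(i)}$ is local. For $A\in\F_t$ we have $\rho_t^{(i)}(\Id_AU+\Id_{A^c}W)=\Id_A\rho_t^{(i)}(U)+\Id_{A^c}\rho_t^{(i)}(W)$. This follows from monotonicity and translation invariance applied with the $\F_t$-measurable shift $\|U-W\|_\infty\Id_{A^c}$.

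Next, take $V\in\aleph^C$ generated by $\V_{t+1}$, and set $B:=\{V<Z\}\in\F_t$ and $\V'_{t+1}:=\Id_B\V_{t+1}+\Id_{B^c}\Z_{t+1}$. This is an allocation of $S_{t+1}$ and is IR at time $t$. By locality its aggregate risk is $\Id_BV+\Id_{B^c}Z\le Z$, with strict inequality on $B$.

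Pasting can destroy comonotonicity, so replace $\V'_{t+1}$ by its comonotone improvement from Corollary \ref{cor:com_improvement_2}. By Assumption \ref{assumption:convex_order} this lowers every term $\rho_t^{(i)}\big(\cdot+R_{t+1}^{(i)}(Y^{*(i)}_{t+2:T})\big)$ a.s., so IR and the bound $\le\Id_BV+\Id_{B^c}Z$ both survive. If $\P(B)>0$, this contradicts $\{\Z_{t+1},\Y^*_{t+2:T}\}\in\Gamma=\CPO_t$, so $Z\le V$ a.s.

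Once this is known, $\psi_Z(V)=\E[V]$ on $\aleph^C$, so strict monotonicity is immediate (indeed $\psi=\E$ already works), and the rest of your argument goes through. This is a second, essential use of Assumption \ref{assumption:convex_order}, contrary to your remark that it is needed only to invoke $\Gamma=\CPO_t$.
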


\medskip

\begin{proof}
This follows from similar arguments as in the proof of Proposition \ref{prop:pot_inf} (with Lemma \ref{lemma:cpot_rep} instead of Lemma \ref{lemma:pot_rep}). 
\end{proof}

\medskip

The set $\aleph^C$ is the set of attainable risks at time $t$ in the comonotone market. Combining Proposition \ref{prop:cpot_inf} and Lemma \ref{lemma:cdpo} yields the following characterization, which is the analogue of Theorem \ref{theorem:char_dpo} for comonotone allocations. Here, the argmin is again taken over comonotone IR allocations instead of all IR allocations. This key difference allows us to derive a more explicit characterization of $c$-PO allocations in Section \ref{sec:cdpo_ex_char}. 

\medskip

\begin{theorem} \label{theorem:char_cdpo} Suppose that Assumption \ref{assumption:convex_order} holds. Then, $\Y_{1:T}\in\CDPO$ if and only if for each $t\in\{T, \ldots, 1\}$, $\Y_{t}^*$ is defined recursively as a solution to
\begin{equation}\label{eq:inf_cdpo}
\underset{\left\{
    \Y_{t}~|~\{\Y_{t}, \Y_{t+1:T}^{*}\}\in\IR_{t-1}\cap\A_{t:T}^C
\right\}} \argmin \, 
\psi_{t-1}\left(\sum_{i\in\mN}\rho_{t-1}^{(i)}
    \big(Y_{t}^{(i)}+R_{t}^{(i)}(Y_{t+1:T}^{*(i)})\big)
\right),
\end{equation}

\medskip

\noindent for some $\psi_{t-1}:\L_{t-1}^{\infty}\to\R$ satisfying the conditions of Proposition \ref{prop:cpot_inf}. 
\end{theorem}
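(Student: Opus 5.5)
The plan mirrors the proof of Theorem \ref{theorem:char_dpo}: combine Lemma \ref{lemma:cdpo}, which reduces $\CDPO$ to the time-$t$ conditions $\CPO_t$, with Proposition \ref{prop:cpot_inf}, which characterizes each $\CPO_t$ by a $\psi$-minimization over comonotone IR allocations. Assumption \ref{assumption:convex_order} is needed for two things: the converse direction of Lemma \ref{lemma:cdpo}, and the equivalence in Proposition \ref{prop:cpot_inf}.

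\emph{Sufficiency.} Suppose $\Y_T^*,\Y_{T-1}^*,\ldots,\Y_1^*$ are built backwards so that each $\Y_t^*$ solves \eqref{eq:inf_cdpo} for some admissible $\psi_{t-1}$. Since the feasible set is contained in $\IR_{t-1}\cap\A_{t:T}^C$, each $\Y_{t:T}^*$ is in particular a comonotone allocation of $S_{t:T}$. Hence $\Y_{t+1:T}^*\in\A_{t+1:T}^C$, which is exactly the hypothesis $\Y_{t+1:T}^*\in\A^C$ required in Proposition \ref{prop:cpot_inf} (applied with index $t-1$). That proposition gives $\{\Y_t^*,\Y_{t+1:T}^*\}\in\CPO_{t-1}$ for every $t\in\{1,\ldots,T\}$, and the first (assumption-free) half of Lemma \ref{lemma:cdpo} yields $\Y_{1:T}^*\in\CDPO$. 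At $t=T$ we use the convention $\{\Y_T,\Y_{T+1:T}^*\}=\Y_T$, and the proposition applies with $R_T=0$.

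\emph{Necessity.} Let $\Y_{1:T}^*\in\CDPO$. Under Assumption \ref{assumption:convex_order}, the converse part of Lemma \ref{lemma:cdpo} gives $\Y_{t:T}^*\in\CPO_{t-1}$ for all $t$. In particular $\Y_{1:T}^*\in\A_{1:T}^C$, so each tail $\Y_{t+1:T}^*$ lies in $\A_{t+1:T}^C$, because the comonotonicity requirement in Definition \ref{definition:com_process} is imposed separately at each time. Proposition \ref{prop:cpot_inf} then provides, for each $t$, a monotone $\psi_{t-1}$, strictly monotone on the corresponding $\aleph^C$, for which $\Y_t^*$ attains the infimum in \eqref{eq:inf_cdpo} given the tail $\Y_{t+1:T}^*$. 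Concretely, as in the proof of Proposition \ref{prop:pot_inf}, one can take $\psi_{t-1}(W)=\E[\max\{W,Z\}]$ with $Z$ the optimal total risk. This shows $\Y_{1:T}^*$ arises from the recursive construction.

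\emph{Main obstacle.} The only subtle point is the bookkeeping that the recursion stays inside the comonotone class, so that the hypothesis $\Y_{t+2:T}^*\in\A^C_{t+2:T}$ of Proposition \ref{prop:cpot_inf} holds at every stage. In the sufficiency direction this follows because the argmin at each stage is taken over sets that already impose comonotonicity of the full tail. In the necessity direction it follows from membership in $\A^C_{1:T}$. A second point to check is that the proof of Proposition \ref{prop:cpot_inf} really uses the converse inclusion $\CPO_t\subseteq\Gamma$ of Lemma \ref{lemma:cpot_rep}, which is where Assumption \ref{assumption:convex_order} enters. Given that, the theorem follows by the same two-line argument as Theorem \ref{theorem:char_dpo}.
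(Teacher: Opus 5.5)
Your proposal is correct and is essentially the paper's proof, which repeats the argument of Theorem~\ref{theorem:char_dpo} with Lemma~\ref{lemma:cdpo} and Proposition~\ref{prop:cpot_inf} in place of Lemma~\ref{lemma:dpo} and Proposition~\ref{prop:pot_inf}. Your extra bookkeeping is accurate and only makes explicit what the paper leaves implicit: the comonotone tails $\Y^*_{t+1:T}\in\A^C_{t+1:T}$ required by Proposition~\ref{prop:cpot_inf} come from the feasible sets in one direction and from $\Y^*_{1:T}\in\A^C_{1:T}$ in the other, and Assumption~\ref{assumption:convex_order} enters only through the converse of Lemma~\ref{lemma:cdpo} and the inclusion $\CPO_t\subseteq\Gamma$ underlying Proposition~\ref{prop:cpot_inf}.
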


\medskip

\begin{proof}
The theorem follows from similar arguments as in the proof of Theorem \ref{theorem:char_dpo} (with Proposition \ref{prop:cpot_inf} instead of Proposition \ref{prop:pot_inf} and Lemma \ref{lemma:cdpo} instead of Lemma \ref{lemma:dpo}).
\end{proof}

\medskip

Theorem \ref{theorem:char_cdpo} states that an allocation is $c$-PO if and only if it is the solution to a series of recursive (backward in time) $\psi$ inf-convolution problems of the form \eqref{eq:inf_cdpo}. For $c$-PO allocations, we only require $\psi$ to be strictly monotone on the set of attainable risks in the comonotone market. Again, this condition on $\psi$ is always satisfied by the expected value.  

\medskip

\begin{corollary} \label{corollary:char_cdpo} Suppose that Assumption \ref{assumption:convex_order} holds. For $t\in\{T, \ldots, 1\}$, define $\Y_{t}^*$ recursively as a solution to
\begin{equation}\label{eq:inf_cdpo_ev}
\underset{\left\{
    \Y_{t}~|~\{\Y_{t}, \Y_{t+1:T}^{*}\}\in\IR_{t-1}\cap\A_{t:T}^C
\right\}} \argmin \, 
\E\left[\sum_{i\in\mN}\rho_{t-1}^{(i)}
    \big(Y_{t}^{(i)}+R_{t}^{(i)}(Y_{t+1:T}^{*(i)})\big)
\right].
\end{equation}

\medskip

\noindent Then, $\Y_{1:T}\in\CDPO$. 
\end{corollary}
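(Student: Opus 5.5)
The corollary follows from Theorem \ref{theorem:char_cdpo}. The plan is to show that, at every step of the backward recursion, $\psi_{t-1}:=\E[\,\cdot\,]$ is an admissible choice in \eqref{eq:inf_cdpo}. In other words, we check that the expectation on $\L_{t-1}^{\infty}$ satisfies the conditions of Proposition \ref{prop:cpot_inf}. Once this holds for each $t\in\{T,\ldots,1\}$, the recursively defined $\Y^*_{1:T}$ from \eqref{eq:inf_cdpo_ev} is exactly a recursion of the form \eqref{eq:inf_cdpo}. The "if" direction of Theorem \ref{theorem:char_cdpo}, which uses Assumption \ref{assumption:convex_order}, then gives $\Y^*_{1:T}\in\CDPO$.

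We verify the conditions in order. First, $\E$ maps $\L_{t-1}^{\infty}$ into $\R$, since elements are essentially bounded. Second, $\E$ is monotone: $V\le W$ a.s.\ implies $\E[V]\le\E[W]$. Third, $\E$ is strictly monotone on all of $\L_{t-1}^{\infty}$, and hence in particular on $\aleph^C$. Indeed, if $V\le W$ and $\P(V<W)>0$, then $W-V$ is non-negative and strictly positive on a set of positive measure, so $\E[W-V]>0$.

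We also need the inductive hypothesis of Proposition \ref{prop:cpot_inf}, namely $\Y^*_{t+1:T}\in\A^C_{t+1:T}$ at each stage. This holds automatically because the feasible set at stage $t$ is contained in $\A^C_{t:T}$. The base case $t=T$ has trivial risk-to-go, so the feasible set there is just $\IR_{T-1}\cap\A^C_{T:T}$.

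The main subtlety is not the monotonicity check. It is that the corollary implicitly assumes the argmin in \eqref{eq:inf_cdpo_ev} is nonempty at each step; the statement says "a solution," so we take existence as given. The remaining point is to confirm that Theorem \ref{theorem:char_cdpo} permits a different $\psi_{t-1}$ at each time, so choosing the expectation at every $t$ is allowed. Granting both, the corollary follows directly.
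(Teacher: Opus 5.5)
Your proposal is correct and follows the paper's reasoning. The corollary is the ``if'' direction of Theorem \ref{theorem:char_cdpo} with $\psi_{t-1}=\E[\cdot]$ at every step: the expectation is strictly monotone on all of $\L_{t-1}^{\infty}$, hence on $\aleph^C$, and feasibility at the previous stage guarantees $\Y^*_{t+1:T}\in\A^C_{t+1:T}$. As a minor aside, the direction you use relies only on $\Gamma\subseteq\CPO_t$ in Lemma \ref{lemma:cpot_rep} and the first half of Lemma \ref{lemma:cdpo}, and neither of these actually requires Assumption \ref{assumption:convex_order}.
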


\medskip

\subsection{Connection Between Comonotone and Unconstrained Markets} \label{sec:dpo_cdpo}

In this section, we discuss the connection between comonotone and unconstrained markets, i.e., markets without the comonotonicity constraint. We show that when agent preferences preserve the convex order (i.e., under Assumption \ref{assumption:convex_order}), the set of $c$-PO allocations coincides with the set of PO allocations that are comonotone. If we further assume that agent preferences are strictly convex order preserving, then any PO allocation must be comonotone, which justifies the study of the comonotone market. We refer to \citet{ghossoub2026efficiency} for similar results in the static case.

\medskip

\begin{theorem} \label{theorem:dpo_cdpo} Suppose that Assumption \ref{assumption:convex_order} holds. Then, all of the following hold: 
\begin{enumerate}[label = $(\roman*)$]
    \item $\DPO\neq\emptyset$ if and only if $\CDPO\neq\emptyset$.
    \item For any $\Y_{1:T}^*\in\DPO$, there exists $\tilde{\Y}_{1:T}\in\A_{1:T}^C$ such that for all $i\in\mN$ and $s\in\mfT$,     
    \begin{equation} \label{eq:r_equal}
    R_s^{(i)}(Y_{s+1:T}^{*(i)}) = R_s^{(i)}(\tilde{Y}_{s+1:T}^{(i)}).
    \end{equation}    
    Furthermore, $\tilde{\Y}_{1:T}\in\CDPO$. 
    \item $\CDPO\subseteq\DPO$
    \item For $\Y_{1:T}^*\in\DPO$, let $\tilde{\Y}_{1:T}\in\A_{1:T}^C$ such that $(ii)$ holds. Then, for all $t\in\T$,    
    \begin{align*} 
    \underset{\left\{
        \Y_{t}~|~\{\Y_{t}, \Y_{t+1:T}^{*}\}\in\IR_{t-1}
    \right\}} \inf \, 
    & \psi_{t-1}\left(
        \sum_{i\in\mN}\rho_{t-1}^{(i)}\big(Y_t^{(i)}+R_t^{(i)}(Y_{t+1:T}^{*(i)})\big)
    \right) = \\
    & \underset{\left\{
        \Y_{t}~|~\{\Y_{t}, \tilde{\Y}_{t+1:T}\}\in\IR_{t-1}\cap\A_{t:T}^C
    \right\}} \inf \, 
    \psi_{t-1}\left(
        \sum_{i\in\mN}\rho_{t-1}^{(i)}\big(Y_t^{(i)}+R_t^{(i)}(\tilde{Y}_{t+1:T}^{(i)})\big)
    \right), 
    \end{align*}   
    where $\psi_{t-1}:\L_{t-1}^{\infty}\to\R$ is the functional given by Proposition \ref{prop:pot_inf} for the allocation $\Y_{t:T}^*$.
    \item For any $\Y_{1:T}^*\in\CDPO$ and $t\in\T$,
    \begin{align} \label{eq:dpo_cdpo}
    \underset{\left\{
        \Y_{t}~|~\{\Y_{t}, \Y_{t+1:T}^{*}\}\in\IR_{t-1}
    \right\}} \inf \, 
    & \psi_{t-1}\left(
        \sum_{i\in\mN}\rho_{t-1}^{(i)}\big(Y_t^{(i)}+R_t^{(i)}(Y_{t+1:T}^{*(i)})\big)
    \right) = \\
    & \underset{\left\{
        \Y_{t}~|~\{\Y_{t}, \Y_{t+1:T}^{*}\}\in\IR_{t-1}\cap\A_{t:T}^C
    \right\}} \inf \, 
    \psi_{t-1}\left(
        \sum_{i\in\mN}\rho_{t-1}^{(i)}\big(Y_t^{(i)}+R_t^{(i)}(Y_{t+1:T}^{*(i)})\big)
    \right), \nonumber
    \end{align}
    where $\psi_{t-1}:\L_{t-1}^{\infty}\to\R$ is the functional given by Proposition \ref{prop:cpot_inf} for the allocation $\Y_{t:T}^*$.
\end{enumerate} 
\end{theorem}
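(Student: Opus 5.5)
The natural order is to prove $(ii)$ first and then read off the other parts from it, always using the comonotone improvement of Lemma \ref{lemma:com_improvement} together with Assumption \ref{assumption:convex_order}.

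\emph{Part $(ii)$.} Fix $\Y_{1:T}^*\in\DPO$ and let $\tilde{\Y}_{1:T}\in\A_{1:T}^C$ be its comonotone improvement from Lemma \ref{lemma:com_improvement}. By the backward induction already carried out in the proof of Lemma \ref{lemma:cdpo} (convex order preservation combined with monotonicity), we obtain $R_s^{(i)}(\tilde{Y}_{s+1:T}^{(i)})\le R_s^{(i)}(Y_{s+1:T}^{*(i)})$ for all $s$ and $i$. Hence $\tilde{\Y}_{1:T}\in\DIR$. If any of these inequalities were strict with positive probability, $\tilde{\Y}_{1:T}$ would dominate $\Y^*_{1:T}$ in the sense of Definition \ref{definition:dpo}, contradicting $\Y^*_{1:T}\in\DPO$. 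This gives \eqref{eq:r_equal}. To see that $\tilde{\Y}_{1:T}\in\CDPO$, take any $\Y_{1:T}\in\DIR\cap\A^C_{1:T}$ dominating $\tilde{\Y}_{1:T}$. By \eqref{eq:r_equal} it also dominates $\Y^*_{1:T}$, so by optimality of $\Y^*$ all risk-to-go values coincide. Thus no strict improvement exists.

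\emph{Part $(iii)$.} Take $\Y^*\in\CDPO$ and suppose some $\Y\in\DIR$ dominates it. Replace $\Y$ by its comonotone improvement $\tilde{\Y}$. By the same induction, $\tilde{\Y}\in\DIR\cap\A^C$ and its risk-to-go is at most that of $\Y$ at every time, so $\tilde{\Y}$ still dominates $\Y^*$. It also keeps the strict inequality: if $\Y$ is strictly better than $\Y^*$ with positive probability for some $(i,t)$, then $\tilde{\Y}$, whose risk-to-go is pointwise no larger, is as well. This contradicts $\Y^*\in\CDPO$.

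\emph{Part $(i)$.} This follows directly: $(ii)$ gives the forward direction and $(iii)$ the reverse.

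\emph{Parts $(iv)$ and $(v)$.} For $(v)$, the inequality "$\le$" is trivial because the constraint set on the right is smaller. For "$\ge$", take any feasible $\Y_t$ on the left and apply Corollary \ref{cor:com_improvement_2}; this is legitimate because $\Y^*_{t+1:T}$ is comonotone. It produces a comonotone $\Y'_t$ whose summands $\rho_{t-1}^{(i)}(\cdot)$ are each no larger, so $\Y'_t$ remains in $\IR_{t-1}$. Monotonicity of $\psi_{t-1}$ then gives the reverse inequality. For $(iv)$, first use \eqref{eq:r_equal} to replace $R_t^{(i)}(Y^{*(i)}_{t+1:T})$ by $R_t^{(i)}(\tilde{Y}^{(i)}_{t+1:T})$ in the objective. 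The IR constraint $\{\Y_t,\Y^*_{t+1:T}\}\in\IR_{t-1}$ depends on $\Y^*_{t+1:T}$ only through these risk-to-go values, so after the substitution the left side is the unconstrained infimum built on $\tilde{\Y}_{t+1:T}$. The argument of $(v)$ then applies, since $\tilde{\Y}_{t+1:T}$ is comonotone and the later steps $\tilde{\Y}_{t+1:T}$ are all that matter.

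\emph{Main obstacle.} The delicate point is the "strictly with positive probability" bookkeeping in $(ii)$ and $(iii)$. In $(iii)$ in particular, we must make sure the strict improvement survives the comonotone improvement. Monotonicity together with pointwise domination at every time index should suffice, and I would check this carefully through the recursion.
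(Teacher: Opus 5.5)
Your proposal is correct. Parts $(i)$, $(iii)$ and $(v)$ match the paper. In $(v)$ you argue directly rather than by contradiction, and the strictness bookkeeping you flag in $(iii)$ goes through exactly as you describe, by pointwise domination. The real differences are in $(ii)$ and $(iv)$.

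\textbf{Part $(ii)$, the equalities.} The paper builds $\tilde{\Y}_{1:T}$ one step at a time with Corollary \ref{cor:com_improvement_2}, which is the same construction as in Lemma \ref{lemma:com_improvement}. It gets each equality in \eqref{eq:r_equal} from $\Y^*_{t:T}\in\PO_{t-1}$ via Lemma \ref{lemma:dpo}. You instead get all the inequalities at once from the induction in Lemma \ref{lemma:cdpo}, and then use the global definition of $\DPO$.

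\textbf{Part $(ii)$, membership in $\CDPO$.} Here the difference is more substantive. The paper proves $\tilde{\Y}_{1:T}\in\CDPO$ only after $(iv)$. It shows $\tilde{\Y}_t$ attains the comonotone infimum, notes that $\psi_{t-1}$ is strictly monotone on $\aleph^C\subseteq\aleph$, and then invokes Proposition \ref{prop:cpot_inf} and Lemma \ref{lemma:cdpo}. Your argument works straight from the definitions: any comonotone, dynamically IR process weakly dominating $\tilde{\Y}_{1:T}$ also weakly dominates $\Y^*_{1:T}$, so it cannot be strictly better. This bypasses the welfare characterization entirely.

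\textbf{Part $(iv)$.} The paper sandwiches both infima around the attained value $\psi_{t-1}\big(\sum_{i}\rho_{t-1}^{(i)}(Y_t^{*(i)}+R_t^{(i)}(Y^{*(i)}_{t+1:T}))\big)$, using \eqref{eq:r_equal} at $s=t$ and at $s=t-1$. You instead transport the problem to $\tilde{\Y}_{t+1:T}$ and rerun the $(v)$ argument. That shows the two infima coincide for every monotone $\psi_{t-1}$, not only the one supplied by Proposition \ref{prop:pot_inf}.

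\textbf{What each route buys.} The paper's route states explicitly that $\Y^*_t$ and $\tilde{\Y}_t$ attain the common value with the same certifying functional. In other words, $\tilde{\Y}_{1:T}$ is produced by the recursive scheme of Theorem \ref{theorem:char_cdpo}. Your route is shorter and more general, and that identification follows from it with one extra line if you want it.
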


\medskip

\begin{proof}
We divide the proof of the theorem into five parts. First, we prove the first statement in $(ii)$. Second, we prove $(iv)$. Third, we finish the proof of $(ii)$, which also proves the first implication of $(i)$. Fourth, we prove $(iii)$, which also proves the converse of $(i)$. Finally, we prove $(v)$. 

\medskip

\noindent \underline{Proof of $(ii)$ - Part 1:} For $\Y_{1:T}^*\in\DPO$, we show the existence of $\tilde{\Y}_{1:T}\in\A_{1:T}^C$ by induction. For $s=T-1$, a comonotone improvement $\tilde{\Y}_T\in\A_T^C$ exists by Corollary \ref{cor:com_improvement_2}. Furthermore, Assumption \ref{assumption:convex_order} implies that for all $i\in\mN$,
\begin{equation} \label{eq:proof_dpo_cdpo_2}
\rho_T^{(i)}(\tilde{Y}_T^{(i)}) \leq \rho_T^{(i)}(Y_T^{*(i)}).
\end{equation}

\noindent By Lemma \ref{lemma:dpo}, $\Y_{1:T}^*\in\DPO$ implies that $\Y_T^*\in\PO_{T-1}$. Therefore, \eqref{eq:proof_dpo_cdpo_2} holds with equality for all $i\in\mN$. For the induction step, suppose there exists $\tilde{\Y}_{t+1:T}\in\A_{t+1:T}^C$ such that \eqref{eq:r_equal} holds for all $i\in\mN$ and $s\geq t$. Since $\{\Y_t^*, \tilde{\Y}_{t+1:T}\}\in\A_{t:T}$ and $\tilde{\Y}_{t+1:T}\in\A_{t+1:T}^C$, by Corollary \ref{cor:com_improvement_2}, there exists $\tilde{\Y}_t\in\A_t$ such that $\tilde{\Y}_{t:T}\in\A_{t:T}^C$ and 
\[
\tilde{Y}_t^{(i)}+R_t^{(i)}(\tilde{Y}_{t+1:T}^{(i)}) \cvx 
    Y_t^{*(i)}+R_t^{(i)}(\tilde{Y}_{t+1:T}^{(i)}),
\]

\noindent for all $i\in\mN$. Hence, by Assumption \ref{assumption:convex_order} and the fact that \eqref{eq:r_equal} holds for $s=t$,
\begin{equation} \label{eq:proof_dpo_cdpo_3}
\rho_{t-1}^{(i)}
    \big(\tilde{Y}_t^{(i)}+R_t^{(i)}(\tilde{Y}_{t+1:T}^{(i)})\big)
        \leq
    \rho_{t-1}^{(i)}
        \big(Y_t^{*(i)}+R_t^{(i)}(\tilde{Y}_{t+1:T}^{(i)})\big) 
        = 
    \rho_{t-1}^{(i)}\big(Y_t^{*(i)}+R_t^{(i)}(Y_{t+1:T}^{*(i)})\big).
\end{equation}

\noindent By Lemma \ref{lemma:dpo}, $\Y_{1:T}^*\in\DPO$ implies that $\Y_{t:T}^*\in\PO_{t-1}$. Therefore, \eqref{eq:proof_dpo_cdpo_3} holds with equality for all $i\in\mN$.

\medskip

\noindent \underline{Proof of $(iv)$:} For $\Y_{1:T}^*\in\DPO$, let $\tilde{\Y}_{1:T}\in\A_{1:T}^C$ such that $(ii)$ holds. Since $\Y_{1:T}^*\in\DPO$, Lemma \ref{lemma:dpo} implies that $\Y_{t:T}^*\in\PO_{t-1}$ for all $t\in\T$. Hence, for $t\in\T$, there exists a functional $\psi_{t-1}$ satisfying the conditions of Proposition \ref{prop:pot_inf} for the allocation $\Y_{t:T}^*$. Thus, for any $t\in\T$,
{\small
\begin{align*} 
\ \psi_{t-1}\left(\sum_{i\in\mN}\rho_{t-1}^{(i)}
    \big(Y_t^{*(i)}+R_t^{(i)}(Y_{t+1:T}^{*(i)})\big)
\right)  = 
\underset{\left\{
        \Y_{t}~|~\{\Y_{t}, \Y_{t+1:T}^{*}\}\in\IR_{t-1}
    \right\}} \inf \,
& \psi_{t-1}\left(\sum_{i\in\mN}\rho_{t-1}^{(i)}
    \big(Y_t^{(i)}+R_t^{(i)}(Y_{t+1:T}^{*(i)})\big)
\right) \\
 = \underset{\left\{
        \Y_{t}~|~\{\Y_{t}, \tilde{\Y}_{t+1:T}\}\in\IR_{t-1}
    \right\}} \inf \,
& \psi_{t-1}\left(\sum_{i\in\mN}\rho_{t-1}^{(i)}
    \big(Y_t^{(i)}+R_t^{(i)}(\tilde{Y}_{t+1:T}^{(i)})\big)
\right) \\
 \leq \underset{\left\{
    \Y_{t}~|~\{\Y_{t}, \tilde{\Y}_{t+1:T}\}\in\IR_{t-1}
        \cap\A_{t:T}^C
    \right\}} \inf \,
& \psi_{t-1}\left(\sum_{i\in\mN}\rho_{t-1}^{(i)}
    \big(Y_t^{(i)}+R_t^{(i)}(\tilde{Y}_{t+1:T}^{(i)})\big)
\right),
\end{align*}
}

\noindent where the second equality holds since \eqref{eq:r_equal} holds for $s=t$. Moreover,
{\small
\begin{align*}
\underset{\left\{
    \Y_{t}~|~\{\Y_{t}, \tilde{\Y}_{t+1:T}\}\in\IR_{t-1}
        \cap\A_{t:T}^C
    \right\}} \inf \,
 \psi_{t-1}\left(\sum_{i\in\mN}\rho_{t-1}^{(i)}
    \big(Y_t^{(i)}+R_t^{(i)}(\tilde{Y}_{t+1:T}^{(i)})\big)
\right)
\leq \psi_{t-1}\left(\sum_{i\in\mN}\rho_{t-1}^{(i)}
    \big(\tilde{Y}_t^{(i)}+R_t^{(i)}(\tilde{Y}_{t+1:T}^{(i)})\big)
\right)& \\
 = \psi_{t-1}\left(\sum_{i\in\mN}\rho_{t-1}^{(i)}
    \big(Y_t^{*(i)}+R_t^{(i)}(Y_{t+1:T}^{*(i)})\big)
\right)&,
\end{align*}
}

\noindent where the equality holds since \eqref{eq:r_equal} holds for $s=t-1$. Thus, both infima are equal to 
\[
\psi_{t-1}\left(
    \sum_{i\in\mN}\rho_{t-1}^{(i)}
        \big(Y_t^{*(i)}+R_t^{(i)}(Y_{t+1:T}^{*(i)})\big)
    \right).
\]

\medskip

\noindent \underline{Proof of $(ii)$ - Part 2:} For any $\Y_{1:T}^*\in\DPO$, let $\tilde{\Y}_{1:T}\in\A_{1:T}^C$ such that \eqref{eq:r_equal} holds for all $i\in\mN$ and $s\in\mfT$. Then, for all $t\in\T$, 
\[
\psi_{t-1}\left(
    \sum_{i\in\mN}\rho_{t-1}^{(i)}
        \big(Y_t^{*(i)}+R_t^{(i)}(Y_{t+1:T}^{*(i)})\big)
    \right) = 
\psi_{t-1}\left(
    \sum_{i\in\mN}\rho_{t-1}^{(i)}
        \big(\tilde{Y}_t^{(i)} + R_t^{(i)}(\tilde{Y}_{t+1:T}^{(i)})\big)
    \right),
\]

\noindent where $\psi_{t-1}$ is as defined in $(iv)$. Furthermore, $(iv)$ implies that 
\begin{align*}
\psi_{t-1}\bigg(
    \sum_{i\in\mN}\rho_{t-1}^{(i)}
        \big(\tilde{Y}_t^{(i)} + &  R_t^{(i)}(\tilde{Y}_{t+1:T}^{(i)})\big)
    \bigg) = \\
&\underset{\left\{
    \Y_{t}~|~\{\Y_{t}, \tilde{\Y}_{t+1:T}\}\in\IR_{t-1}
        \cap\A_{t:T}^C
    \right\}} \inf \,
\psi_{t-1}\left(
    \sum_{i\in\mN}\rho_{t-1}^{(i)}
        \big(Y_t^{(i)}+R_t^{(i)}(\tilde{Y}_{t+1:T}^{(i)})\big)
    \right),
\end{align*}

\noindent for all $t\in\T$. Since $\psi_{t-1}$ is strictly monotone on $\aleph$ for all $t\in\T$, it is strictly monotone on $\aleph^C$. Hence, by Proposition \ref{prop:cpot_inf}, $\tilde{\Y}_{t:T}\in\CPO_t$ for all $t\in\T$. Therefore, $\tilde{\Y}_{1:T}\in\CDPO$ by Lemma \ref{lemma:cdpo}. 

\medskip

\noindent \underline{Proof of $(iii)$:} Let $\Y_{1:T}^*\in\CDPO$. Assume, by way of contradiction, that $\Y_{1:T}^*\notin\DPO$. Since $\Y_{1:T}^*\in\CDPO\subseteq\DIR$, this implies that there exists $\hat{Y}_{1:T}\in\DIR$ such that 
\begin{align}
&\P\left(R_{t}^{(i)}(\hat{Y}_{t+1:T}^{(i)}) 
    \leq R_{t}^{(i)}(Y_{t+1:T}^{*(i)})
\right) = 1
    \ \text{for all} \ i\in\mN \ \text{and} \ t\in\mfT \ \text{and} \
\nonumber \\
& \P\left(R_{t}^{(i)}(\hat{Y}_{t+1:T}^{(i)}) 
    < R_{t}^{(i)}(Y_{t+1:T}^{*(i)})
\right) > 0 
    \ \text{for some} \ i\in\mN \ \text{and} \ t\in\mfT.
\label{eq:proof_dpo_cdpo_5}
\end{align}

\noindent Let $\tilde{\Y}_{1:T}\in\A_{1:T}^C$ be a comonotone improvement of $\hat{\Y}_{1:T}$ (which exists by Lemma \ref{lemma:com_improvement}). By the arguments in Lemma \ref{lemma:cdpo}, we have, 
\begin{equation*}
R_{s}^{(i)}(\tilde{Y}_{s+1:T}^{(i)})
    \leq 
R_{s}^{(i)}(\hat{Y}_{s+1:T}^{(i)})\leq R_{s}^{(i)}(Y_{s+1:T}^{*(i)}),
\end{equation*}

\noindent for all $s\in\mfT$ and $i\in\mN$. Furthermore, for $i\in\mN$ and $t\in\mfT$ such that \eqref{eq:proof_dpo_cdpo_5} holds, 
\[
\P\left(R_{t}^{(i)}(\tilde{Y}_{t+1:T}^{(i)}) 
    < R_{t}^{(i)}(Y_{t+1:T}^{*(i)})
\right) \geq \ 
\P\left(R_{t}^{(i)}(\hat{Y}_{t+1:T}^{(i)}) 
    < R_{t}^{(i)}(Y_{t+1:T}^{*(i)})
\right) 
> \ 0.
\]

\noindent Since $\tilde{\Y}_{1:T}\in\DIR\cap\A_{1:T}^C$, this contradicts the assumption that $\Y_{1:T}^*\in\CDPO$.  

\medskip

\noindent \underline{Proof of $(v)$:} Let $\Y_{1:T}^*\in\CDPO$. Then, by Lemma \ref{lemma:cdpo}, $\Y_{t:T}^*\in\CPO_t$ for all $t\in\T$. For $t\in\T$, let $\psi_{t-1}$ be functional given by Proposition \ref{prop:cpot_inf} for the allocation $\Y_{t:T}^*$. Since $\IR_{t-1}\cap\A_{t:T}^C\subseteq\IR_{t-1}$ for all $t\in\T$, we have 
\begin{align*}
\underset{\left\{
    \Y_{t}~|~\{\Y_{t}, \Y_{t+1:T}^{*}\}\in\IR_{t-1}
\right\}} \inf \, 
    & \psi_{t-1}\left(\sum_{i\in\mN}\rho_{t-1}^{(i)}
        \big(Y_t^{(i)}+R_t^{(i)}(Y_{t+1:T}^{*(i)})\big)
    \right) \leq \\
& \underset{\left\{
    \Y_{t}~|~\{\Y_{t}, \Y_{t+1:T}^{*}\}\in\IR_{t-1}\cap\A_{t:T}^C
\right\}} \inf \, 
    \psi_{t-1}\left(\sum_{i\in\mN}\rho_{t-1}^{(i)}
        \big(Y_t^{(i)}+R_t^{(i)}(Y_{t+1:T}^{*(i)})\big)
    \right).
\end{align*}

\noindent Assume, by way of contradiction, that there exists $t\in\T$ such that the inequality is strict. Then, there exists $\{\hat{\Y}_t, \Y_{t+1:T}^*\}\in\IR_{t-1}$ such that 
\begin{equation} \label{eq:proof_dpo_cdpo_4}
\psi_{t-1}\left(\sum_{i\in\mN}\rho_{t-1}^{(i)}
    \big(\hat{Y}_t^{(i)}+R_t^{(i)}(Y_{t+1:T}^{*(i)})\big)\right) < 
\psi_{t-1}\left(\sum_{i\in\mN}\rho_{t-1}^{(i)}
    \big(Y_t^{*(i)}+R_t^{(i)}(Y_{t+1:T}^{*(i)})\big)\right).
\end{equation}

\noindent Moreover, as $\{\hat{\Y}_t, \Y_{t+1:T}^*\}\in\A_t$ and $\Y_{t+1:T}^*\in\A_{t+1:T}^C$, by Corollary \ref{cor:com_improvement_2}, there exists $\tilde{\Y}_t\in\A_t$ such that $\{\tilde{\Y}_{t}, \Y_{t+1:T}^*\}\in\A_{t:T}^C$ and 
\[
\tilde{Y}_t^{(i)}+R_t^{(i)}(Y_{t+1:T}^{*(i)}) \cvx \hat{Y}_t^{(i)}+R_t^{(i)}(Y_{t+1:T}^{*(i)}) \ \text{for all} \ i\in\mN.
\]

\noindent By Assumption \ref{assumption:convex_order}, for all $i\in\mN$, we have
\[
\rho_{t-1}^{(i)}\big(\tilde{Y}_t^{(i)}+R_t^{(i)}(Y_{t+1:T}^{*(i)})\big)     \leq
\rho_{t-1}^{(i)}\big(\hat{Y}_t^{(i)}+R_t^{(i)}(Y_{t+1:T}^{*(i)})\big).
\]

\noindent Hence,
\begin{align*}
\underset{\left\{
    \Y_{t}~|~\{\Y_{t}, \Y_{t+1:T}^{*}\}\in\IR_{t-1}\cap\A_{t:T}^C
\right\}} \inf \, 
& \psi_{t-1}\left(\sum_{i\in\mN}\rho_{t-1}^{(i)}
    \big(Y_t^{(i)}+R_t^{(i)}(Y_{t+1:T}^{*(i)})\big)
\right) \\
\leq \ & \psi_{t-1}\left(\sum_{i\in\mN}\rho_{t-1}^{(i)}
    \big(\tilde{Y}_t^{(i)}+R_t^{(i)}(Y_{t+1:T}^{*(i)})\big)
\right) \\
\leq \ & \psi_{t-1}\left(\sum_{i\in\mN}\rho_{t-1}^{(i)}
    \big(\hat{Y}_t^{(i)}+R_t^{(i)}(Y_{t+1:T}^{*(i)})\big)
\right) \\
< \ & \psi_{t-1}\left(\sum_{i\in\mN}\rho_{t-1}^{(i)}
    \big(Y_t^{*(i)}+R_t^{(i)}(Y_{t+1:T}^{*(i)})\big)
\right) \\
= \underset{\left\{
    \Y_{t}~|~\{\Y_{t}, \Y_{t+1:T}^{*}\}\in\IR_{t-1}\cap\A_{t:T}^C
\right\}} \inf \, 
& \psi_{t-1}\left(\sum_{i\in\mN}\rho_{t-1}^{(i)}
    \big(Y_t^{(i)}+R_t^{(i)}(Y_{t+1:T}^{*(i)})\big)
\right),
\end{align*}

\noindent where the first inequality holds since $\{\tilde{\Y}_{t}, \Y_{t+1:T}^*\}\in\IR_{t-1}\cap\A_{t:T}^C$, the second inequality holds since $\psi_{t-1}$ is monotone, and the strict inequality follows from \eqref{eq:proof_dpo_cdpo_4}. This is a contradiction, so \eqref{eq:dpo_cdpo} holds for all $t\in\T$.
\end{proof}

\medskip

Statement $(i)$ of Theorem \ref{theorem:dpo_cdpo} states that comonotone Pareto optima exist if and only if Pareto optima exist. Moreover, if we view $\psi$ as a welfare function, then $(ii)$, $(iv)$, and $(v)$ implies that agents are not worse-off when they are restricted to the comonotone market. Furthermore, $(iii)$ states that all comonotone Pareto optima are PO, and hence we can find Pareto optima by searching for comonotone Pareto optima instead. In the next result, we show that all Pareto optima are comonotone when agent preferences strictly preserve the convex order. 

\medskip

\begin{corollary} \label{cor:dpo_cdpo}
Suppose that Assumption \ref{assumption:convex_order} holds. Then, $\DPO\cap\A_{1:T}^C=\CDPO$. Furthermore, if $\rho_t^{(i)}$ is strictly convex order preserving for all $i\in\mN$ and $t\in\mfT$, then $\DPO=\CDPO$. 
\end{corollary}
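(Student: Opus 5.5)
For the first claim, $\DPO\cap\A_{1:T}^C=\CDPO$, I would prove two inclusions. The inclusion $\CDPO\subseteq\DPO\cap\A_{1:T}^C$ follows from Theorem \ref{theorem:dpo_cdpo} $(iii)$, since $\CDPO\subseteq\DPO$ and $\CDPO\subseteq\A_{1:T}^C$ by Definition \ref{definition:cdpo}. For the reverse inclusion, let $\Y_{1:T}^*\in\DPO\cap\A_{1:T}^C$. Condition $(i)$ of Definition \ref{definition:cdpo} holds because $\Y_{1:T}^*\in\DIR\cap\A_{1:T}^C$. Condition $(ii)$ holds because any $\Y_{1:T}\in\DIR\cap\A_{1:T}^C$ that weakly improves all risks-to-go, with one strict improvement, also lies in $\DIR$. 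It would then contradict $\Y_{1:T}^*\in\DPO$. This half needs no assumption.

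For the second claim, it remains to show $\DPO\subseteq\A_{1:T}^C$ under strict convex order preservation. I would argue by contradiction. Take $\Y_{1:T}^*\in\DPO$ with $\Y_{1:T}^*\notin\A_{1:T}^C$, and let $\tau\in\T$ be the largest time at which the vector $\big(Y_\tau^{*(i)}+R_\tau^{(i)}(Y_{\tau+1:T}^{*(i)})\big)_{i\in\mN}$ fails to be comonotone. Then $\Y_{\tau+1:T}^*\in\A_{\tau+1:T}^C$, while $\Y_{\tau:T}^*\notin\A_{\tau:T}^C$. Corollary \ref{cor:com_improvement_2} gives $\tilde{\Y}_\tau$ with $\{\tilde{\Y}_\tau,\Y_{\tau+1:T}^*\}\in\A_{\tau:T}^C$. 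For every $i$,
\[
\tilde{Y}_\tau^{(i)}+R_\tau^{(i)}(Y_{\tau+1:T}^{*(i)})\cvx Y_\tau^{*(i)}+R_\tau^{(i)}(Y_{\tau+1:T}^{*(i)}),
\]
and for some $j$ the relation is strict ($\scvx$).

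Next apply $\rho_{\tau-1}^{(i)}$ to both sides. Convex order preservation gives $\rho_{\tau-1}^{(i)}(\tilde Y_\tau^{(i)}+R_\tau^{(i)}(Y^{*(i)}_{\tau+1:T}))\le R_{\tau-1}^{(i)}(Y_{\tau:T}^{*(i)})$ for all $i$. Strict preservation gives strict inequality with positive probability for agent $j$. Individual rationality at time $\tau-1$ follows because $\Y_{\tau:T}^*\in\IR_{\tau-1}$. Hence $\{\tilde{\Y}_\tau,\Y_{\tau+1:T}^*\}$ is a strict improvement at time $\tau-1$ obtained by changing only the time-$\tau$ allocation. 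This contradicts $\Y_{\tau:T}^*\in\PO_{\tau-1}$, which holds by Lemma \ref{lemma:dpo}. Equivalently, replace $\Y_\tau^*$ by $\tilde{\Y}_\tau$ in $\Y_{1:T}^*$ and propagate the improvement backward using monotonicity, as in \eqref{eq:lemma_dpo_9}. Therefore $\DPO\subseteq\A_{1:T}^C$, so $\DPO=\DPO\cap\A_{1:T}^C=\CDPO$.

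The main obstacle is making sure Corollary \ref{cor:com_improvement_2} yields the strict relation $\scvx$ for some agent whenever the time-$\tau$ vector is not comonotone. This rests on the strict half of the static comonotone improvement (Carlier et al.), which the paper invokes. A second point is that $\tau$ must be chosen maximal, so that the tail $\Y_{\tau+1:T}^*$ is already comonotone and the corollary applies without altering later periods.
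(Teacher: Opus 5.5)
Your proposal is correct and follows essentially the same route as the paper. The paper proves both inclusions of the first claim the same way you do. For the second claim it also picks the largest time at which comonotonicity fails, applies Corollary \ref{cor:com_improvement_2} including its strict part, and reaches a contradiction through strict convex-order preservation. The only cosmetic difference is that you close the argument via $\Y^*_{\tau:T}\in\PO_{\tau-1}$ from Lemma \ref{lemma:dpo}. The paper instead replaces the time-$\tau$ vector in the full process, checks that the result lies in $\DIR$, and shows it strictly improves agent $j$ at time $\tau-1$; you mention this variant as well.
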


\medskip

\begin{proof}
We first prove that $\DPO\cap\A_{1:T}^C\subseteq\CDPO$. Let $\Y_{1:T}^*\in\DPO\cap\A_{1:T}^C$. Then, there does not exist $\Y_{1:T}\in\DIR$ such that for all $i\in\mN$ and $t\in\T$, 
\[
\rho_{t-1,T}^{(i)}(Y_{t:T}^{(i)}) \leq 
    \rho_{t-1,T}^{(i)}(Y_{t:T}^{*(i)}),
\] 

\noindent and $\P\big(\rho_{t-1,T}^{(i)}(Y_{t:T}^{(i)})<\rho_{t-1,T}^{(i)}(Y_{t:T}^{*(i)})\big)>0$ for some $i\in\mN$ and $t\in\T$. Hence, no such allocation exists in $\DIR\cap\A_{1:T}^C$ either. Since $\Y_{1:T}^*\in\DIR\cap\A_{1:T}^C$, it follows that $\Y_{1:T}^*\in\CDPO$.  

\smallskip

Next, we prove that $\CDPO\subseteq\DPO\cap\A_{1:T}^C$. The fact that $\CDPO\subseteq\DPO$ follows from Theorem \ref{theorem:dpo_cdpo} $(iii)$ and any $c$-PO allocation is comonotone by definition. Hence, $\CDPO\subseteq\DPO\cap\A_{1:T}^C$. 

\smallskip

Finally, we prove that if $\rho_t^{(i)}$ is strictly convex order preserving for all $i\in\mN$ and $t\in\mfT$, then $\DPO\cap\A_{1:T}^C=\DPO$. Let $\Y_{1:T}\in\DPO$ and assume, by way of contradiction, that $\Y_{1:T}\in\A_{1:T}\setminus\A_{1:T}^C$. Since $\Y_{1:T}\notin\A_{1:T}^C$, there exists $\tau\in\T$ such that $\Y_{\tau:T}\notin\A_{\tau:T}^C$. Let $t\in\T$ be the largest time index such that $\Y_{t:T}\notin\A_{t:T}^C$. By Corollary \ref{cor:com_improvement_2}, there exists $\tilde{\Y}_{t}\in\A_t$ such that $\{\tilde{\Y}_t, \Y_{t+1:T}\}\in\A_{t:T}^C$ and
\begin{equation*}
\tilde{Y}_t^{(i)}+R_t^{(i)}(Y_{t+1:T}^{(i)}) \cvx 
    Y_t^{(i)}+R_t^{(i)}(Y_{t+1:T}^{(i)}),
\end{equation*}

\noindent for all $i\in\mN$. Moreover, as $\Y_{t:T}\notin\A_{t:T}^C$, $\tilde{\Y}_{t}$ can be chosen such that there exists $j\in\mN$ with
\begin{equation} \label{eq:strict_ineq}
\tilde{Y}_t^{(j)}+R_t^{(j)}(Y_{t+1:T}^{(j)}) \scvx 
    Y_t^{(j)}+R_t^{(j)}(Y_{t+1:T}^{(j)}).
\end{equation}

\noindent Let $\hat{\Y}_{1:T}$ be the process $\Y_{1:T}$ with the $t$-th random vector replaced by $\tilde{\Y}_t$. Then, for all $i\in\mN$ and $s\geq t+1$, we have,
\[
\rho_{s-1}^{(i)}
    \big(\hat{Y}_s^{(i)}+R_s^{(i)}(\hat{Y}_{s+1:T}^{(i)})\big) =
\rho_{s-1}^{(i)}\big(Y_s^{(i)}+R_s^{(i)}(Y_{s+1:T}^{(i)})\big).
\]

\noindent Furthermore, for all $i\in\mN$ and $1\leq s\leq t$, we have
\[
\rho_{s-1}^{(i)}
    \big(\hat{Y}_s^{(i)}+R_s^{(i)}(\hat{Y}_{s+1:T}^{(i)})\big) \leq 
\rho_{s-1}^{(i)}\big(Y_s^{(i)}+R_s^{(i)}(Y_{s+1:T}^{(i)})\big),
\]

\noindent where the inequality follows from Assumption \ref{assumption:convex_order} for $s=t$ and monotonicity of $\rho_s^{(i)}$ for $s\leq t-1$. Moreover, as $\Y_{1:T}\in\DPO$, it follows that $\hat{\Y}_{1:T}\in\DIR$. Finally, for $j\in\mN$ such that \eqref{eq:strict_ineq} holds, $\P\big(\rho_{t-1,T}^{(j)}(\hat{Y}_{t:T}^{(j)})<\rho_{t-1,T}^{(j)}(Y_{t:T}^{(j)})\big)>0$, so $\Y_{1:T}\notin\DPO$, which is a contradiction. 
\end{proof}

\medskip

Corollary \ref{cor:dpo_cdpo} states that any PO allocation that is also comonotone is $c$-PO. Furthermore, if all agent preferences strictly preserve the convex order, then all PO allocations must be comonotone. Thus, it suffices to characterize comonotone Pareto optima when agent preferences preserve the convex order. 

\medskip

\subsection{Explicit Characterization of Comonotone Dynamic Pareto Optima} \label{sec:cdpo_ex_char}

In this section, we derive an explicit characterization of $c$-PO allocations. Before stating our main result, we introduce some additional notation. For an allocation process $\Y_{1:T}$, we let $\bR_t(\Y_{t+1:T}):=\sum_{i\in\mN}R_t^{(i)}(Y_{t+1:T}^{(i)})$ denote the total risk-to-go at time $t$ and denote by $\underline{r}_t:=\essinf\{\bR_t\}$ its essential infimum. Whenever there is no confusion, we omit the dependence on $\Y_{t+1:T}$ and simply write $\bR_t$. Moreover, we define the set $\G$ as follows:
\[
\G:=\left\{\{g^{(i)}\}_{i\in\mN}~|~g^{(i)}:\R^+\to\R^+ 
    \ \text{is non-decreasing and}\ \sum_{i\in\mN} g^{(i)} = \text{Id}
\right\},
\]
where Id denotes the identity function. In the static case, a comonotone allocation can always be written as the sum of functions in $\G$ and a constant (see Lemma 3.13 in \citet{ghossoub2026efficiency}). We show in the following lemma that a similar representation exists in the dynamic case for the sum of an allocation and the risk-to-go. 

\medskip

\begin{lemma} \label{lemma:com_allocation} For $t\in\T$, let $\Y_{t:T}\in\A_{t:T}^C$ and $\underline{s}_{t}:=\essinf\{S_{t}\}$. Then, there exists functions $\{g_{t}^{(i)}\}_{i\in\mN}\in\G$ and constants $\{c_{t}^{(i)}\}_{i\in\mN}\in\R^n$ such that for all $i\in\mN$, 
\begin{equation} \label{eq:com_allocation}
Y_{t}^{(i)}+R_{t}^{(i)}(Y_{t+1:T}^{(i)}) = 
    g_{t}^{(i)}\left(S_{t}+\bR_{t}-\underline{s}_{t}-
        \underline{r}_{t}\right) 
    + c_{t}^{(i)},
\end{equation}

\noindent and $\sum_{i\in\mN}c_{t}^{(i)}=\underline{s}_{t}+\underline{r}_{t}$. 
\end{lemma}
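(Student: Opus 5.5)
Fix $t\in\T$ and set $Z^{(i)}:=Y_t^{(i)}+R_t^{(i)}(Y_{t+1:T}^{(i)})$ for $i\in\mN$. The plan is to apply the static representation to the random vector $\Z=(Z^{(1)},\ldots,Z^{(n)})$, as in Lemma 3.13 of \citet{ghossoub2026efficiency}. By Definition \ref{definition:com_process}, $\Z$ is comonotone in the classical sense. Since $\Y_{t:T}\in\A_{t:T}$, we have $\sum_i Y_t^{(i)}=S_t$, so
\[
\sum_{i\in\mN}Z^{(i)}=S_t+\bR_t=:U .
\]
By \citet{denneberg1994non} Proposition 4.5, there are non-decreasing $h^{(i)}:\R\to\R$ with $Z^{(i)}=h^{(i)}(U)$. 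Summing, $\sum_i h^{(i)}(u)=u$ holds for $u$ in the (essential) range of $U$. Because each $h^{(i)}$ is non-decreasing and they sum to the identity, each $h^{(i)}$ is $1$-Lipschitz on that range.

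Next I normalise at the bottom of the support. Let $\underline{u}:=\essinf U$. The key point is $\underline{u}\ge \underline{s}_t+\underline{r}_t$, since $U=S_t+\bR_t$ with $S_t\ge\underline{s}_t$ and $\bR_t\ge \underline{r}_t$ a.s. Set $a:=\underline{u}-\underline{s}_t-\underline{r}_t\ge0$. The difficulty is that the argument of $g_t^{(i)}$ in \eqref{eq:com_allocation} is shifted by $\underline{s}_t+\underline{r}_t$ rather than by $\underline{u}$. Hence $g_t^{(i)}$ must be defined on all of $\R^+$, with the region $[0,a)$ possibly never attained, and the constants must sum to exactly $\underline{s}_t+\underline{r}_t$.

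To handle this, I first extend $h^{(i)}$ to the interval $[\underline{u},\infty)$ while preserving monotonicity and $\sum_i h^{(i)}=\mathrm{Id}$. This is done by filling gaps of the range by linear interpolation (which keeps the sum equal to the identity) and extending beyond the supremum by $h^{(1)}(u)=h^{(1)}(\bar u)+(u-\bar u)$ with the others held constant. I also handle the value at $\underline{u}$ itself via right limits if $\underline{u}$ is not attained; the Lipschitz property guarantees these limits exist. Then I define, for $x\ge0$,
\[
g_t^{(i)}(x):=\begin{cases} \tfrac{x}{n}, & x< a,\\[2pt] \tfrac{a}{n}+h^{(i)}\bigl(x-a+\underline{u}\bigr)-h^{(i)}(\underline{u}), & x\ge a,\end{cases}
\]
and $c_t^{(i)}:=h^{(i)}(\underline{u})-\tfrac{a}{n}$.

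The verification then runs as follows.
\begin{itemize}
\item Each $g_t^{(i)}$ is non-decreasing, continuous at $a$, and maps $\R^+$ into $\R^+$.
\item Summing over $i$ gives $\sum_i g_t^{(i)}(x)=x$.
\item $\sum_i c_t^{(i)}=\underline{u}-a=\underline{s}_t+\underline{r}_t$.
\item For $x=U-\underline{s}_t-\underline{r}_t\ge a$, we get $g_t^{(i)}(x)+c_t^{(i)}=h^{(i)}(U)=Z^{(i)}$, which is \eqref{eq:com_allocation}.
\end{itemize}

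I expect the main obstacle to be the technical extension of the $h^{(i)}$ beyond the essential range of $U$ while keeping the identity-sum constraint. Alternatively one can quote Lemma 3.13 of \citet{ghossoub2026efficiency} directly for the comonotone vector $\Z$ with aggregate $U$, obtaining functions in $\G$ evaluated at $U-\underline{u}$. One then only needs the shift-by-$a$ padding above to pass from $\underline{u}$ to $\underline{s}_t+\underline{r}_t$.
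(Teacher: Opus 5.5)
Your proposal is correct and follows the paper's route: the paper's entire proof is the observation that $\big(Y_t^{(i)}+R_t^{(i)}(Y_{t+1:T}^{(i)})\big)_{i\in\mN}$ is a classical comonotone allocation of $S_t+\bR_t$, followed by a citation of Lemma 3.13 of \citet{ghossoub2026efficiency}. Your padding step, which sets $g_t^{(i)}(x)=x/n$ on $[0,a)$ with $a=\essinf(S_t+\bR_t)-\underline{s}_t-\underline{r}_t\geq 0$, is a worthwhile refinement. That static lemma naturally normalizes by $\essinf(S_t+\bR_t)$, which can strictly exceed $\underline{s}_t+\underline{r}_t$ when $S_t$ and $\bR_t$ are not comonotone, and the paper's one-line citation passes over this point.
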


\medskip

\begin{proof}
By definition, the random vector $\big(Y_{t}^{(1)}+R_{t}^{(1)}(Y_{t+1:T}^{(1)}), \ldots, Y_{t}^{(n)}+R_{t}^{(n)}(Y_{t+1:T}^{(n)})\big)$ is a comonotone allocation (in the classical sense) of $S_{t}+\bR_{t}$. Thus, the desired result follows from Lemma 3.13 from \citet{ghossoub2026efficiency}. 
\end{proof}

\medskip

We refer to the functions $g_{t}^{(i)}$ in Lemma \ref{lemma:com_allocation} as retention functions since they represent the portion of the risk $S_{t}$ and the total risk-to-go (at time $t$) that is retained by the $i$-th agent. To explicitly characterize the retention functions for a PO allocation, we first recall the definition of a Choquet integral.

\medskip

\begin{definition} 
\label{definition:choquet}
A Choquet integral with respect to a distortion function $k$ of the probability measure $\P$ is the mapping  $I_k: L_{t}^{\infty} \to \R$ given by 
\begin{equation} \label{eq:choquet}
I_k(X) 
:=
\int_{0}^{\infty} k\left(\mathbb{P}(X\geq x)\right)\, dx
+
\int_{-\infty}^{0}\left[k\left(\mathbb{P}(X\geq x)\right)-1\right]\, dx, \ \text{for all} \ X \in L_{t}^{\infty},
\end{equation}
where $k: [0,1]\to [0,1]$ is a non-decreasing function that satisfies $k(0)=0$ and $k(1) = 1$. The function $k$ is called the distortion function associated with $I_k$. 
\end{definition}

Many properties of a Choquet integral are determined by its distortion function $k$. In particular, they are law-invariant, and if $k$ is concave, then $I_k$ is a coherent risk measure. We refer to \cite{denneberg1994non} or \cite{marinacci2004introduction} for a detailed discussion of Choquet integrals and their properties. Furthermore, a law-invariant coherent risk measure admits a convenient representation as the supremum of Choquet integrals. We refer to \citet{ghossoub2026efficiency} Lemma 3.12 for a proof of this result in the context of law invariant and positively homogeneous monetary utilities. 

\medskip

\begin{lemma} \label{lemma:rep} Let $\rho:\L_t^{\infty}\to\R$ be a law invariant coherent risk measure. Then, for all $Z\in\L_t^{\infty}$, we have 
\[
\rho(Z)=\sup_{k\in\K}I_k(Z),
\]

\noindent where $\K$ is a convex set of concave distortion functions that is sequentially closed under pointwise convergence.
\end{lemma}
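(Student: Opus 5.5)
The plan is to prove Lemma~\ref{lemma:rep} in three steps: obtain a dual representation, reduce it to Choquet integrals via a Kusuoka-type argument, and then tidy the index set. Since $\rho$ is a law-invariant coherent risk measure on $\L_t^\infty$ over a non-atomic space, I would first invoke the Fatou property. Law-invariant convex risk measures on $L^\infty$ over a non-atomic space automatically have the Fatou property (Jouini--Schachermayer--Touzi). With Fatou available, the Delbaen dual representation gives
\[
\rho(Z)=\sup_{Q\in\mathcal Q}\E_Q[Z],
\]
where $\mathcal Q$ is a convex, weak$^*$-closed set of probability measures absolutely continuous with respect to $\P$. The sign convention here is that larger $Z$ is riskier, consistent with monotonicity in Definition~\ref{definition:properties}.

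Next I would use law invariance to pass to a Kusuoka representation. Because $\rho$ is law invariant, $\mathcal Q$ can be taken closed under rearrangements of densities. For each density $D=dQ/d\P$, the supremum over all $\tilde D\eqd D$ of $\E[\tilde D Z]$ equals, by the Hardy--Littlewood rearrangement inequality,
\[
\int_0^1 q_D(1-u)\,q_Z(1-u)\,du=\int_0^1 q_D(u)q_Z(u)\,du.
\]
This quantity is exactly the Choquet integral $I_{k_D}(Z)$ with concave distortion
\[
k_D(p):=\int_{1-p}^1 q_D(u)\,du.
\]
The resulting function is non-decreasing, concave (because $q_D$ is non-decreasing), and satisfies $k_D(0)=0$ and $k_D(1)=\E[D]=1$. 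Non-atomicity is what guarantees that every rearrangement $\tilde D$ is attained, so that the rearranged density still lies in $\mathcal Q$ and the supremum is genuinely achieved within the family. It follows that $\rho(Z)=\sup_{D}I_{k_D}(Z)$.

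Finally, I would define $\K$ to be the set of all concave distortions $k$ with $I_k\le\rho$ on $\L_t^\infty$. This set contains every $k_D$, so the supremum over $\K$ still equals $\rho$. Convexity of $\K$ holds because $k\mapsto I_k(Z)$ is affine: the defining integral in \eqref{eq:choquet} is linear in $k$. Sequential closedness under pointwise convergence follows from dominated convergence, which applies because $Z$ is bounded and the integrands are bounded on a compact range. Pointwise limits of concave distortions are again concave distortions, and the inequality $I_k\le\rho$ passes to the limit.

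The main obstacle is the second step, namely justifying that the supremum over rearrangements is attained inside $\mathcal Q$. That requires weak$^*$ closedness together with law invariance of $\mathcal Q$, plus a careful quantile coupling on the non-atomic space. Alternatively, one can simply cite \citet{ghossoub2026efficiency} Lemma~3.12: setting $U:=-\rho(-\cdot)$ turns $\rho$ into a law-invariant, positively homogeneous monetary utility, and the sign flip converts their inf-of-Choquet representation into the stated sup form. This conversion uses $I_k(-X)=-I_{\hat k}(X)$ with the dual distortion $\hat k(p)=1-k(1-p)$, so that concave distortions correspond to convex ones and conversely. I would present the argument primarily via this reduction and keep the direct Kusuoka argument above as supporting intuition.
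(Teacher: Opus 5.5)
Your proposal is correct and, in its primary form, matches the paper, which does not prove the lemma but simply invokes Lemma~3.12 of \citet{ghossoub2026efficiency} for law-invariant positively homogeneous monetary utilities. Your translation via $U=-\rho(-\cdot)$ and the dual distortion $\hat k(p)=1-k(1-p)$ is exactly the step the paper leaves implicit: the map $k\mapsto\hat k$ is affine and pointwise continuous, so convexity and sequential closedness of $\K$ carry over.

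Your direct Fatou/Kusuoka sketch is also sound, but it needs $(\Omega,\F_t,\P)$ itself to be atomless; on an atomic $\F_t$ the representation can fail, and the citation route needs this too. The attainment step you flag as the main obstacle is easier than you expect. Take a uniform $U$ with $D=q_D(U)$; law invariance then gives $I_{k_D}(Z)=\E[D\,q_Z(U)]\le\rho(q_Z(U))=\rho(Z)$. So no rearrangement-closedness or weak$^*$ closedness of $\mathcal Q$ is required.
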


Lemma \ref{lemma:rep} states that any law invariant coherent risk measure can be written as the supremum of Choquet integrals and we refer to \citet{ghossoub2026efficiency} for a more explicit characterization of the set $\K$. We are now ready to discuss the solutions to the $\psi$ inf-convolution problem in Proposition \ref{prop:cpot_inf}. As discussed earlier, since the expected value is strictly monotone, we will focus on expected value inf-convolution problems of the form
\begin{equation}\label{eq:proof_char_1}
\underset{
    \left\{\Y_{t}~\big|~\{\Y_{t}, \Y_{t+1:T}^{*}\}       
        \in\IR_{t-1}\cap\A_{t:T}^C \right\}} 
    \inf \, 
\E\left[\sum_{i\in\mN}\rho_{t-1}^{(i)}
    \big(Y_{t}^{(i)}+R_{t}^{(i)}(Y_{t+1:T}^{*(i)})\big)
\right],
\end{equation}

\medskip

\noindent for $t\in\T$ and $\Y_{t+1:T}^*\in\A_{t+1:T}^C$. 

\medskip

\begin{theorem} \label{theorem:char_cdpo2} Suppose that Assumption \ref{assumption:convex_order} holds and for each $i\in\mN$, the sequence of one-step conditional risk measures $\{\rho^{(i)}_t\}_{t\in\mfT}$ is coherent and equidistribution-preserving. For $t\in\T$, let $\underline{s}_{t}:=\essinf\{S_{t}\}$. Then, the following hold:

\begin{enumerate}[label = $(\roman*)$]
\item If the infimum \eqref{eq:proof_char_1} is attained by $\Y_{t}^*$, then for all $i\in\mN$,
\begin{equation}\label{eq:proof_char_com}
Y_{t}^{*(i)} = g_{t}^{*(i)}\left(
    S_{t}+\bR_{t}-\underline{r}_{t}-\underline{s}_{t}
\right) 
    - R_{t}^{(i)}(Y_{t+1:T}^{*(i)}) 
+ c_{t}^{*(i)},
\end{equation}

\noindent where $\{g_{t}^{*(i)}\}_{i\in\mN}\in\G$ solves
\begin{equation}\label{eq:proof_char_4}
\underset{\left\{\{g_{t}^{(i)}\}_{i\in\mN}\in\G
    \right\}} \inf \,
\E\left[\rho_{t-1}^{(i)}\big(
    g_{t}^{(i)}(S_{t}+\bR_{t}-\underline{r}_{t}-                \underline{s}_{t})\big)
    \right], 
\end{equation}

\noindent and $\{c_{t}^{*(i)}\}_{i\in\mN}\in\R^n$ is chosen such that $\sum_{i\in\mN}c_{t}^{*(i)}=\underline{r}_{t}+\underline{s}_{t}$ and for all $i\in\mN$,
\begin{equation} \label{eq:char_c}
c_{t}^{*(i)} \leq 
    \essinf\left\{\rho_{t-1}^{(i)}\Big(
        X_{t}^{(i)}+R_{t}^{(i)}(X_{t+1:T}^{(i)})
    \Big)-\rho_{t-1}^{(i)}\Big(
        g_{t}^{*(i)}\big(S_{t}+\bR_{t}-\underline{r}_{t}-\underline{s}_{t}\big)
    \Big)\right\}.
\end{equation}
\item A necessary condition for $\{g_{t}^{*(i)}\}_{i\in\mN}$ to solve \eqref{eq:proof_char_4} is $g_{t}^{*(i)}(x)=\int_{0}^{x}h_{t}^{(i)}(z)\,dz$ for $h_{t}^{(i)}:\R^+\to [0,1]$ satisfying
\[
\sum_{i\in L\left(x, k_{t}^{*(1)}, \ldots, k_{t}^{*(n)}\right)}
    h_{t}^{(i)}(x)=1 \quad \text{and} \quad
\sum_{i\in \mN\setminus L\left(
    x, k_{t}^{*(1)}, \ldots, k_{t}^{*(n)}\right)}
h_{t}^{(i)}(x)=0,
\]

\noindent where 
\begin{align*}
 L(x, k_{t}^{*(1)}, \ldots, k_{t}^{*(n)}):=
\biggl\{i\in\mN~\Big|~k_{t}^{*(i)}\Big(
    & \P\left(S_{t}+            
        \bR_{t}>x+\underline{r}_{t}+\underline{s}_{t}\right)
    \Big) = \\
 & \underset{k\in\mN} \min \,
    \left\{k_{t}^{*(k)}\Big(
        \P\left(S_{t}+\bR_{t}>x+\underline{r}_{t}+
            \underline{s}_{t}\right)
        \Big)
    \right\}
\biggr\},
\end{align*}

\noindent and $(k_{t}^{*(1)}, \ldots, k_{t}^{*(n)})$ solves 
\begin{equation}\label{eq:opt_problem_cdpo}
\underset{
    \{k_{t}^{(i)}\}_{i\in\mN}\in\prod_{i\in\mN}\K_{t}^{(i)}
} \max \,
\int_{0}^{\infty}
    \underset{i\in\mN} \min \,
\left\{k_{t}^{(i)}\left(
    \P\left(                
        S_{t}+\bR_{t}>x+\underline{r}_{t}+\underline{s}_{t}
    \right)
\right)\right\}\,dx, 
\end{equation}

\noindent where for any $i\in\mN$, $\K_{t}^{(i)}$ is the set of distortion functions from Lemma \ref{lemma:rep} for $\E[\rho_{t-1}^{(i)}(\cdot)]$. 
\end{enumerate}
\end{theorem}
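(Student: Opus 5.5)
The plan is to reduce the time-$t$ comonotone problem \eqref{eq:proof_char_1} to a static comonotone risk-sharing problem, and then import the static explicit characterization of \citet{ghossoub2026efficiency}.

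\textbf{Reduction.} Fix $t\in\T$ and $\Y_{t+1:T}^*\in\A_{t+1:T}^C$. Then $\bR_t$ and $R_t^{(i)}(Y_{t+1:T}^{*(i)})$ are fixed $\F_t$-measurable quantities. By Definition \ref{definition:com_process}, the constraint $\{\Y_t,\Y_{t+1:T}^*\}\in\A_{t:T}^C$ only asks that the vector $Z^{(i)}:=Y_t^{(i)}+R_t^{(i)}(Y_{t+1:T}^{*(i)})$ be a comonotone allocation of $S_t+\bR_t$. By Lemma \ref{lemma:com_allocation}, every such vector can be written as $g_t^{(i)}(S_t+\bR_t-\underline{r}_t-\underline{s}_t)+c_t^{(i)}$, with $\{g_t^{(i)}\}\in\G$ and real constants summing to $\underline{r}_t+\underline{s}_t$.

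\textbf{Separating the constants.} Translation invariance of $\rho_{t-1}^{(i)}$ (constants are $\F_{t-1}$-measurable) gives
\[
\E\Big[\sum_i\rho_{t-1}^{(i)}(Z^{(i)})\Big]=\sum_i \E\big[\rho_{t-1}^{(i)}(g_t^{(i)}(\cdot))\big]+\underline{r}_t+\underline{s}_t .
\]
So the objective depends only on $g$. The constants enter only through the IR constraint, which reads
\[
c_t^{(i)}\le \rho_{t-1}^{(i)}\big(X_t^{(i)}+R_t^{(i)}(X_{t+1:T}^{(i)})\big)-\rho_{t-1}^{(i)}\big(g_t^{(i)}(\cdot)\big)\quad\text{a.s.}
\]
Since $c$ is deterministic, this is equivalent to \eqref{eq:char_c}. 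Hence any minimizer $\Y_t^*$ yields a $g^*$ that minimizes the summed objective \eqref{eq:proof_char_4} over the $g$'s admitting a feasible $c$. Solving for $Y_t^{*(i)}$ gives \eqref{eq:proof_char_com}, which is (i).

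One subtlety must be addressed here: restricting \eqref{eq:proof_char_4} to feasible $g$ rather than all of $\G$. I would argue that the sum over $i$ of the right-hand sides of \eqref{eq:char_c} is at least $\underline{r}_t+\underline{s}_t$ at the optimum. This uses that the initial endowment process admits a comonotone improvement (Lemma \ref{lemma:com_improvement}) whose total cost is no smaller than the minimal one. If that fails, I would state (i) for the constrained infimum.

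\textbf{Part (ii).} The functional $\E[\rho_{t-1}^{(i)}(\cdot)]$ is a static, law-invariant coherent risk measure on $\L_t^\infty$. Coherence and monotonicity pass through the expectation, and law invariance holds because, by equidistribution preservation, $Y\eqd Z$ implies $\rho_{t-1}^{(i)}(Y)\eqd\rho_{t-1}^{(i)}(Z)$, so their expectations agree. By Lemma \ref{lemma:rep} it therefore equals $\sup_{k\in\K_t^{(i)}}I_k$. Problem \eqref{eq:proof_char_4} is then exactly the static comonotone inf-convolution of law-invariant coherent risk measures applied to the nonnegative aggregate $W:=S_t+\bR_t-\underline{r}_t-\underline{s}_t$.

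For comonotone $g$ with $g^{(i)}(x)=\int_0^x h^{(i)}$ (Lipschitz, since $\sum_i g^{(i)}=\mathrm{Id}$), comonotone additivity gives
\[
I_k\big(g^{(i)}(W)\big)=\int_0^\infty k\big(\P(W>x)\big)\,h^{(i)}(x)\,dx .
\]
The problem thus becomes the inf–sup
\[
\inf_h\sup_{k}\sum_i\int_0^\infty k^{(i)}\big(\P(W>x)\big)h^{(i)}(x)\,dx .
\]

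The main obstacle is justifying the exchange of inf and sup to reach the max–min form \eqref{eq:opt_problem_cdpo}. I would do this through a minimax theorem: the objective is bilinear, the set of $h$ with values in the simplex is convex and weak-$*$ compact in $L^\infty$, and each $\K_t^{(i)}$ is convex, with the needed closedness supplied by Lemma \ref{lemma:rep}. This mirrors the corresponding step in \citet{ghossoub2026efficiency}. Once inf and sup are exchanged, for fixed $k^*$ the inner infimum is attained pointwise by putting all mass of $h(x)$ on the indices attaining $\min_i k^{*(i)}(\P(W>x))$, namely the set $L$. Optimality of $(g^*,k^*)$ as a saddle point then forces $h_t^{(i)}$ to vanish off $L$ and sum to one on $L$. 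Finally, $\P(W>x)=\P(S_t+\bR_t>x+\underline{r}_t+\underline{s}_t)$, which gives the stated form.
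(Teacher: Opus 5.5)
Your route is the paper's: parametrize comonotone allocations by $(\{g_t^{(i)}\},\{c_t^{(i)}\})$ via Lemma \ref{lemma:com_allocation}, use translation invariance to split off $\sum_i c_t^{(i)}=\underline{r}_t+\underline{s}_t$, and read \eqref{eq:char_c} off the IR constraint; for (ii), use that $\E[\rho_{t-1}^{(i)}(\cdot)]$ is law invariant and coherent and apply the static characterization. The only difference is in (ii): the paper simply cites Corollary 4.2$(ii)$ of \citet{ghossoub2026efficiency}, whereas you re-derive it through the minimax/saddle-point argument, which is fine.

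Where your plan goes wrong is the proposed repair of the feasibility issue. In the static case the repair works because every quantity is a real number. The comonotone improvement $\tilde g$ of the endowment is feasible, and an unconstrained minimizer $g$ satisfies $\sum_i\rho^{(i)}(g^{(i)}(W))\le\sum_i\rho^{(i)}(\tilde g^{(i)}(W))$, which leaves enough slack to choose the constants.

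At a time $t\geq 2$ this breaks in two places.
\begin{enumerate}
\item A minimizer of \eqref{eq:proof_char_4} beats $\tilde g$ only \emph{in expectation}. IR, by contrast, is an almost-sure inequality between $\F_{t-1}$-measurable variables, so you get no pathwise comparison.
\item Comonotonicity forces the $c_t^{(i)}$ to be real numbers. Feasibility therefore requires $\sum_i D^{(i)}\ge\underline{r}_t+\underline{s}_t$, where $D^{(i)}$ is the right-hand side of \eqref{eq:char_c}. A sum of essential infima can be strictly smaller than the essential infimum of the sum, so even a pathwise bound on $\sum_i(\cdot)$ would not be enough.
\end{enumerate}
So you should not claim this step.

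The paper's proof does not handle this point either. Its sentence ``since the first summation is independent of $c_t^{(i)}$, \eqref{eq:proof_char_4} holds'' overlooks that the admissible set $\B(\{g_t^{(i)}\}_{i\in\mN})$ depends on $g$ and may be empty. What both arguments actually establish is your fallback: $g_t^*$ minimizes the objective of \eqref{eq:proof_char_4} over those $\{g_t^{(i)}\}_{i\in\mN}\in\G$ for which real constants exist that sum to $\underline{r}_t+\underline{s}_t$ and satisfy \eqref{eq:char_c}. The necessary condition in (ii) is about minimizers over all of $\G$. It therefore transfers to this $g_t^*$ only when the constrained and unconstrained infima coincide.
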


\medskip

\begin{proof}
Let $t\in\T$ and $\varrho_{t}^{(i)}(\cdot):=\E[\rho_{t-1}^{(i)}(\cdot)]$ for all $i\in\mN$.

\smallskip

\noindent \underline{Proof of $(i)$:} By linearity of expectation, we can rewrite \eqref{eq:proof_char_1} as
\begin{equation}\label{eq:proof_char_2}
\underset{
    \left\{\Y_{t}~\big|~\{\Y_{t}, \Y_{t+1:T}^{*}\}       
        \in\IR_{t-1}\cap\A_{t:T}^C \right\}} 
    \inf \, 
\sum_{i\in\mN}\varrho_{t}^{(i)}
    \left(Y_{t}^{(i)}+R_{t}^{(i)}(Y_{t+1:T}^{*(i)})\right).
\end{equation}

\noindent By Lemma \ref{lemma:com_allocation}, for any $\Y_{t}$ such that $\{\Y_{t}, \Y_{t+1:T}^*\}\in\A_{t:T}^C$, there exists functions $\{g_{t}^{(i)}\}_{i\in\mN}\in\G$ and constants $\{c_{t}^{(i)}\}_{i\in\mN}\in\R^n$ such that for all $i\in\mN$,
\begin{equation*}
Y_{t}^{(i)}+R_{t}^{(i)}(Y_{t+1:T}^{*(i)}) = 
    g_{t}^{(i)}(
        S_{t}+\bR_{t}-\underline{r}_{t}-\underline{s}_{t})
    + c_{t}^{(i)}, 
\end{equation*}

\noindent and $\sum_{i\in\mN}c_{t}^{(i)}=\underline{r}_{t}+\underline{s}_{t}$. Conversely, if $\{g_{t}^{(i)}\}_{i\in\mN}\in\G$ and $\{c_{t}^{(i)}\}_{i\in\mN}\in\R^n$ where $\sum_{i\in\mN}c_{t}^{(i)}=\underline{r}_{t}+\underline{s}_{t}$, then the random vector 
\[
\Z_{t}' := \left(
    g_{t}^{(1)}(S_{t}+\bR_{t}-\underline{r}_{t}-        
        \underline{s}_{t})+c_{t}^{(1)}, \ldots, 
    g_{t}^{(n)}(S_{t}+\bR_{t}-\underline{r}_{t}-
        \underline{s}_{t})+c_{t}^{(n)}\right),
\] 
is comonotone (in the static sense) and the random vector
\[
\Y_{t}' := \left(
    Z_{t}'^{(1)}-R_{t}^{(1)}(Y_{t+1:T}^{*(1)}), 
        \ldots, 
    Z_{t}'^{(n)}-R_{t}^{(n)}(Y_{t+1:T}^{*(n)})
\right),
\]

\noindent is an allocation of $S_{t}$. Hence $\{\Y_{t}', \Y_{t+1:T}^*\}\in\A_{t:T}^C$. Thus, we can rewrite \eqref{eq:proof_char_2} as
\begin{equation}\label{eq:proof_char_3}
\underset{
    \left\{\{g_{t}^{(i)}\}_{i\in\mN}\in\G, \, 
        \{c_{t}^{(i)}\}_{i\in\mN}\in 
            \B\left(\{g_{t}^{(i)}\}_{i\in\mN}\right)
        \right\}
    } \inf \, 
\sum_{i\in\mN}\varrho_{t}^{(i)}
    \left(g_{t}^{(i)}(S_{t}+\bR_{t}-\underline{r}_{t}-
        \underline{s}_{t})+c_{t}^{(i)}\right),
\end{equation}

\noindent where 
\begin{align*}
\B\left(\{g_{t}^{(i)}\}_{i\in\mN}\right) & := 
    \biggl\{\{c_{t}^{(i)}\}_{i\in\mN}~\Big|~
        \sum_{i\in\mN}c_{t}^{(i)} =          
            \underline{r}_{t}+\underline{s}_{t} 
        \ \text{and} \ \\
    &\biggl\{\left(
        g_{t}^{(1)}(S_{t}+\bR_{t}-\underline{r}_{t}-
            \underline{s}_{t})+c_{t}^{(1)}-R_{t}^{(1)}
                (Y_{t+1:T}^{*(1)}), \ldots, \right. \\
& \left. \qquad\qquad 
    g_{t}^{(n)}(S_{t}+\bR_{t}-\underline{r}_{t}-    
        \underline{s}_{t})+c_{t}^{(n)}-R_{t}^{(n)}
            (Y_{t+1:T}^{*(n)})\right), 
         \, \Y_{t+1:T}^*\biggr\}
    \in\IR_{t-1}\biggr\}.
\end{align*}

\noindent By translation invariance of $\varrho_{t}^{(i)}$, we can further rewrite \eqref{eq:proof_char_3} as 
\begin{equation*}
\underset{
    \left\{\{g_{t}^{(i)}\}_{i\in\mN}\in\G, \, 
        \{c_{t}^{(i)}\}_{i\in\mN}\in 
            \B\left(\{g_{t}^{(i)}\}_{i\in\mN}\right)
        \right\}
    } \inf \, 
\sum_{i\in\mN}\varrho_{t}^{(i)}
    \left(g_{t}^{(i)}(S_{t}+\bR_{t}-\underline{r}_{t}-      
        \underline{s}_{t})\right) +  
    \sum_{i\in\mN}c_{t}^{(i)}. 
\end{equation*}

\noindent Since the first summation is independent of $c_{t}^{(i)}$, \eqref{eq:proof_char_4} holds. To prove \eqref{eq:char_c}, note that by definition of $\IR_{t-1}$, for any $i\in\mN$, $c_{t}^{*(i)}$ satisfies
\[
\rho_{t-1}^{(i)}\big(
    g_{t}^{*(i)}\big(S_{t}+\bR_{t}-\underline{r}_{t}-
        \underline{s}_{t}\big)+c_{t}^{*(i)}\big)
    \leq
\rho_{t-1}^{(i)}\big(
    X_{t}^{(i)}+R_{t}^{(i)}(X_{t+1:T}^{(i)})\big).
\]

\noindent Hence, \eqref{eq:char_c} follows by translation invariance of $\rho_{t-1}^{(i)}$. 

\smallskip

\noindent \underline{Proof of $(ii)$:} Since $\rho_{t-1}^{(i)}$ is equidistribution-preserving and the expected value is law invariant, $\varrho_{t}^{(i)}$ is law invariant for all $i\in\mN$. Furthermore, the coherence of $\rho_{t-1}^{(i)}$ and the expected value implies that $\varrho_{t}^{(i)}$ is also coherent for all $i\in\mN$. Thus, the necessary condition for $\{g_{t}^{*(i)}\}_{i\in\mN}$ follows from \citet{ghossoub2026efficiency} Corollary 4.2 $(ii)$.
\end{proof}

\medskip

Note that the existence of a solution to \eqref{eq:opt_problem_cdpo} is guaranteed by \citet{ghossoub2026efficiency} Corollary 4.2 $(i)$. Furthermore, the structure of allocations given by \eqref{eq:proof_char_com} have a natural interpretation in an insurance context. Each function $g_{t}^{*(i)}$ is the portion of the loss $S_{t}$ and the total risk-to-go at time $t$ that is retained by agent $i$. Similar to the static case, the constants $c_{t}^{*(i)}$ represent the premia paid by agent $i$ at time $t$ to participate in the reallocation scheme and these premia ensure that allocations are IR. Moreover, since the set 
\[
\left\{\left(
    g_{t}^{(1)}(S_{t}+\bR_{t}-\underline{r}_{t}-
        \underline{s}_{t}), \ldots, 
    g_{t}^{(n)}(S_{t}+\bR_{t}-\underline{r}_{t}-
        \underline{s}_{t})\right) ~\big |~ 
    \{g_{t}^{(i)}\}_{i\in\mN}\in\G\right\},
\]
is the set of comonotone allocations (in the static sense) of $S_{t}+\bR_{t}-\underline{r}_{t}-\underline{s}_{t}$ and the risk measures $\E[\rho_{t-1}^{(1)}(\cdot)], \ldots, \E[\rho_{t-1}^{(n)}(\cdot)]$ are law invariant and coherent, a solution to \eqref{eq:proof_char_4} always exists by \citet{filipovic2008optimal} Theorem 2.5. In other words, if we ignore any individual rationality constraints, then we can always find a comonotone allocation process $\Y_{1:T}^*$ that minimizes the expected total risk-to-go at all time points. Furthermore, condition $(ii)$ is not sufficient and we refer to \citet{ghossoub2026efficiency} Appendix B for a counterexample in the static case. However, since \eqref{eq:proof_char_4} always has a solution under the conditions of Theorem \ref{theorem:char_cdpo2}, we can find a solution to \eqref{eq:proof_char_4} by finding all of the allocations satisfying the necessary condition and computing the expected total risk-to-go for each candidate solution. The candidate solution that minimizes the expected total risk-to-go must be a solution to \eqref{eq:proof_char_4}. We discuss this algorithm in more detail in Section \ref{sec:algorithm}. 

\bigskip

\section{Numerical Example} \label{sec:num_ex}

In this section, we consider an illustrative two-period example. We start with a summary of the algorithmic approach used to find comonotone Pareto optima in Section \ref{sec:algorithm}. Then, we discuss a two-agent example in Section \ref{sec:examples} and explicitly solve for the PO retention structure. Finally, in Section \ref{sec:dpo_mpo}, we revisit the notion of myopic Pareto optimality from Section \ref{sec:myopic_po} and discuss its shortfalls.  

\medskip

\subsection{An Algorithmic Approach to Finding Dynamic Pareto Optima} \label{sec:algorithm}

In summary, we propose the following recursive approach for finding $c$-PO allocations:

\begin{enumerate}[label = $(\roman*)$]
\item For the terminal time $t=T$, find all solutions $(k_{t}^{*(1)}, \ldots, k_{t}^{*(n)})$ to \eqref{eq:opt_problem_cdpo}. 
\item For each solution $(k_{t}^{*(1)}, \ldots, k_{t}^{*(n)})$ to \eqref{eq:opt_problem_cdpo}, define the set 
\begin{align*}
L(x, k_{t}^{*(1)}, \ldots, k_{t}^{*(n)}):=
\biggl\{i\in\mN~\Big|~k_{t}^{*(i)}\Big(
    & \P\left(S_{t}+            
        \bR_{t}>x+\underline{r}_{t}+\underline{s}_{t}\right)
    \Big) = \\
 & \underset{k\in\mN} \min \,
    \left\{k_{t}^{*(k)}\Big(
        \P\left(S_{t}+\bR_{t}>x+\underline{r}_{t}+
            \underline{s}_{t}\right)
        \Big)
    \right\}
\biggr\}.
\end{align*}

This set represents the agents with the most optimistic assessment of the likelihood of the tail event $[S_{t}+\bR_{t}>x+\underline{r}_{t}+\underline{s}_{t}]$. Note that in the dynamic setting, the tail event depends on both the aggregate endowment at time $t$ and the risk-to-go at time $t$. 

\item Identify all $\{h_{t}^{(i)}\}_{i\in\mN}$ satisfying the necessary condition in Part $(ii)$ of Theorem \ref{theorem:char_cdpo2} for all solutions $(k_{t}^{*(1)}, \ldots, k_{t}^{*(n)})$. For each set of functions $\{h_{t}^{(i)}\}_{i\in\mN}$, define the candidate solution 
$$g_{t}^{(i)}(x):=\int_{0}^{x}h_{t}^{(i)}(z)\,dz, \ i\in\mN.$$ 

\item Choose the candidate solution $\{g_{t}^{(i)}\}_{i\in\mN}$ that minimizes the expected total risk-to-go \eqref{eq:proof_char_4}, and choose constants $\{c_{t}^{*(i)}\}_{i\in\mN}$ satisfying \eqref{eq:char_c}. 

\item Repeat the above steps for each of the previous time steps. 
\end{enumerate}

\medskip

Recall that the quantities $\bR_{t}$ and $\underline{r}_{t}$ (which appear in Step $(ii)$ of the algorithm) depend on the optimal allocation $\Y_{t+1:T}^*$. Hence, the proposed algorithm is recursive, and the optimal allocation process must be determined starting at the terminal time. Furthermore, by Theorem \ref{theorem:dpo_cdpo} $(iii)$, any allocation determined by this algorithm is also PO. 

\medskip

\subsection{Pareto-Optimal Allocations in a Two-Period Setting} \label{sec:examples} 

In this section, we study an example with $n=2$ agents and $T=2$ periods. Assume that $S_1$ follows an exponential distribution with mean $200$, and that the conditional distribution of $S_2$ given $S_1$ is exponential with mean $S_1$. We are interested in the retention structure of the agent allocations (i.e., the functions $g_{t}^{*(i)}$ given by Theorem \ref{sec:cdpo_ex_char}). Since the constants $c_{t}^{*(i)}$ only shift the retention functions and do not change the shape of the retention structure, we will not discuss them in our example.

To evaluate the risk at time 1, the first agent uses the Expected Shortfall (ES) at level $\alpha=0.9$ while the second agent uses the expected value. Recall that the ES at level $\alpha\in(0,1)$ is given by 
\[
\ES_{\alpha}(X):=\frac{1}{1-\alpha}\int_{\alpha}^{1}F_X^{-1}(t) \, dt,
\]

\noindent where $F_X^{-1}(\alpha):=\inf\{x\in\R| F_X(x)\geq\alpha\}$, $\alpha\in (0,1)$, is the (left-continuous) quantile function of the random variable $X$. Thus, the ES is a distortion risk measure with distortion weight function $\gamma(u):=\frac{1}{1-\alpha}\Id_{u>\alpha}$. To evaluate the risk at time 2, both agents will use the ES, but the parameter $\alpha$ will depend on the realization of $S_1$. In other words, the risk measure of agent 1 is given by
\[
\rho_{0:2}^{(1)}(Y_{1:2})=\ES_{0.9}\big(Y_1+\ES_{\alpha^{(1)}(S_1)}(Y_2)\big),
\]

\noindent where $\alpha^{(1)}(S_1)=\begin{cases} 0, & S_1\leq F_{S_1}^{-1}(0.2) \\ 0.9, & S_1>F_{S_1}^{-1}(0.2)\end{cases}.$ 

\smallskip

The risk measure of agent 2 is given by
\[
\rho_{0:2}^{(2)}(Y_{1:2})=\E[Y_1+\ES_{\alpha^{(2)}(S_1)}(Y_2)],
\]

\noindent where $\alpha^{(2)}(S_1)=\begin{cases} 0, & S_1\leq F_{S_1}^{-1}(0.6) \\ 0.99, & S_1>F_{S_1}^{-1}(0.6)\end{cases}.$

\medskip

In this example, agent 2 uses a less conservative risk measures to evaluate the risk at time 1, and hence we expect that they will retain all of the risk at time 1. For both agents, their assessment of the time 2 risk depends on the realized value of $S_1$. For the first agent, if the realization of $S_1$ is less than the $0.2$ quantile, then they will use the expected value to evaluate the risk at time 2. In contrast, if the realization of $S_1$ is large, then they will continue to use the ES with level $\alpha=0.9$. For the second agent, if the realization of $S_1$ is less than the $0.6$ quantile, then they will continue using the expected value. Otherwise, they will switch to the ES with level $\alpha=0.99$. In other words, agent 2 becomes more conservative than agent 1 if $S_1$ is large. Since $S_1$ and $S_2$ are positively correlated, we expect agent 1 to retain the tranche of the aggregate endowment $S_2$ when $S_2$ is large. The following calculations show that the PO retention structure aligns with this intuition.

We begin with Step $(i)$ of the algorithm for the terminal time $T=2$. That is, we calculate the distortion functions of the risk measures $\varrho_2^{(i)}(\cdot):=\E[\rho_{1}^{(i)}(\cdot)]$, for $i \in \{1,2\}$. For the first agent,
\begin{align*}
\varrho_2^{(1)}(X) &= \E[\ES_{\alpha^{(1)}(S_1)}(X)] \\
&= 0.2 \, \E[X] + 0.8 \, \ES_{0.9}(X) \\
&= 0.2\int_{0}^{1}F_{X}^{-1}(u) \, du+0.8\int_{0}^{1}10\, \Id_{u>0.9} \, F_{X}^{-1}(u) \, du \\
&= \int_{0}^{1}(0.2+8\, \Id_{u>0.9}) \, F_{X}^{-1}(u) \, du. 
\end{align*}

\noindent Hence, $\varrho_2^{(1)}(X)$ is a distortion risk measure with distortion weight function $\gamma_2^{(1)}(u)=0.2+8 \, \Id_{u>0.9}$. Since $\gamma_2^{(1)}(1-u)$ is the left derivative of the distortion function $k_2^{*(1)}$, we have 
\[
k_2^{*(1)}(u)=\begin{cases} 8.2 \, u, & 0\leq u\leq 0.1 \\ 0.2 \, u + 0.8, & 0.1<u\leq 1\end{cases}.
\]
A similar calculation for agent 2 yields
\[
k_2^{*(2)}(u)=\begin{cases} 40.6 \, u, & 0\leq u\leq 0.01 \\ 0.6 \, u+0.4, & 0.01<u\leq 1\end{cases}.
\]

\noindent The distortion functions for the two agents at time 2 are illustrated in \Cref{fig:1}. Note that since $\varrho_2^{(1)}(X)$ and $\varrho_2^{(2)}(X)$ are distortion risk measures, the sets $\K_{2}^{(1)}$ and $\K_{2}^{(2)}$ in Theorem \ref{theorem:char_cdpo2} are singletons. Moreover, the two functions intersect at $u^*:=\frac{0.4}{7.6}$, which corresponds to the point when the ordering of agent preferences changes.  

\begin{figure} [!htbp]
\centering
\includegraphics[scale=0.35]{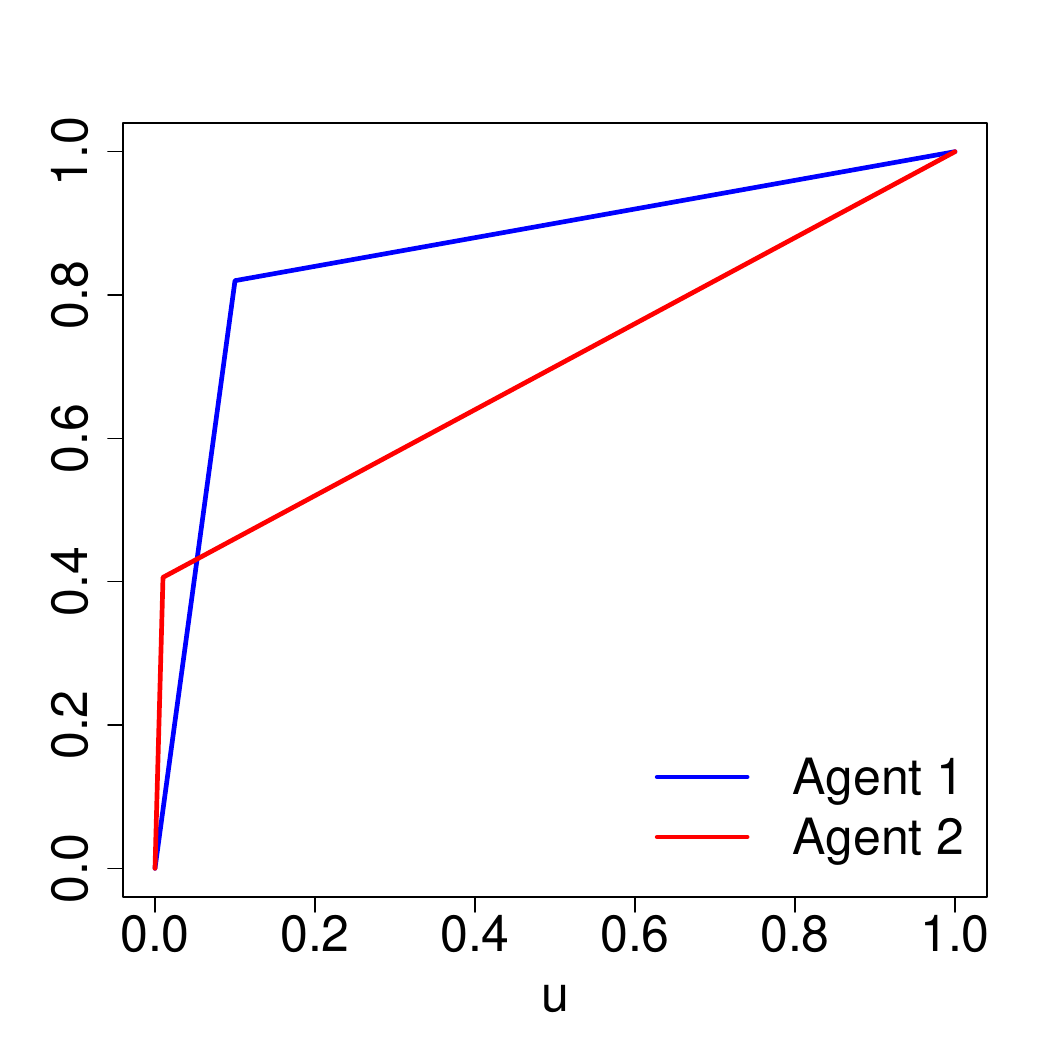}
\caption{Distortion functions $k_2^{*(1)}$ (blue) and $k_2^{*(1)}$ (red) at time 2.}\label{fig:1}
\end{figure}

Next, we study each agent's view towards the likelihood of the tail risk at time 2, which corresponds to Step $(ii)$ of the algorithm. Since time 2 is the terminal time, there is no risk-to-go, and the tail risk is simply the event $[S_2>x]$. We estimate the probabilities $\P(S_2>x)$ empirically using $100,000$ samples of $S_2$ since the marginal distribution of $S_2$ does not have a closed form expression for the survival function. From our sample, we obtain that $\P(S_2>780) \approx \frac{0.4}{7.6} = u^*$. Hence,
\[
L(x, k_2^{*(1)}, k_2^{*(2)}) = 
    \begin{cases} \{2\}, & x\leq 780 \\ 
        \{1\}, & x>780 
    \end{cases}.
\]

\noindent Each agent's assessment of tail risks at time 2 is illustrated in \Cref{fig:2}. 

Moreover, $h_2^{(1)}(z)=\Id_{z>780}$ and $h_2^{(2)}(z)=\Id_{z\leq 780}$, and the PO at time 2 retention functions are 
$$
g_2^{*(1)}(S_2) = \max\{S_2-780, 0\} 
\ \ \hbox{and} \ \ 
g_2^{*(2)}(S_2) = \min\{S_2, 780\}.
$$

\noindent Both retention functions are illustrated in \Cref{fig:3}. As expected, agent 1 retains the tranche of the aggregate endowment $S_2$ when $S_2$ is large. Moreover, like the static case, the retention function for agent $i$ increases linearly with slope 1 whenever $L(x, k_1^{*(1)}, k_1^{*(2)})=\{i\}$. That is, the retention function for agent $i$ is strictly increasing at $x$ if and only if agent $i$ is most optimistic about the likelihood of the tail risk $[S_2>x]$. Finally, since $L(x, k_1^{*(1)}, k_1^{*(2)})=\{i\}$ is a singleton for all $x\in\R$, the retention functions plotted are unique. That is, an allocation is $c$-PO at time 2 if and only if it has the retention structure shown in \Cref{fig:3}.  

\begin{figure} [!htbp]
\centering
\includegraphics[scale=0.35]{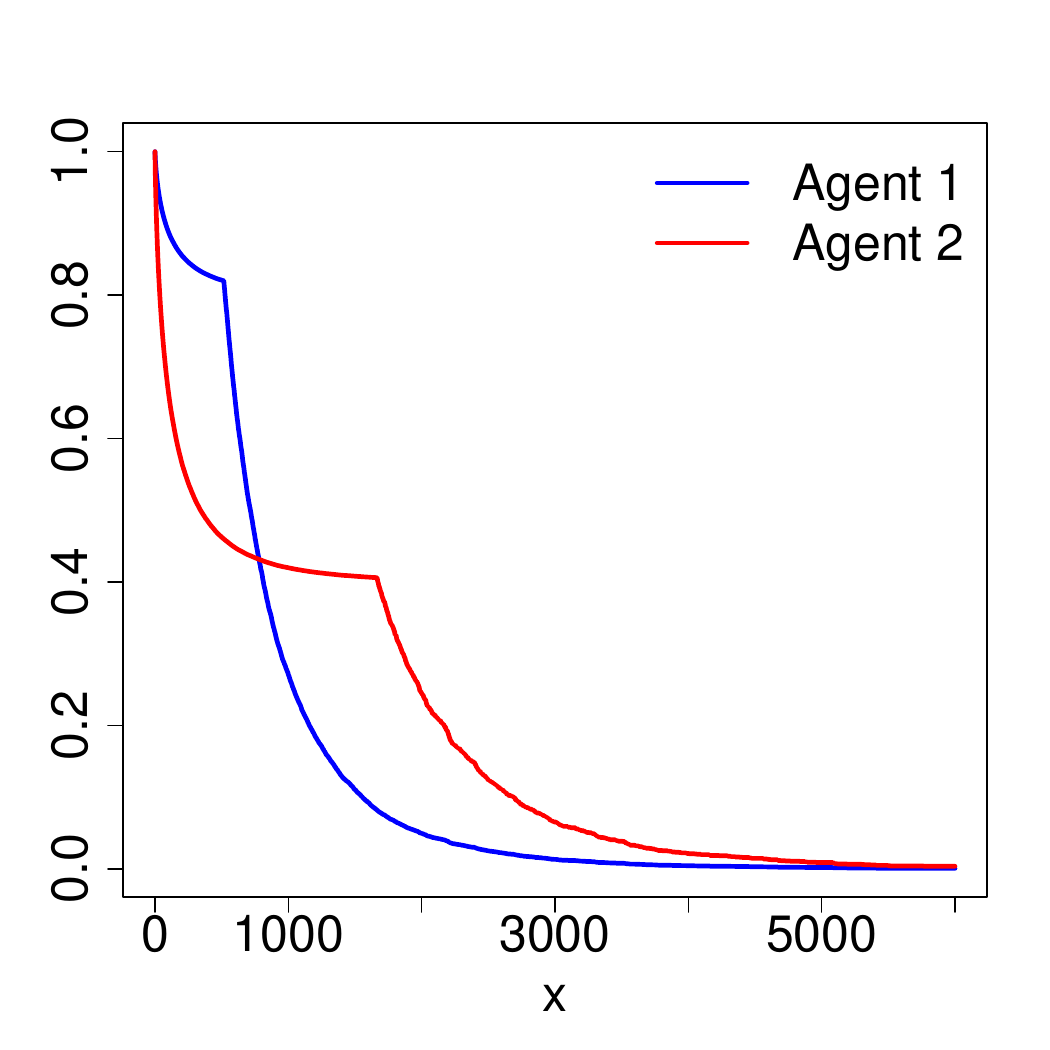}
\caption{Assessments of tail risks $k_2^{*(1)}\big(\P(S_2>x)\big)$ (blue) and $k_2^{*(2)}\big(\P(S_2>x)\big)$ (red) at time 2.}\label{fig:2}
\end{figure}

\begin{figure} [!htbp]
\centering
\includegraphics[scale=0.35]{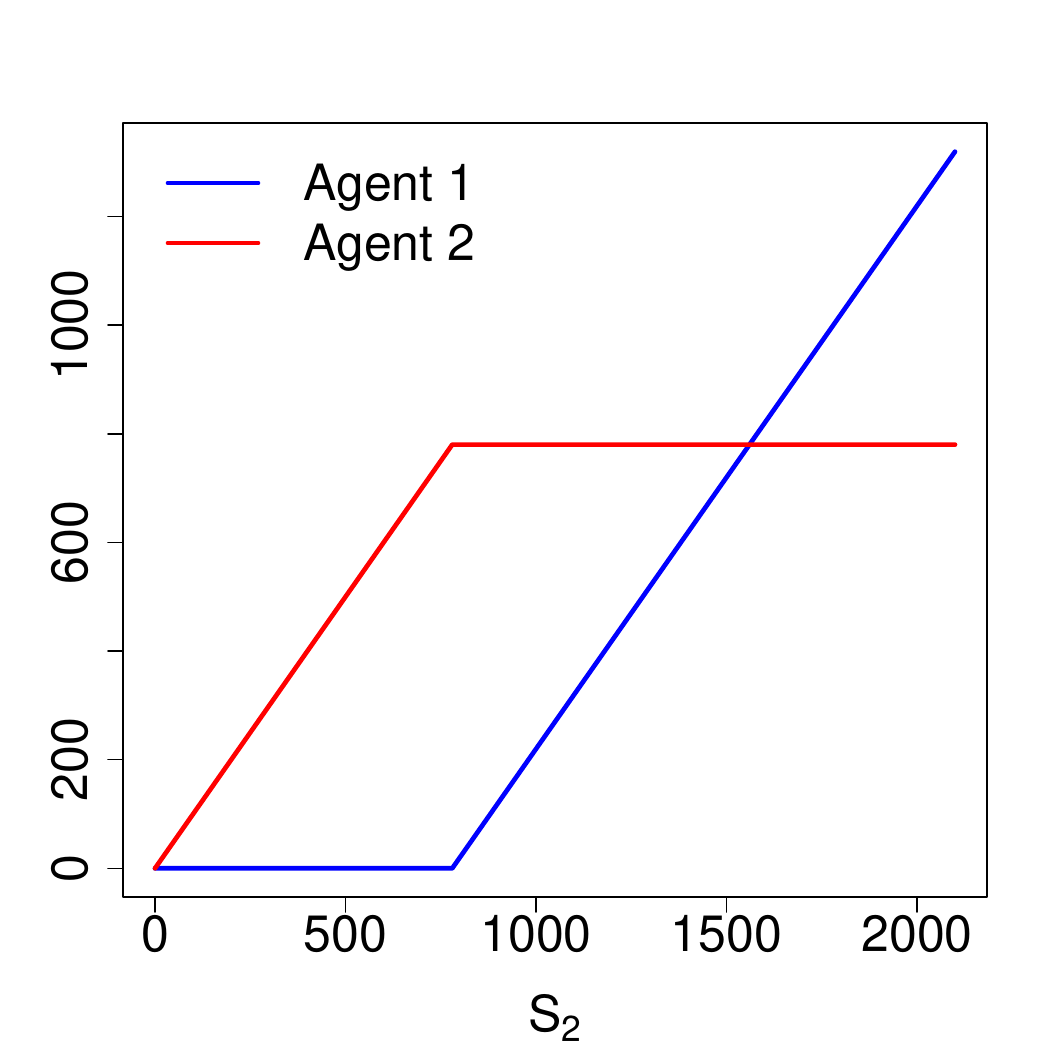}
\caption{PO at time 2 retention functions $g_2^{*(1)}(S_2)$ (blue) and $g_2^{*(2)}(S_2)$ (red).}\label{fig:3}
\end{figure}

We repeat the above steps for time 1. At time 1, the distortion functions for the agent 1 and 2 are $k_1^{*(1)}(u)=\min\{\frac{u}{0.1}, 1\}$ and $k_1^{*(2)}(u)=u$, respectively. Since $k_1^{*(1)}(u)>k_1^{*(2)}(u)$ for all $u\in (0,1)$, this implies that
\[
L(x, k_1^{*(1)}, k_1^{*(1)})=\{2\},  \ \text{for all} \ x\in\R.
\] 

\noindent Note that at time 1, the tail risk is $[S_1+\bR_1>x]$, where $\bR_1$ is the total risk-to-go at time 1. This contrasts with the tail risk in the static case, which is $[S_1>x]$. In the dynamic case, agents want to account for both the aggregate endowment at the current time point and the risk-to-go. To plot each agent's assessment of the likelihood of the tail risk, we first calculate the total risk-to-go at time 1. By definition of the risk-to-go, we have
\begin{align*}
\bR_1 &= \ES_{\alpha^{(1)}(S_1)}\left(g_2^{*(1)}(S_2)+c_2^{*(1)}\right)     + \ES_{\alpha^{(2)}(S_1)}\left(g_2^{*(2)}(S_2)+c_2^{*(2)}\right) \\
&=\begin{cases} 
    \E[g_2^{*(1)}(S_2)]+\E[g_2^{*(2)}(S_2)], & S_1\leq F_{S_1}^{-1}(0.2)      \\ 
    \ES_{0.9}\big(g_2^{*(1)}(S_2)\big)+\E[g_2^{*(2)}(S_2)], 
        & F_{S_1}^{-1}(0.2)< S_1\leq F_{S_1}^{-1}(0.6) \\
    \ES_{0.9}\big(g_2^{*(1)}(S_2)\big) + 
        \ES_{0.99}\big(g_2^{*(2)}(S_2)\big), & S_1>F_{S_1}^{-1}(0.6)
\end{cases} \\
&= \begin{cases} 200, & S_1\leq F_{S_1}^{-1}(0.2) 
        \\ 
    464.7, & F_{S_1}^{-1}(0.2)< S_1\leq F_{S_1}^{-1}(0.6) 
        \\
    1074.1, & S_1>F_{S_1}^{-1}(0.6)
\end{cases},
\end{align*}

\noindent where the first equality follows from the fact that $c_2^{*(1)}+c_2^{*(2)}=\underline{r}_2+\underline{s}_2$ and $\underline{r}_2=\underline{s}_2=0$, and the third equality is computed numerically using the same $100,000$ samples of $S_2$ used in the time 2 calculation. Using the above risk-to-go calculation and $100,000$ samples of $S_1$, we can numerically estimate the probabilities $\P(S_1+\bR_1>x)$, which we use to plot each agent's assessment of the likelihood of the tail risk in \Cref{fig:4}. 

\begin{figure} [!htbp]
\centering
\includegraphics[scale=0.35]{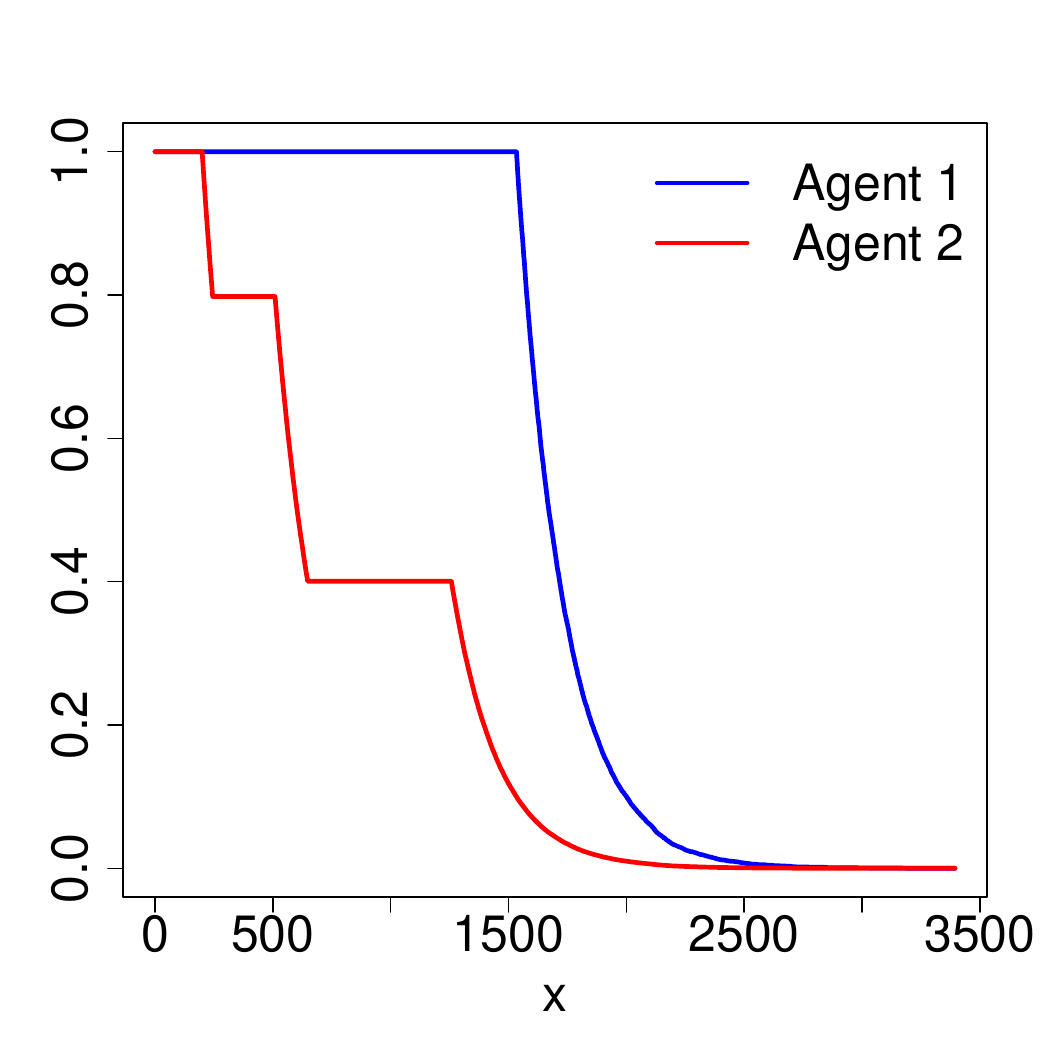}
\caption{Assessments of tail risks $k_1^{*(1)}\big(\P(S_1+\bR_1>x)\big)$ (blue) and $k_1^{*(2)}\big(\P(S_1+\bR_1>x)\big)$ (red) at time 1.}\label{fig:4}
\end{figure}

Finally, we plot the PO retention structure at time 1 in \Cref{fig:5}. The PO at time 1 retention functions (left plot) are 
$$
g_1^{*(1)}\big(S_1+\bR_{1}\big) = 0 
\ \ \hbox{and} \ \ 
g_1^{*(2)}\big(S_1+\bR_{1}\big) = S_1+\bR_{1}.
$$

\noindent As expected, agent 2 retains all of $S_1+\bR_1$ since they use a less conservative risk measure at time 1. Furthermore, as $g_1^{*(i)}$ is a non-decreasing function of $S_1+\bR_1$ for all $i\in\{1, 2\}$, the allocation process is comonotone in the sense of Definition \ref{definition:com_process}. On the right, we plot $\mfg_1^{*(i)}:=g_1^{*(i)}-R_1^{(i)}(Y_2^{*(i)})$, which is the total risk retained by the agent minus the risk-to-go of the agent. In this example, $\mfg_1^{*(1)}$ is negative, which means that the total risk retained by agent 1 at time 1 (which in this case is zero) is less than their risk-to-go. This difference is retained by the second agent, which acts as an insurer for agent 1 at time 1. We also note that $\mfg_1^{*(1)}$ is decreasing, so agent 1 has an incentive to over-report losses at time 1. However, it does not have the incentive to over-report the sum of the loss and the risk-to-go.      

\begin{figure} [!htbp]
\centering
\includegraphics[width=0.4\textwidth]{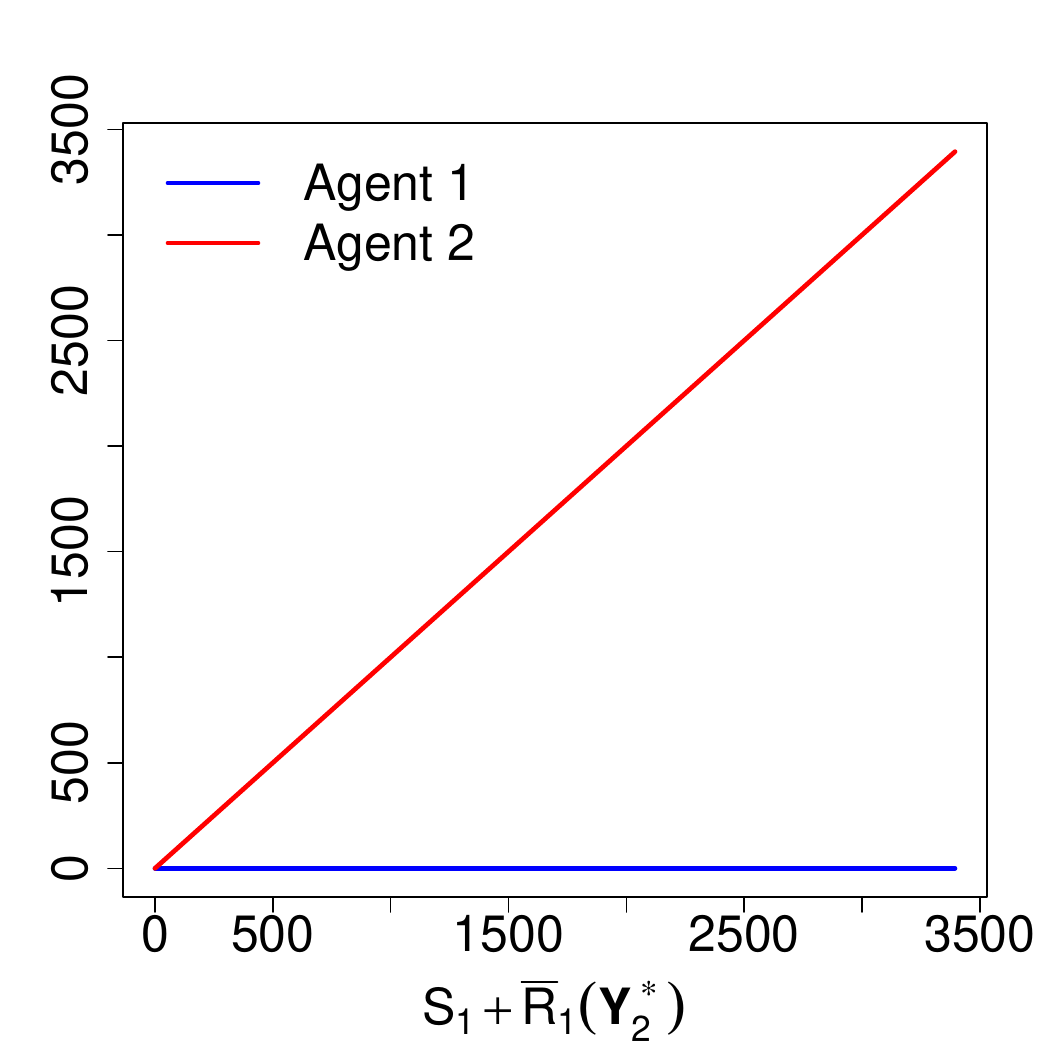}
\includegraphics[width=0.4\textwidth]{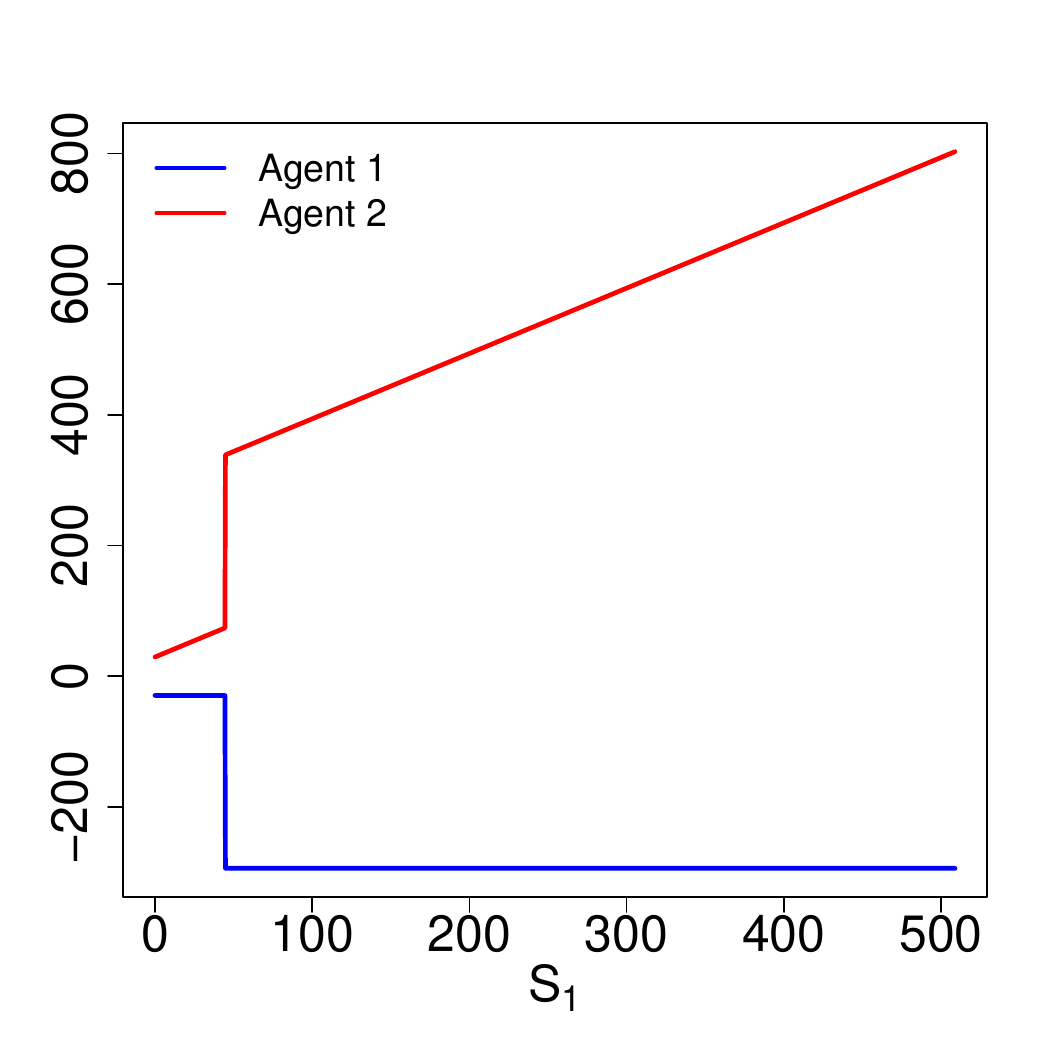}
\caption{The PO at time 1 retention functions $g_1^{*(1)}(S_1+\bR_1)$ (blue) and $g_1^{*(2)}(S_1+\bR_1)$ (red) are plotted on the left. On the right, we plot the total risk retained by the agent minus the risk-to-go of the agent.}\label{fig:5}
\end{figure}

\medskip

\subsection{Comparison with Myopic Pareto Optimality} \label{sec:dpo_mpo} 

In this section, we revisit the notion of myopic Pareto optimality defined in Section \ref{sec:myopic_po}. We first show that the PO allocation in Section \ref{sec:examples} is also comonotone myopic PO. Then, we consider alternative allocations that are comonotone myopic PO and argue that myopic Pareto optimality is not an appropriate notion of optimality when agent preferences are translation invariant.

The next result shows that comonotone Pareto optima are also comonotone myopic Pareto optima in a two-period setting when at least one agent uses the expected value at time $0$.  

\medskip

\begin{proposition} \label{prop:mpo_2} Suppose that agent preferences satisfy the assumptions of Theorem \ref{theorem:char_cdpo2} and at least one agent uses the expected value at time $0$. If $\Y_{1:2}^*$ is $c$-PO, then it is also comonotone myopic PO. 
\end{proposition}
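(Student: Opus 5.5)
The plan is to use Proposition~\ref{prop:cmpo}. Since $\Y_{1:2}^*$ is $c$-PO, it lies in $\DIR\cap\A_{1:2}^C\subseteq\MIR\cap\A_{1:2}^C$, so it suffices to show that $\Y_{1:2}^*$ attains the infimum \eqref{eq:inf_cmpo}. For $T=2$ we have $\rho_{0,2}^{(i)}(Y_{1:2}^{(i)})=\rho_0^{(i)}\big(Y_1^{(i)}+R_1^{(i)}(Y_2^{(i)})\big)$, and these values are real numbers because $\F_0$ is trivial. Let $j\in\mN$ be an agent with $\rho_0^{(j)}=\E$. Fix an arbitrary competitor $\tilde\Y_{1:2}\in\MIR\cap\A_{1:2}^C$. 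The goal is to show
\[
\sum_{i\in\mN}\rho_0^{(i)}\big(\tilde Y_1^{(i)}+R_1^{(i)}(\tilde Y_2^{(i)})\big)\;\geq\;\sum_{i\in\mN}\rho_0^{(i)}\big(Y_1^{*(i)}+R_1^{(i)}(Y_2^{*(i)})\big).
\]

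\textbf{Step 1 (replace the time-2 allocation).} I build $\hat\Y_{1:2}$ with $\hat\Y_2:=\Y_2^*$, and transfer the change in risk-to-go into agent $j$'s time-1 allocation:
\[
\hat Y_1^{(i)}:=\tilde Y_1^{(i)}+R_1^{(i)}(\tilde Y_2^{(i)})-R_1^{(i)}(Y_2^{*(i)}) \quad (i\neq j), \qquad \hat Y_1^{(j)}:=S_1-\sum_{i\neq j}\hat Y_1^{(i)}.
\]
This is an allocation. For $i\neq j$ the quantity $\hat Y_1^{(i)}+R_1^{(i)}(Y_2^{*(i)})$ is unchanged from the competitor. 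For agent $j$ it equals $\tilde Y_1^{(j)}+R_1^{(j)}(\tilde Y_2^{(j)})-\big(\bR_1(\Y_2^*)-\bR_1(\tilde\Y_2)\big)$. Because $\rho_0^{(j)}=\E$, agent $j$'s time-0 risk changes by exactly $\E[\bR_1(\Y_2^*)]-\E[\bR_1(\tilde\Y_2)]$.

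\textbf{Step 2 (the main obstacle).} I need $\E[\bR_1(\Y_2^*)]\le\E[\bR_1(\tilde\Y_2)]$ for an arbitrary allocation $\tilde\Y_2$ of $S_2$, with no IR restriction on $\tilde\Y_2$. Theorem~\ref{theorem:char_cdpo2}$(i)$ at $t=2$ (there $\bR_2=0$ and $\underline r_2=0$) gives $Y_2^{*(i)}=g_2^{*(i)}(S_2-\underline s_2)+c_2^{*(i)}$, where $\{g_2^{*(i)}\}$ minimizes \eqref{eq:proof_char_4} over all of $\G$ and $\sum_i c_2^{*(i)}=\underline s_2$. The argument then runs as follows.
\begin{enumerate}[label=$(\alph*)$]
\item By translation invariance, $\E[\bR_1(\Y_2^*)]$ equals the unconstrained minimum of $\sum_i\E[\rho_1^{(i)}(\cdot)]$ over comonotone allocations of $S_2$.
\item By the static comonotone improvement (Corollary~\ref{cor:com_improvement_2} with $t=T$) and Assumption~\ref{assumption:convex_order}, that minimum is no larger than $\E[\bR_1(\tilde\Y_2)]$.
\end{enumerate}
The delicate point is justifying that a $c$-PO $\Y_2^*$ has this unconstrained-optimal form. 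The $c$-PO property only says that $\Y_2^*$ solves some $\psi_1$-problem under IR constraints, via Theorem~\ref{theorem:char_cdpo}. I would argue that the $g$-part of any $c$-PO time-2 allocation must solve \eqref{eq:proof_char_4}. The IR constraints only restrict the constants $c_2^{(i)}$, and a suboptimal $g$ could be strictly improved in expectation while keeping IR by adjusting the constants. If this fails in general, I would take it as the setting in which the proposition is applied, namely allocations produced by Theorem~\ref{theorem:char_cdpo2}.

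\textbf{Step 3 (restore comonotonicity and conclude).} Step 2 gives $\sum_i\rho_0^{(i)}(\hat Y_1^{(i)}+R_1^{(i)}(Y_2^{*(i)}))\le$ the competitor's sum. It also shows that every agent's time-0 risk weakly decreases: agents $i\neq j$ are unchanged and agent $j$ improves. Hence $\hat\Y_{1:2}\in\MIR=\IR_0$.

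The vector $\hat\Y_1$ need not be comonotone, since agent $j$ received a random shift. Because $\Y_2^*\in\A_2^C$, Corollary~\ref{cor:com_improvement_2} yields $\check\Y_1$ with $\{\check\Y_1,\Y_2^*\}\in\A_{1:2}^C$. Assumption~\ref{assumption:convex_order} ensures that each $\rho_0^{(i)}$ weakly decreases, so $\{\check\Y_1,\Y_2^*\}\in\IR_0\cap\A_{1:2}^C$.

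Finally, $\Y_1^*$ solves the time-1 problem \eqref{eq:inf_cdpo} by Theorem~\ref{theorem:char_cdpo}, and here $\psi_0\colon\R\to\R$ is strictly increasing. Therefore $\Y_1^*$ minimizes $\sum_i\rho_0^{(i)}(W_1^{(i)}+R_1^{(i)}(Y_2^{*(i)}))$ over exactly this feasible set. Chaining the inequalities gives the displayed bound, which completes the argument.
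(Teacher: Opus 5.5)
\textbf{Where your route differs.} Your Steps 1 and 3 are correct, and they take a different route from the paper. The paper proceeds in two stages:
\begin{itemize}
\item it bounds \eqref{eq:inf_cmpo} from below by $\E[S_1]+\inf_{\G}\E[\sum_i\rho_1^{(i)}(g_2^{(i)}(S_2))]$, using $\rho_0^{(i)}\ge\E$ for \emph{every} agent;
\item it then shows that $\Y_{1:2}^*$ attains this bound, by reading the time-1 retention functions off the necessary condition in Theorem \ref{theorem:char_cdpo2}$(ii)$.
\end{itemize}
Your exchange argument replaces both stages. You swap in $\Y_2^*$, let the expectation agent $j$ absorb the $\F_1$-measurable change in total risk-to-go, comonotonize at time 1 via Corollary \ref{cor:com_improvement_2}, and conclude from the time-0 optimality of $\Y_1^*$. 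This needs neither $\rho_0^{(i)}\ge\E$ for the other agents nor any analysis of the time-1 retention structure, so it is cleaner on that side. (Minor slip: in Step 1, agent $j$'s quantity is $\tilde Y_1^{(j)}+R_1^{(j)}(\tilde Y_2^{(j)})+\bR_1(\Y_2^*)-\bR_1(\tilde\Y_2)$. Your next sentence already uses this sign.)

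\textbf{The gap: Step 2.} Your proposed justification there does not work, for two reasons.
\begin{itemize}
\item Being $c$-PO at time 1 is an almost-sure Pareto criterion. By Proposition \ref{prop:cpot_inf} it means minimizing $\psi_1(\sum_i\rho_1^{(i)}(\cdot))$ for \emph{some} $\psi_1$ strictly monotone on $\aleph^C$, not necessarily $\E$. A $g'\in\G$ with smaller $\E[\sum_i\rho_1^{(i)}(g'^{(i)}(\cdot))]$ can lower the total risk-to-go on one $\F_1$-event and raise it on another. In that case no reallocation yields an a.s.\ Pareto improvement, not even $\F_1$-measurable transfers followed by a comonotone improvement as in Lemma \ref{lemma:cpot_rep}.
\item The IR constraints do not restrict only the constants. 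They are a.s.\ inequalities with deterministic $c_2^{(i)}$, so $g$ is admissible only if $\sum_i\essinf\{\rho_1^{(i)}(X_2^{(i)})-\rho_1^{(i)}(g^{(i)}(S_2-\underline{s}_2))\}\ge\underline{s}_2$.
\end{itemize}

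A concrete instance comes from the paper's own two-period example with $\X_2=(0,S_2)$. There, $\Y_2^*=\X_2$ is $c$-PO at time 1. Take any comonotone $(g'(S_2)+c',\,S_2-g'(S_2)-c')$ that weakly dominates it.
\begin{itemize}
\item On $\{S_1\le F_{S_1}^{-1}(0.2)\}$ both agents use $\E$, which forces $\E[g'(S_2)]+c'=0$.
\item On $\{F_{S_1}^{-1}(0.2)<S_1\le F_{S_1}^{-1}(0.6)\}$, agent 1's condition then gives $\ES_{0.9}(g'(S_2))\le\E[g'(S_2)]$, so $g'(S_2)=0$ a.s.
\end{itemize}
Yet this $\Y_2^*$ has strictly larger expected total risk-to-go than the $780$-layer.

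\textbf{The property cannot be derived from $c$-PO.} Run your own Steps 1 and 3 with an expected-optimal $\Y_2'$ in place of $\Y_2^*$. This shows that any $c$-PO process whose time-2 block does not minimize $\E[\bR_1(\cdot)]$ over $\A_2^C$ is myopically dominated: agent $j$ is strictly better off and the others weakly. So the property must be assumed, which is your fallback.

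The paper's proof makes the same implicit assumption. It cites Theorem \ref{theorem:char_cdpo2} for $\E[\sum_i\rho_1^{(i)}(Y_2^{*(i)})]=\inf_{\G}\E[\sum_i\rho_1^{(i)}(g_2^{(i)}(S_2))]$, but that theorem concerns solutions of the expected-value problem \eqref{eq:proof_char_1}, not arbitrary $c$-PO allocations. State the hypothesis explicitly: the time-2 block minimizes $\E[\bR_1(\cdot)]$ over comonotone allocations of $S_2$, as produced by the algorithm of Section \ref{sec:algorithm}. With that, your argument is complete.
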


\begin{proof} Let $\Y_{1:2}^*\in\CDPO$. By Proposition \ref{prop:cmpo}, it suffices to show that $\Y_{1:2}^*$ attains the infimum \eqref{eq:inf_cmpo}. Assume, without loss of generality, that $\underline{r}_1=\underline{s}_1=\underline{r}_2=\underline{s}_2=0$ and the first $j$ agents use the expected value at time $0$. By Lemma \ref{lemma:com_allocation}, any comonotone allocation has representation \eqref{eq:com_allocation}. Conversely, any allocation with representation \eqref{eq:com_allocation} is comonotone. Hence, we can rewrite \eqref{eq:inf_cmpo} as 
\[
\underset{\left\{
    \{g_{1}^{(i)}\}_{i\in\mN}\in\G, \  
        \{c_{1}^{(i)}\}_{i\in\mN}\in 
            \B(\{g_{1}^{(i)}\}_{i\in\mN}), \   
    \Y_2\in\A_2^C
    \right\}} \inf \ 
\sum_{i\in\mN}\rho_{0}^{(i)}\left(
    g_1^{(i)}\big(S_1+\bR_1(\Y_2)\big)
\right) + c_1^{(i)},
\]

\noindent where $\B\big(\{g_{1}^{(i)}\}_{i\in\mN}\big)$ is as defined in the proof of Theorem \ref{theorem:char_cdpo2}. Moreover, as $\underline{r}_1=\underline{s}_1=0$, we can further rewrite the infimum as 
\[
\underset{\left\{
    \{g_{1}^{(i)}\}_{i\in\mN}\in\G, \    
    \Y_2\in\A_2^C
    \right\}} \inf \ 
\sum_{i\in\mN}\rho_{0}^{(i)}\left(
    g_1^{(i)}\big(S_1+\bR_1(\Y_2)\big)
\right). 
\]

\medskip

\noindent Since any concave distortion function is lower bounded by the identity function (i.e., the distortion function of the expected value), we have $\rho_0^{(i)}(\cdot)\geq\E[\cdot]$ for all $i\in\mN$, and thus the infimum \eqref{eq:inf_cmpo} is lower bounded by
\begin{equation} \label{eq:inf_cmpo_2}
\underset{\left\{
    \{g_{1}^{(i)}\}_{i\in\mN}\in\G, \   
    \Y_2\in\A_2^C
    \right\}} \inf \ 
\sum_{i=1}^{j}\E\left[
    g_1^{(i)}\big(S_1+\bR_1(\Y_2)\big)
\right] = 
\underset{\Y_2\in\A_2^C} \inf \ 
    \E[S_1+\bR_1(\Y_2)],
\end{equation}

\noindent where the equality follows from the definition of $\G$. Moreover, we can further rewrite \eqref{eq:inf_cmpo_2} as
\begin{align}
& \ \E[S_1] \, + \underset{\Y_2\in\A_2^C} \inf \
    \E\left[\sum_{i\in\mN}\rho_1^{(i)}(Y_2^{(i)})\right] 
\nonumber \\
= & \ \E[S_1] \, + \underset{\left\{\{g_{2}^{(i)}\}_{i\in\mN}\in\G, \ 
    \sum_{i\in\mN}c_2^{(i)}=\underline{r}_2+\underline{s}_2
        \right\}} \inf \,
    \E\left[\sum_{i\in\mN} \rho_{1}^{(i)}\big(g_{2}^{(i)}(S_{2})\big) + 
        c_2^{(i)}
    \right] 
\nonumber \\
= & \ \E[S_1] \, + \underset{\{g_{2}^{(i)}\}_{i\in\mN}\in\G
    } \inf \,
\E\left[\sum_{i\in\mN} \rho_{1}^{(i)}\big(g_{2}^{(i)}(S_{2})\big)\right] 
\label{eq:inf_cmpo_3}, 
\end{align}

\noindent where the first equality follows from Lemma \ref{lemma:com_allocation} and the second equality holds since $\underline{r}_2=\underline{s}_2=0$.

\smallskip

It remains to show that the lower bound \eqref{eq:inf_cmpo_3} is attained by $\Y_{1:2}^*$. Since $L(x, k_2^{*(1)}, \ldots, k_2^{*(n)})=\{i\in\mN~|~1\leq i\leq j\}$, we have
\begin{align*}
\sum_{i\in\mN}\rho_0^{(i)}\big(Y_1^{*(i)}+\rho_1^{(i)}(Y_2^{*(t)})\big)     & = \sum_{i=1}^{j} \E\big[Y_1^{*(i)}+\rho_1^{(i)}(Y_2^{*(t)})\big] \\
& = \sum_{i=1}^{j} \E\big[g_1^{*(i)}\big(S_1+\bR_1(\Y_2^{*})\big)\big] \\
& = \E[S_1+\bR_1(\Y_2^{*})] 
 = \E[S_1] + \E\left[\sum_{i\in\mN}\rho_1^{(i)}(Y_2^{*(i)})\right],
\end{align*}

\noindent where the second equality follows from Lemma \ref{lemma:com_allocation}, the third equality holds since $\{g_1^{*(i)}\}_{i\in\mN}\in\G$, and the last equality follows by definition of $\bR_1(\Y_2^{*})$. Furthermore, since $\Y_{1:2}^*\in\CDPO$, by Theorem \ref{theorem:char_cdpo2}, we have 
\[
\E\left[\sum_{i\in\mN}\rho_1^{(i)}(Y_2^{*(i)})\right] = 
    \underset{\{g_{2}^{(i)}\}_{i\in\mN}\in\G
    } \inf \,
\E\left[\sum_{i\in\mN} \rho_{1}^{(i)}
    \big(g_{2}^{(i)}(S_{2})\big)\right],
\]

\noindent and thus $\Y_{1:2}^*$ attains the desired bound. 
\end{proof}

\medskip

Since agent 2 uses the expected value at time $0$ in our example, the allocation derived in Section \ref{sec:examples} is myopic PO. Furthermore, for any myopic PO allocation $\Y_{1:2}^*$, the allocation 
\[
\left\{(Y_1^{*(1)}-c, \, Y_1^{*(2)}+c), \, (Y_2^{*(1)}+c, \, Y_2^{*(2)}-c)\right\},
\]

\noindent is also myopic PO for any $c\in\R$. This implies that agents are willing to exchange any amount of risk-free capital (i.e., cash) as long as it is repaid at the terminal time. Furthermore, agents do not have a preference for smaller or larger values of $c$, which is an unreasonable assumption in practice. This phenomenon does not happen with dynamic PO allocations since the dynamic individual rationality constraint limits the amount of cash that the agents are willing to exchange at any period, leading to a more reasonable set of feasible allocations.   

\bigskip

\section{Conclusion} \label{sec:conclusion}

In this paper, we study Pareto optima in a discrete-time multi-period pure-exchange economy, where agents have preferences over stochastic endowment processes that are represented by strongly time-consistent dynamic risk measures, which have been extensively studied in the risk management literature. Whereas the literature on dynamic optimal allocations studies Pareto efficiency from the perspective of time-$0$ preferences, which we refer to as \emph{myopic Pareto optimality}, we introduce the notion of \emph{dynamic Pareto-optimal} allocation processes that takes into account the dynamic structure of risk profiles in the decision making process. We show that such dynamic Pareto-optimal allocation processes can be constructed recursively backwards in time, starting with the allocation at the terminal time, and that the allocation at each time point is the solution to a social welfare optimization problem.

\medskip

Furthermore, we extend the classical notion of comonotonicity to the dynamic setting, and we derive a dynamic comonotone improvement theorem for allocation processes. This motivates the study of dynamic Pareto optima in a comonotone dynamic market. We provide a crisp characterization of Pareto optima when agent preferences are coherent and satisfy a property that we call \emph{equidistribution-preserving}, and we provide a clear and easily implementable recursive algorithm for finding these optima. In the special case of the single period problem, we recover the characterization of \citet{ghossoub2026efficiency}. Finally, we demonstrate our algorithmic approach in a two-period numerical example.

\newpage

\setlength{\parskip}{0.5ex}
\hypertarget{LinkToAppendix}{\ }
\appendix

\vspace{-0.4cm}


\section[Allocations in a Single-Period Model]{Allocations in a Single-Period Model}\label{appendix:single_period}

In this appendix, we briefly discuss the optimal-allocation problem in the static setting, which is well studied in the literature. In the static setting, we work with the set of essentially bounded random variables $\L^{\infty}$ on a non-atomic probability space $(\Omega, \F, \P)$. We consider a problem with $n\in\N^{+}$ agents, where the $i$-th agent has an initial endowment $X^{(i)}\in\L^{\infty}$, $i\in\mN$ that is realized at the end of the period. The $n$ agents then pool their endowments to the aggregate endowment $S=\sum_{i\in\mN} X^{(i)}\in\L^{\infty}$ and wish to reallocate the aggregate endowment $S$ based on their preferences characterized by a (static) risk measure. That is, agent-$i$ uses the risk measure $\rho_0^{(i)}$, $i \in \mN$. We recall below some classical definitions in the static setting. 

\medskip

\begin{definition} [Allocation - Static] \label{definition:allocation}
A random vector $\Y:=(Y^{(1)}, \ldots, Y^{(n)})$ is an allocation of $S$ if each component is essentially bounded and $\sum_{i\in\mN}Y^{(i)}=S$. We denote by $\A$ the set of allocations, i.e., 
\begin{equation*}
    \A:= \Big\{\Y ~\big|~  Y^{(i)} \in \L^\infty \ \text{for all} \ i \in \mN
    \ \text{and} \ \sum_{i\in\mN}Y^{(i)}=S \Big\}\,.
\end{equation*}
\end{definition}

\medskip

\begin{definition} [Individual Rationality - Static] \label{definition:ir_single} An allocation $\Y\in\A$ is IR if for all $i\in\mN$,
\[
\rho_0^{(i)}(Y^{(i)})\leq \rho_0^{(i)}(X^{(i)}).
\]

\medskip

\noindent We denote by $\IR$ the set of all IR allocations, i.e., 
\begin{equation*}
    \IR:=\Big\{ \Y \in \A ~\big| ~  \rho_0^{(i)}(Y^{(i)})\leq \rho_0^{(i)}(X^{(i)}), \; i\in\mN \Big\}\,.
\end{equation*}
\end{definition}

\medskip

Note that the set $\IR$ is always non-empty since $\X\in\IR$. 

\medskip

\begin{definition} [Pareto Optimality - Static] \label{definition:po_single} An allocation $\Y^*\in\A$ is PO if 
\smallskip
\begin{enumerate}[label = $(\roman*)$]
    \item $\Y^*\in\IR$.
    \medskip
    \item there does not exist an allocation $\Y\in\IR$ that satisfies     
        $\rho_0^{(i)}(Y^{(i)})\leq \rho_0^{(i)}(Y^{*(i)})$
        for all $i\in\mN$, and at least one inequality is strict. 
\end{enumerate}

\medskip

\noindent We denote by $\PO$ the set of all PO allocations. 
\end{definition}

\medskip

If we restrict the market to the set of comonotone allocations, denoted $\A^C$, then we have the following notion of comonotone Pareto optimality. 

\medskip

\begin{definition} [Comonotone Pareto Optimality - Static] \label{definition:cpo_single} 
An allocation $\Y^*$ is $c$-PO if 
\smallskip
\begin{enumerate}[label = $(\roman*)$]
    \item $\Y^*\in\IR\cap\A^C$.
    \medskip
    \item there does not exist an allocation $\Y\in\IR\cap\A^C$ that satisfies     
        $\rho_0^{(i)}(Y^{(i)})\leq \rho_0^{(i)}(Y^{*(i)})$
        for all $i\in\mN$, and at least one inequality is strict. 
\end{enumerate}

\medskip

\noindent We denote by $\CPO$ the set of all $c$-PO allocations. 
\end{definition}

\medskip

The following proposition summarizes two seminal results for Pareto optima. Part $(i)$ characterizes Pareto optima for translation-invariant risk measures, and part $(ii)$ is the classical comonotone improvement theorem. We omit the proof of the proposition and refer to \citet{ghossoub2026efficiency} Proposition 3.3 for a proof of $(i)$ and \citet{carlier2012pareto} Theorem 3.1 for a proof of $(ii)$.

\medskip

\begin{proposition}[Static Case Results] \hfill \label{prop:static} 
\smallskip
\begin{enumerate}[label = $(\roman*)$]
    \item Suppose $\rho_0^{(i)}$ is translation invariant for all $i\in\mN$. Then $\Y\in\PO$ if and only if $\Y$ is optimal for the problem
    \smallskip
    \[
    \underset{
    \Y\in\IR} 
    \inf \,
        \sum_{i\in\mN}\rho_0^{(i)}(Y^{(i)}).
    \]

  \medskip
  
    \item For each $\Y\in\A$, there exists $\tilde{\Y}\in\A^C$ such that $\tilde{Y}^{(i)}\cvx Y^{(i)}$ for all $i\in\mN$. Moreover, if $\Y\notin\A^C$, then $\tilde{\Y}$ can be chosen such that $\tilde{Y}^{(j)}\scvx Y^{(j)}$, for some $j\in\mN$. 
\end{enumerate}
\end{proposition}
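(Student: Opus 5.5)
We prove the two parts separately. Part $(i)$ is an elementary redistribution argument that uses only translation invariance. Part $(ii)$ is the comonotone improvement theorem: we build it first for finitely-valued aggregates and then pass to the limit by a compactness argument.

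\emph{Part $(i)$.} Sufficiency is immediate. Suppose $\Y^*$ attains $\inf_{\Y\in\IR}\sum_{i\in\mN}\rho_0^{(i)}(Y^{(i)})$ but is not PO. Then some $\Y\in\IR$ satisfies $\rho_0^{(i)}(Y^{(i)})\le\rho_0^{(i)}(Y^{*(i)})$ for all $i$, with at least one strict inequality. Summing over $i$ gives a strictly smaller objective value, which contradicts optimality.

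For necessity, let $\Y^*\in\PO$ and suppose some $\Z\in\IR$ satisfies $\sum_i\rho_0^{(i)}(Z^{(i)})<\sum_i\rho_0^{(i)}(Y^{*(i)})$. Define the real numbers
\[
d^{(i)}:=\rho_0^{(i)}(Y^{*(i)})-\rho_0^{(i)}(Z^{(i)}),\qquad \delta:=\sum_{i\in\mN}d^{(i)}>0,
\]
and set $\hat Z^{(i)}:=Z^{(i)}+d^{(i)}-\delta/n$. The added constants sum to $\delta-\delta=0$, so $\hat\Z\in\A$. By translation invariance,
\[
\rho_0^{(i)}(\hat Z^{(i)})=\rho_0^{(i)}(Y^{*(i)})-\delta/n<\rho_0^{(i)}(Y^{*(i)})\le\rho_0^{(i)}(X^{(i)}).
\]
Hence $\hat\Z\in\IR$ and it strictly improves every agent, contradicting $\Y^*\in\PO$. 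This is the same device used with $\epsilon^{(i)}$ in Proposition \ref{prop:cmpo}, but simpler, because we need not partition the agents.

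\emph{Part $(ii)$, finite case.} We follow the Landsberger--Meilijson / Carlier--Dana--Galichon route. First let $S$ take finitely many values $s_1<\dots<s_m$. On the simplex of $\sigma(S)$-measurable allocations we minimize the strictly convex functional $\sum_i\E[(Y^{(i)})^2]$ over allocations $\Z$ with $Z^{(i)}\cvx Y^{(i)}$ for all $i$.
\begin{itemize}
\item This set is nonempty, since it contains $\E[\Y\mid S]$ by conditional Jensen.
\item It is convex and compact in the finite-dimensional space, because convex order is preserved under mixtures.
\item A minimizer must be comonotone. If some $Z^{(i)}$ and $Z^{(j)}$ cross, i.e.\ $s_k<s_l$ with $Z^{(i)}(s_k)>Z^{(i)}(s_l)$ while $Z^{(j)}$ moves the other way, we transfer mass between $i$ and $j$ on the two atoms in the Landsberger--Meilijson fashion. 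This keeps the sum equal to $S$, yields a mean-preserving contraction for both agents (so convex order is preserved), and strictly lowers the quadratic objective.
\end{itemize}

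\emph{Part $(ii)$, general case.} For general bounded $S$, we approximate $S$ by discretizations $S_m=\E[S\mid\G_m]$ along finite partitions, and take $\Y_m:=\E[\Y\mid\G_m]$ so that $\Y_m$ is an allocation of $S_m$. We then take the comonotone improvements $g^{(i)}_m(S_m)$ given by the finite case. Comonotone allocations are $1$-Lipschitz and nondecreasing in the aggregate, so Helly/Arzel\`a--Ascoli gives a subsequence $g^{(i)}_m\to g^{(i)}$ uniformly on the range of $S$. The limit $\tilde Y^{(i)}=g^{(i)}(S)$ is a comonotone allocation. The convex-order inequalities $\tilde Y^{(i)}\cvx Y^{(i)}$ pass to the limit because $\E[\phi(\cdot)]$ is continuous under $L^\infty$-bounded almost sure convergence, and $\E[\phi(Y_m^{(i)})]\le\E[\phi(Y^{(i)})]$ holds throughout.

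\emph{Strict improvement.} If $\Y\notin\A^C$, we argue that equality $\tilde Y^{(i)}\eqd Y^{(i)}$ for all $i$ together with $\sum_i\tilde Y^{(i)}=\sum_iY^{(i)}$ forces $\Y$ itself to be comonotone. The reason is that $\mathrm{Var}(S)$ equals the sum of all pairwise covariances, and among allocations with fixed marginals this quantity is maximal exactly at the comonotone coupling, with strict maximality otherwise. So if all marginals were unchanged, $\Y$ would be comonotone, a contradiction. Hence some $j$ has $\tilde Y^{(j)}\cvx Y^{(j)}$ without equal laws, and then $\E[(\tilde Y^{(j)})^2]<\E[(Y^{(j)})^2]$ gives $\tilde Y^{(j)}\scvx Y^{(j)}$ with $\phi(x)=x^2$.

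\emph{Main obstacle.} The hardest step is the limiting argument and preserving strictness in the non-discrete case. If citing \citet{carlier2012pareto} Theorem 3.1 is acceptable there, we would do so, and verify only that their hypotheses (non-atomic space, bounded $S$) hold here.
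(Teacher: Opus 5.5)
Your proposal is correct, and it takes a different route from the paper only in the sense that the paper gives no proof at all. It defers (i) to \citet{ghossoub2026efficiency} (Proposition 3.3) and (ii) to \citet{carlier2012pareto} (Theorem 3.1).

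\begin{itemize}
\item Your part (i) is the standard translation-invariance argument. You spread the aggregate gain $\delta$ through constants $d^{(i)}-\delta/n$, which make every agent strictly better off. This is cleaner than the partition device in Proposition \ref{prop:cmpo}.
\item For (ii) you rebuild the comonotone improvement theorem yourself. In the finitely-valued case you minimize a quadratic over $\sigma(S)$-measurable allocations dominated in convex order, and Landsberger--Meilijson transfers force the minimizer to be comonotone. For general $S$ you pass to the limit via Arzel\`a--Ascoli, using the $1$-Lipschitz structure.
\end{itemize}

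The citation buys brevity. Your route buys self-containment and a slightly stronger strictness conclusion. Your variance/Hoeffding argument shows that, when $\Y\notin\A^C$, every comonotone $\tilde{\Y}$ with $\tilde{Y}^{(i)}\cvx Y^{(i)}$ for all $i$ is automatically strict for some agent. So no special choice of $\tilde{\Y}$ is needed, and strictness never has to survive the limit; the ``main obstacle'' you flag is in fact bypassed.

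A few points to tighten:
\begin{itemize}
\item Convexity of the feasible set comes from pointwise convexity of $\phi$, via $\E[\phi(\lambda Z_1+(1-\lambda)Z_2)]\le\lambda\E[\phi(Z_1)]+(1-\lambda)\E[\phi(Z_2)]$, not from closure under mixtures of laws.
\item Compactness needs two things. Each constraint $\E[\phi(Z^{(i)})]\le\E[\phi(Y^{(i)})]$ is continuous in finitely many coordinates, which gives closedness. And $Z^{(i)}\cvx Y^{(i)}$ forces $\essinf Y^{(i)}\le Z^{(i)}\le\operatorname{ess\,sup}Y^{(i)}$, which gives boundedness.
\item In the limit step, extend $g_m^{(i)}$ from the finite range of $S_m$ to $[\essinf S,\operatorname{ess\,sup}S]$ so that it stays nondecreasing with $\sum_i g_m^{(i)}=\mathrm{Id}$, for example by piecewise-linear interpolation. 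Choose the partitions so that $S_m\to S$ almost surely; for instance, generating them by $\lfloor 2^m S\rfloor$ gives $|S_m-S|\le 2^{-m}$. Then $|g_m^{(i)}(S_m)-g^{(i)}(S)|\le|S_m-S|+\sup|g_m^{(i)}-g^{(i)}|\to 0$.
\item The strictness step rests on two standard facts that you should state or cite. First, equality in Hoeffding's bound on $\mathrm{Cov}(Y^{(i)},Y^{(j)})$ forces the pair to be comonotone, and pairwise comonotonicity implies comonotonicity of the whole vector. Second, $X\cvx Y$ together with $\E[X^2]=\E[Y^2]$ forces $X\eqd Y$.
\end{itemize}
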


\vspace{0.7cm}

{\small
\bibliographystyle{ecta}
\bibliography{Refs}
}

\vspace{0.6cm}

\end{document}